\newcommand{\spc}{\ensuremath{\mathrm{\textsc{spc}}}}
\newcommand{\dist}{\ensuremath{\mathrm{dist}}}
\newcommand{\diam}{\ensuremath{\mathrm{diam}}}
\DeclareMathOperator{\E}{\textbf{E}}
\title{A ${(1+\eps)}$-Embedding of Low Highway Dimension Graphs into 
Bounded Treewidth Graphs\thanks{All authors of this paper were supported by
  NSERC's Discovery Grant Program}}
\author[1]{Andreas Emil Feldmann\thanks{Supported by ERC Starting Grant 
PARAMTIGHT (No.~280152), and by project CE-ITI (GA\v{C}R no.~P202/12/G061) of 
the Czech Science Foundation.}}
\author[2]{Wai Shing Fung}
\author[2]{Jochen K\"{o}nemann\thanks{Supported by the
    Hausdorff Research Institute for Mathematics and the Research
    Institute for Discrete Mathematics in Bonn, Germany}}
\author[2]{Ian Post}
\affil[1]{KAM, Charles University, Prague CZ, \texttt{feldmann.a.e@gmail.com}}
\affil[2]{Department of Combinatorics and Optimization,
University of Waterloo, CA,
\texttt{\{wsfung,jochen\}@uwaterloo.ca, ian@ianpost.org}}
\date{}
\begin{document}
\renewcommand*{\sectionautorefname}{Section}
\renewcommand*{\subsectionautorefname}{Section}
\renewcommand*{\algorithmautorefname}{Algorithm}

\maketitle
\begin{abstract}
  Graphs with bounded \emph{highway dimension} were introduced by
  Abraham et al.~[SODA~2010] as a model of transportation networks. We
  show that any such graph can be embedded into a distribution over
  bounded treewidth graphs with arbitrarily small distortion. More
  concretely, given a weighted graph $G=(V,E)$ of constant highway
  dimension, we show how to randomly compute a weighted graph
  $H=(V,E')$ that distorts shortest
  path distances of $G$ by at most a $1+\eps$
  factor in expectation, and whose treewidth is polylogarithmic in the
  aspect ratio of $G$. Our probabilistic embedding implies
  quasi-polynomial time approximation schemes for a number of
  optimization problems that naturally arise in transportation
  networks, including Travelling Salesman, Steiner Tree,
and Facility Location.

To construct our embedding for low highway dimension graphs we extend Talwar's 
[STOC~2004] embedding of low doubling dimension metrics into bounded treewidth 
graphs, which generalizes known results for Euclidean metrics. We add several 
non-trivial ingredients to Talwar's techniques, and in particular thoroughly 
analyse the structure of low highway dimension graphs. Thus we demonstrate that 
the geometric toolkit used for Euclidean metrics extends beyond the class of 
low doubling metrics.
\end{abstract}

\section{Introduction}

In \cite{bast2007transit,bast2009ultrafast}, Bast et al.\ studied shortest-path 
computations in road networks and observed that such networks are highly 
structured: there is a sparse set of {\em transit} or {\em access} nodes such 
that when travelling from any point $A$ to a distant location $B$ along a 
shortest path, one will visit at least one of these nodes. The authors presented 
a shortest-path algorithm (called {\em transit node routing}) that capitalizes 
on this structure in road networks and demonstrated experimentally that it 
improves over previously best algorithms by several orders of magnitude. 
Motivated by Bast et~al.'s work (among others), 
\citet{abraham2010highway,abraham2011vc,abraham2010highway2} introduced a formal 
model for transportation networks and defined the notion of {\em highway 
dimension}. Informally speaking, an edge-weighted graph $G=(V,E)$ has small {\em 
highway dimension} if, for any {\em scale} $r \geq 0$ and for all vertices $v 
\in V$, shortest paths of length at least $r$ that are close (in terms of $r$) 
to $v$ are {\em hit} by a small set of {\em hub} vertices. In the following 
formal definition, if $\dist(u,v)$ denotes the shortest-path distance between 
vertices $u$ and~$v$, let $B_r(v)= \{u \in V | \dist(u,v) \le r\}$ be the {\em 
ball} of radius $r$ centred at~$v$. We will also say that a path $P$
{\em lies inside $B_r(v)$} if all its vertices lie inside $B_r(v)$. 

\begin{dfn}\label{dfn:hd}
  The \emph{highway dimension} of a graph $G$ is the smallest
  integer~$k$ such that, for some universal constant \mbox{$c\geq 4$},
  for every $r\in \mathbb{R}^+$, and every ball $B_{cr}(v)$ of radius
  $cr$, there are at most $k$ vertices in~$B_{cr}(v)$ hitting all
  shortest paths of length more than $r$ that lie in $B_{cr}(v)$.
\end{dfn}

Rather than working with the above definition directly, we often consider the 
closely related notion of {\em shortest path covers} (also introduced 
in~\cite{abraham2010highway}). 

\begin{dfn}\label{dfn:spc}
  For a graph $G$ and $r\in\mathbb{R}^+$, a \emph{shortest path cover}
  \mbox{$\spc(r)\subseteq V$} is a set of \emph{hubs} that intersect all
  shortest paths of length in $(r,cr/2]$ of $G$. Such a cover is
  called {\em locally $s$-sparse} for scale~$r$, if no ball of radius
  $cr/2$ contains more than~$s$ vertices from $\spc(r)$. 
\end{dfn}

In particular, a graph with highway dimension $k$ can be seen to have
a \emph{locally $k$-sparse} shortest path cover for any scale
$r$~\cite{abraham2010highway} (using the same constant $c$ in \autoref{dfn:hd} 
and \autoref{dfn:spc}). In both definitions above, \citet{abraham2010highway} 
specifically chose $c=4$ but also note that this choice is, to some extent, 
arbitrary. In the present paper, the flexibility of being able to choose a 
slightly larger value of $c$ is crucial as we will explain shortly. In the 
following, we will let $\lambda=c-4$ and call it the {\em violation} of Abraham 
et al.'s original definition. While we believe that a small positive violation 
does not stray from the intended meaning of highway dimension, we also point out 
that there are graphs whose highway dimension is highly sensitive to the value 
of $c$, as we explain in \autoref{sec:alt-defs}. Hence this is not an entirely 
innocuous change.

\citet{abraham2010highway,abraham2011vc,abraham2010highway2} focused
on the shortest-path problem and formally investigated the performance
of various prominent heuristics as a function of the highway dimension
of the input graph. They also pointed out that, ``conceivably,
better algorithms for other [optimization] problems can be developed
and analysed under the small highway dimension assumption''. This
statement is the starting point of this paper.

We study three prominent NP-hard optimization problems that arise naturally in 
transportation networks: {\em Travelling Salesman}, {\em Steiner Tree} and {\em 
Facility Location} (see \autoref{sec:ptas} for formal definitions). Each of 
these was first studied in the context of transportation networks, and as we 
will show they admit quasi-polynomial time approximation schemes (QPTASs) on 
graphs with bounded highway dimension. Our work thereby provides a 
complexity-theoretic separation between the class of low highway dimension 
graphs and general graphs, in which the aforementioned problems are 
APX-hard~\cite{chlebik2002approx-Steiner,engebretsen2001approx-TSP, 
guha1999approx-facility}.

Technically, we achieve the above results by employing the powerful machinery of 
metric space embeddings~\cite{fakcharoenphol2003tight,bartal1998approximating}.
Specifically, for any $\eps>0$ we probabilistically compute a low-treewidth 
graph $H$ on the same vertex set as the input graph $G$ such that the 
shortest-path distance between any two vertices in $H$ is lower bounded by their 
distance in~$G$, and, in expectation, upper bounded by $1+\eps$ times their 
distance in~$G$. The latter factor by which the distances are bounded from 
above is called the \emph{distortion} or \emph{stretch} of the embedding $H$ 
(see \autoref{sec:talwar} for formal definitions). The following is the main 
result of this paper, where the {\em aspect ratio} is the maximum distance of 
any two vertices divided by the minimum distance between any vertices.

\begin{thm}\label{thm:main}
Let $G$ be a graph with highway dimension $k$ of violation $\lambda>0$, 
and aspect ratio $\alpha$. For any $\eps>0$, there is a
polynomial-time computable probabilistic embedding $H$ of $G$ with 
treewidth $(\log\alpha)^{O\left(\log^2(\frac{k}{\eps\lambda})/\lambda\right)}$ 
and expected distortion~$1+\eps$.
\end{thm}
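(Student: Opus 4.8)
The plan is to follow Talwar's STOC~2004 template for embedding doubling metrics into bounded-treewidth graphs, but with shortest path covers playing the role of metric nets. Fix the geometric sequence of scales $r_i = 2^i$ between the smallest and largest pairwise distance of $G$, so there are $O(\log\alpha)$ of them, and for each $i$ fix a locally $k$-sparse cover $\spc(r_i)$, which exists by \autoref{dfn:spc} and the remark following it. The first task is to build, at each scale $i$, a \emph{padded} partition of $V$ into clusters of diameter $O(r_i)$, nested into a laminar family across scales and randomly offset so that, for a parameter $\eps_i$ to be chosen, each ball $B_{\eps_i r_i}(v)$ lies inside $v$'s scale-$i$ cluster with constant probability. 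Because low highway dimension graphs are \emph{not} doubling in general, this cannot be done by the usual net construction; instead I would exploit the structure of such graphs, decomposing each scale-$i$ cluster into well-separated pieces of much smaller diameter together with a sparse set of hubs that ``controls'' travel between the pieces, and recursing into the pieces. The embedded graph $H$ is then obtained by picking, for each cluster $C$ at scale $i$, a \emph{portal set} $P(C)\subseteq\spc(r_i)$ consisting of the hubs within $O(r_i)$ of $C$; adding an edge from each vertex to every portal of every cluster containing it, and from each portal of a cluster to each portal of its parent cluster; and weighting every edge by the corresponding shortest-path distance in $G$. Since every edge of $H$ is witnessed by a path of $G$, the shortest-path metric of $H$ dominates that of $G$, so only the stretch upper bound and the treewidth need work.

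For the stretch, take $u,v$ with $\dist(u,v)\in(r_i,cr_i/2]$. By the definition of $\spc(r_i)$ the shortest $u$--$v$ path is hit by a hub $h$, and $h$ lies within $cr_i/2 = O(r_i)$ of both endpoints, so $h$ is a portal of both $u$'s and $v$'s scale-$i$ clusters; routing $u\to h\to v$ through $H$ therefore costs $\dist(u,h)+\dist(h,v)=\dist(u,v)$, with the only overhead coming from reaching $h$ along the chains of portal-to-parent-portal edges of the two clusters. One bounds this overhead scale by scale: the hop from a scale-$j$ portal to the nearest scale-$(j+1)$ portal costs $O(r_{j+1})$ in the worst case, but when the padding event holds at scale $j$ it costs only $O(\eps_j r_j)$; since the sub-paths being rerouted shrink geometrically and only $O(\log(k/(\eps\lambda)))$ scales below $i$ are ever entered before a single hub already suffices, summing the contributions and choosing each $\eps_j$ to be $\eps$ divided by a suitable $\mathrm{poly}(\log(k/(\eps\lambda)))$ factor makes the total expected overhead at most $\eps\cdot\dist(u,v)$. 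Sub-paths too short to be covered by any hub at their scale are, by the separation of the decomposition, confined to a single piece and are handled by the same argument one recursion level down. This is essentially Talwar's telescoping stretch analysis, transplanted to the hub setting.

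The treewidth bound is the crux and is where the structure of low highway dimension graphs---and the positive violation $\lambda$---is indispensable. Take the tree decomposition whose tree is the laminar cluster tree and whose bag at a cluster $C$ of scale $i$ consists of $P(C)$ together with the portal sets of all ancestors of $C$ whose scale lies in the $O(\log(k/(\eps\lambda)))$-wide window of scales that can still influence $(1+\eps)$-routing near scale $i$; verifying the tree-decomposition axioms (connectivity of each vertex's bags, every edge of $H$ inside some bag) is routine given the portal construction. What remains is to bound the size of these bags, i.e.\ to show that only boundedly many hubs from boundedly many scales can crowd around a single cluster. Local $k$-sparsity controls the hubs of any one scale near $C$ by a $\mathrm{poly}(k)$ factor, but since highway dimension only constrains balls of radius $cr/2$, the real danger is hubs from a great many distinct scales accumulating; ruling this out is exactly what $\lambda>0$ buys, because with $c=4+\lambda$ the well-separated pieces at successive levels of the decomposition shrink by a factor bounded in terms of $\lambda$, so only $O(1/\lambda)$ levels---and hence only a $\mathrm{poly}(k/(\eps\lambda))^{O(1/\lambda)}$ factor of hubs---can fall inside any one window of scales. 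Propagating this through the $O(\log\alpha)$ scales and the $O(\log(k/(\eps\lambda)))$-wide windows gives, after a careful accounting that I expect to be the most delicate part of the proof, a bag size of $(\log\alpha)^{O(\log^2(k/(\eps\lambda))/\lambda)}$, matching the stated treewidth. The main obstacles, then, are (i) constructing a good padded laminar decomposition for a class of graphs that is genuinely non-doubling, and (ii) the cross-scale hub-counting that turns the local $k$-sparsity and the $\lambda$-violation into the claimed effective dimension; the stretch analysis and the polynomial running time are comparatively routine.
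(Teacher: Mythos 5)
Your proposal correctly identifies where the difficulty lies, but it does not resolve it: the two steps you yourself flag as ``the main obstacles'' --- building a padded laminar decomposition for a genuinely non-doubling class, and the cross-scale hub counting --- are exactly the heart of the matter, and the sketch offers no construction or argument for either. The paper does not build a padded random partition of all of $V$ at every scale at all; instead it builds a deterministic \emph{towns decomposition} from the separation properties of shortest path covers (\autoref{lem:townproperties}, \autoref{lem:laminar-towns}), and the only randomized, Talwar-style object is an embedding of a carefully constructed hub set $X_T$ inside each town. The crucial missing idea in your plan is that the raw hubs do \emph{not} work: hubs at different scales need not nest, and arbitrarily many scales' worth of distinct hubs can crowd a small region, so no window/accounting argument on $\spc(r_i)$ directly gives bags independent of the number of scales. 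The paper's solution is to \emph{align} the hubs --- shift each level-$i$ core hub by at most $\eps r_i$ onto a lower-level hub when possible (\autoref{alg:xt}) --- which yields the local nesting property (\autoref{lem:nested}) and ultimately a doubling dimension bound for the approximate core hubs that is independent of $\alpha$ (\autoref{thm:doubling_dim}); Talwar's theorem is then invoked as a black box on this doubling set rather than re-derived for the whole graph.

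Two specific claims in your sketch are unjustified or wrong as stated. First, you assert that ``local $k$-sparsity controls the hubs of any one scale near $C$ by a $\mathrm{poly}(k)$ factor''; local sparsity (\autoref{dfn:spc}) only bounds hubs \emph{inside} a ball of radius $cr/2$, and the paper explicitly shows the number of hubs of $\spc(r)$ in a ball of radius $cr$ can be unbounded --- bounding hubs in the \emph{vicinity} of a ball requires the minimality of the cover together with \autoref{dfn:hd} and the three-ball argument of \autoref{lem:hub_bound}. Second, you attribute the control of cross-scale accumulation to $\lambda>0$ (``only $O(1/\lambda)$ levels can fall inside any one window''); that is not what $\lambda$ buys. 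In the paper $\lambda>0$ makes the scales grow ($r_i=(c/4)^i$ with $c=4+\lambda$), which is what allows a ball of radius $2r$ to be covered recursively by balls of radius $r$ in $O(1/\lambda)$ covering levels (\autoref{lem:cover_levels}); the accumulation over the unboundedly many scales above a given one is handled only by the alignment and local nesting, not by $\lambda$. Relatedly, your stretch accounting (``only $O(\log(k/(\eps\lambda)))$ scales below $i$ are ever entered'') is not how the telescoping works: the paper, like Talwar, pays at every level of the split-tree and absorbs the sum by taking nets of fineness $\Theta(\eps/(d\log\alpha))$, which is precisely where the $\log\alpha$ in the treewidth comes from; avoiding that dependence as your sketch suggests would require an argument the proposal does not supply.
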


Low highway dimension graphs do not exclude fixed-size minors and therefore do 
not have low tree\-width~\cite{robertson1986minorsII}: the complete graph on 
vertices $\{1,\ldots, n\}$ where each edge $\{i,j\}$ with $i>j$ has 
length~$c^i$, has highway dimension~$1$. The example also shows that the aspect 
ratio of a low-highway dimension graph can be exponential. Using standard 
techniques, we will show that the aspect ratio may be assumed to be polynomial 
for our considered problems when aiming for $1+\eps$ approximations. Existing 
algorithms for bounded treewidth graphs~\cite{ageev1992facility,bateni2011prize} 
then imply QPTASs on graphs with constant highway dimension (see 
\autoref{sec:ptas} for more details).

While Travelling Salesman, Facility Location, and Steiner Tree are
APX-hard in general graphs, improved algorithms are known in 
special cases. For example, polynomial time approximation schemes (PTASs) for 
all three of these problems are known if the input metric is
low-dimensional Euclidean or  
planar~\cite{arora1998TSP,arora1998k-median,mitchell1999guillotine,
bateni2011prize,borradaile2007Steiner,klein2008TSP,ageev-facility}. 
\citet{talwar2004bypassing} also showed that the work in 
\cite{arora1998TSP,arora1998k-median,mitchell1999guillotine} extends (albeit 
with quasi-polynomial running time) to low {\em doubling dimension} metrics. 
\citet{bartal2012traveling} later presented a PTAS for Travelling Salesman 
instances in this class.  

The concept of doubling dimension was studied by \citet{gupta2003bounded}, and 
captures metrics that have {\em bounded growth}.
Formally, a metric space $(X,\dist)$ has doubling dimension~$d$ if $d$
is the smallest number such that every ball
of radius $2r$
is contained in the union of $2^d$ balls of radius $r$.  The
class of constant doubling dimension metrics strictly generalizes that
of Euclidean metrics in constant dimensions. Doubling dimension and
highway dimension (as defined here) are incomparable metric
parameters, however: \citet{abraham2010highway} noted that grids have
doubling dimension~$2$ but highway dimension $\Theta(\sqrt{n})$, while
stars have doubling dimension $\Theta(\log n)$ and highway
dimension~$1$.

We briefly note here that there are alternative definitions of highway 
dimension (see \autoref{sec:alt-defs} for a detailed discussion). In particular, 
the more restrictive definition 
in~\cite{abraham2010highway2} {\em implies} low doubling-dimension, and hence 
\citet{talwar2004bypassing} readily yields a QPTAS for the optimization problems 
we study. Our choice of definition is deliberate, however, and motivated by the 
fact that \autoref{dfn:hd} captures natural transportation networks that the 
more restrictive definition does not. For instance, typical \emph{hub-and-spoke} 
networks used in air traffic models are non-planar and have high doubling 
dimension, since they feature high-degree stars. This immediately renders them 
incompatible with the highway dimension definition 
in~\cite{abraham2010highway2}. Nevertheless they have low highway dimension by 
\autoref{dfn:hd}, since the airports act as hubs, which become sparser with 
growing scales as longer routes tend to be serviced by bigger airports. We also 
prove in \autoref{sec:alt-defs} that our definition is a strict generalization 
of the one in~\cite{abraham2010highway2}: any graph with highway dimension $k$ 
according to~\cite{abraham2010highway2} has highway dimension $O(k^2)$ according 
to \autoref{dfn:hd}, while a corresponding lower bound is not possible in 
general.

Our results not only provide further evidence that the highway 
dimension parameter is useful in characterizing the complexity of graph 
theoretic problems in combinatorial optimization. Importantly, they also show 
that the geometric toolkit 
of~\cite{arora1998TSP,mitchell1999guillotine,arora1998k-median} extends beyond 
the class of low doubling dimension metrics, since the proof of 
\autoref{thm:main} heavily relies on the embedding techniques proposed 
in~\cite{talwar2004bypassing}.

\subsection{Our techniques}

The embedding constructed in the proof of \autoref{thm:main} heavily
relies on previous work by \citet{talwar2004bypassing} but needs many
non-trivial new ideas, a few of which we sketch here. 

Talwar's embedding algorithm first computes a so called {\em
  split-tree decomposition}, a certain laminar family of subsets of
the set $X$ of points underlying the given metric space.  Initially,
this family contains just one element, the set $X$ itself. In each
step, the algorithm picks a non-singleton leaf $C$ of the family,
partitions it into sets $C_1, \ldots, C_q$ of random diameter roughly
half of that of~$C$, and adds these to the family.  The algorithm
continues until all the leaves in the family are singletons. An element $C$
of the computed decomposition is commonly referred to as a {\em cluster}.

Each cluster $C$ of the split-tree decomposition is associated with a set
of \emph{net points}; net points are well spaced in $C$, and each
point in $C$ is close to at least one of these. 
For each cluster, only the edges between the net
points of its child clusters are kept to form the embedding. The
shortest path between two points can then be approximated by a path
that exits each cluster only via the net points. The error introduced
due to the shifting of points on a path to net points, as well as the
total distortion, can be bounded as the sum of errors over all levels
of the split-tree decomposition. In the tree decomposition (see
\autoref{sec:talwar} for formal definitions) of the resulting
embedding, each bag corresponds to a cluster and consists of the net
points of its child clusters. Using the bounded doubling dimension
assumption, the number of child clusters and number of net points per
cluster can be bounded by constants depending on the doubling
dimension and the desired stretch. This in turn bounds the embedding's
treewidth.

\begin{wrapfigure}[13]{R}{0.35\textwidth}
\vspace{-3mm}
\centering{\includegraphics[width=0.34\textwidth]{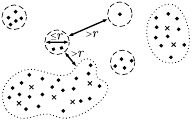}}
\caption{\label{fig:towns} The sprawl (enclosed by dotted lines) contains 
vertices close to hubs (crosses). Each town (dashed circles) has small diameter 
and is far from other vertices.}
\end{wrapfigure}

We want to construct a similar recursive decomposition for metrics 
with low highway dimension, but this turns out to be non-trivial. 
In order to obtain a decomposition we observe that the hubs in the shortest 
path cover induce a natural clustering of the vertices in $G$ for any scale~$r$ 
(see \autoref{fig:towns}). Each vertex $v \in V$ whose distance from any hub is 
larger than $2r$ is said to belong to a \emph{town} that is contained in the 
ball of radius $r$ centered at $v$. All vertices that are not part of a town 
(and hence at distance no more than $2r$ from some hub) are said to be part of 
the \emph{sprawl}. We will show that towns are nicely separated from other towns 
and the sprawl and that the degree of separation is highly sensitive to the 
choice of $c$ in \autoref{dfn:hd}. It turns out that choosing $c=4$ yields a 
separation that is just barely too small.

Based on this clustering, we compute a hierarchical decomposition of
the graph that we call the \emph{towns decomposition}. It is a laminar
family of towns and recursively separates the graph into towns of
decreasing scales, and our embedding is computed recursively on this
decomposition. The towns decomposition is analogous to the quad-tree
decomposition in PTASs for Euclidean
metrics~\cite{Arora96polynomialtime, arora1998k-median, arora1998TSP,
  arora2003survey} or the split-tree decomposition for low doubling
dimension metrics~\cite{talwar2004bypassing}, though the particulars
differ greatly. At a high level, towns look similar to clusters in
Talwar's split-tree decomposition. However, while in Talwar's
split-tree decomposition, clusters have a relatively small number of
child clusters, towns can contain a very large number of child towns. As it 
turns out, however, these child towns are connected through hubs of higher 
scales, which can be chosen in a way such that they have bounded doubling 
dimension. We can therefore apply Talwar's decomposition technique to these 
connecting hubs. We then recursively construct a low treewidth embedding for 
each child town and attach these embeddings to the embedding of the connecting 
hubs. The details are described in \autoref{sec:construction}.

The most intricate part of our result is to prove low doubling dimension of 
these ``connecting hubs'', which are chosen as follows. We prove that to 
preserve all distances within a town $T$ it suffices to connect embeddings of 
$T$'s child towns in the towns decomposition via a carefully chosen set of 
so-called \emph{core hubs} within~$T$. To prove low doubling dimension, the 
general idea is to rely on the local sparsity of the shortest path covers (see 
\autoref{fig:core-hubs}): by definition, the core hubs lie in the sprawls of 
various scales, and for scale $r$ the sprawl can be covered by balls of radius 
$2r$ around the hubs of the shortest path cover. In a low highway dimension 
graph, any ball~$B$ of radius $cr/2$ contains only a small number of hubs. 
Hence, to bound the doubling dimension, we attempt to use these hubs as centers 
of balls of smaller radius to cover the core hubs. These balls have radius 
$2r<cr/2$, and hence this scheme can be applied recursively in order to cover 
the core hubs in~$B$ with balls of half the radius. Several issues arise with 
this approach though. For instance, part of the sprawl for scale $r$ in $B$ 
might be covered by balls centered at hubs outside of~$B$. However a key insight 
of our work is that in fact the number of hubs in the \emph{vicinity} of a ball 
is also bounded when using \autoref{dfn:hd} for the highway dimension (see 
\autoref{lem:hub_bound}).

\begin{wrapfigure}[13]{R}{0.35\textwidth}
\vspace{-8mm}
\centering{\includegraphics[width=0.34\textwidth]{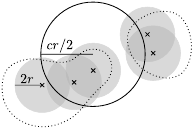}}
\caption{\label{fig:core-hubs} The sprawl (enclosed by dotted lines) 
intersecting a ball $B$ of radius~$cr/2$ (black) can be recursively covered by 
balls of radius $2r$ (grey) centered at hubs on scale $r$ (crosses). For this 
the number of hubs in the vicinity of~$B$ needs to be bounded.}
\end{wrapfigure}

Another obstacle when trying to bound the doubling dimension of the core 
hubs is that, unlike the nets in Talwar's split-tree decomposition, the hubs do 
not form a hierarchy, i.e., a hub at some scale may not be a hub at a lower 
scale. Nevertheless, we show that core hubs at different scales can be {\em 
aligned}: they can be shifted slightly in order to obtain a nested structure. We 
are able to show that this alignment process does not affect the target stretch 
of our embedding and, most importantly, ensures that the resulting set of {\em 
approximate core hubs} within $T$ has small doubling dimension. We may thus 
apply Talwar's~\cite{talwar2004bypassing} embedding of low doubling dimension 
metrics into bounded treewidth graphs to the approximate core hubs.

\subsection{Related work}

The highway dimension concept was introduced by \citet{abraham2010highway} who showed 
that the efficiency of certain shortest-path heuristics can be explained with 
this parameter. Follow-up papers~\cite{abraham2011vc,abraham2010highway2} 
introduced alternative definitions and showed that it is possible to 
approximate the highway dimension~$k$ within an $O(\log k)$ factor assuming 
that shortest paths are unique. For the $p$-Center problem the embedding 
techniques given in this paper are not applicable since the objective function 
is non-linear. Instead, in~\cite{feldmann15} a parameterized approximation for 
this problem on low highway dimension graphs is presented. 
\citet{bauer2013search} show that for any graph $G$ there exist edge 
lengths such that the highway dimension is $\Omega(\mathrm{pw}(G)/\log n)$, 
where $\mathrm{pw}(G)$ is the pathwidth of~$G$. Also \citet{kosowski2017beyond} 
consider the highway dimension and compare it to the related skeleton dimension.

In the seminal work of \citet{bartal1998approximating,bartal1996probabilistic} 
it was shown that any graph can be embedded into a distribution over trees 
with an expected polylogarithmic stretch. The stretch bound was later improved 
to $O(\log n)$ by \citet{fakcharoenphol2003tight}, which is the best possible. 
These techniques led to the embedding of low doubling dimension metrics into 
bounded treewidth graphs by~\citet{talwar2004bypassing}, which forms a major 
ingredient in our result. Another generalization is that of 
\citet{hubert2012global}, who showed how to embed a metric of low correlation 
dimension into a metric of bounded treewidth. It it worth noting that the 
highway dimension cannot be bounded in terms of the correlation dimension (due 
to the complete graph example described above). In terms of lower bounds, 
there are graphs~\cite{carroll2004lower,chakrabarti2008local} with 
treewidth~$t$, which cannot be embedded into distributions over graphs excluding 
minors of size~$t-1$, without incurring an expected stretch of $\Omega(\log n)$. 
The authors also show that embeddings of planar graphs into bounded treewidth 
graphs must incur logarithmic distortions.

\section{Embeddings for low doubling dimension metrics}
\label{sec:talwar}

Next we formally define the treewidth and summarize the properties of 
Talwar's~\cite{talwar2004bypassing} embedding for low doubling dimension metrics 
that we require for our construction. More details will be given in 
\autoref{sec:stretch}, which are needed for the analysis of the 
stretch of our embedding.

Let $G = (V,E)$ be a graph. For $u, v \in V$ we denote the length of the 
shortest path between $u$ and~$v$  by $\dist(u,v)$ and the distance between two 
sets $S, T \subset V$ by $\dist(S,T) = \min_{u \in S, v \in T} \dist(u,v)$. If 
the metric used for distances is ambiguous we specify the graph in the 
subscript, such as $\dist_G(u,v)$ or $\dist_H(u,v)$.  The diameter 
$\diam(\cdot)$ of a graph or set of vertices is the maximum distance between any 
two vertices. The \emph{treewidth} of a graph measures how close the graph is 
from being a tree. A {\em tree decomposition} of $G$ consists of a tree $T$ 
whose vertices are labelled by subsets of $V$ that are commonly referred to as 
{\em bags}. We will often identify the bags with the vertices of the tree and 
talk about a ``tree of bags''.
Bags satisfy certain structural properties as is formalized in the following 
definition. 

\begin{dfn}\label{dfn:treewidth}
A \emph{tree decomposition} $D$ of a graph $G=(V,E)$ is a
tree $T$
each of whose vertices $v$ are labelled by a bag $b_v \subseteq V$ of
vertices of $G$. We require the following properties:
 \begin{enumerate}[(a)]
\item\label{item:tw-union} $\bigcup_{v \in V(T)} b_v = V$, 
\item\label{item:tw-edges} for every edge $\{u,w\}\in E$ there is a
  vertex $v \in V(T)$ such that $b_v$ contains both $u$ and $w$, and
\item\label{item:tw-vertices} for every $v\in V$ the set $\{u \in V(T)
  \,:\, v \in b_u\}$ induces a connected subtree of $T$.
 \end{enumerate}
The \emph{width} of the tree decomposition is $\max\{|b_v|-1\,:\, v
\in V(T)\}$. The \emph{treewidth} of a graph $G$ is the minimum width 
among all tree decompositions for~$G$.
\end{dfn}

To construct our embedding we will mainly focus on the shortest path
metric of the  graph~$G$. We let the distance function of
every considered metric be the function $\dist(\cdot,\cdot)$
of the underlying graph. Though the treewidth is a property of a graph's
edge set, whereas doubling dimension is a property of the metric
it defines, \citet{talwar2004bypassing} shows that low doubling
dimension graphs can be approximated to within $1+\eps$ by bounded
treewidth graphs. Formally this means the following.

\begin{dfn}\label{dfn:embedding}
Let $(X,\dist)$ be a metric, and $\mc{D}$ be a distribution over 
metrics~$(X,\dist')$. If for all $x,y \in X$, $\dist(x,y) \le \dist'(x,y)$ for 
each $\dist' \in \mc{D}$, and $\E_{\dist' \in \mc{D}}[\dist'(x,y)] \le a\cdot 
\dist(x,y)$, then $\mc{D}$ is an \emph{embedding} with (expected) 
\emph{stretch} or \emph{distortion} $a$. If every $\dist'\in\mc{D}$ is the 
shortest path metric of some graph class $\mc{G}$, then $\mc{D}$ is a 
\emph{(probabilistic) embedding into} $\mc{G}$.
\end{dfn}

The main result of \citet{talwar2004bypassing} that we use for our embedding of 
low highway dimension graphs into bounded treewidth graphs, is the following.

\begin{thm}[\cite{talwar2004bypassing}]
\label{thm:Talwar-light}
Let $(X,\dist)$ be a metric with doubling dimension $d$ and aspect ratio 
$\alpha$. For any $\eps>0$, there is a polynomial-time computable probabilistic 
embedding $H$ of $(X,\dist)$ with treewidth $(d\log(\alpha)/\eps)^{O(d)}$ and 
expected distortion~$1+\eps$.
\end{thm}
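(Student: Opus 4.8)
The plan is to build the embedding recursively on Talwar's split-tree decomposition, and to account for the stretch and the treewidth level by level. First I would fix a scale parameter: assume after rescaling that the minimum distance is $1$, so the maximum distance is $\alpha$, and let $L=\lceil\log\alpha\rceil$. I would construct a randomly shifted hierarchical decomposition of $X$ into \emph{clusters}: at the top level the single cluster is all of $X$, and a cluster at level $i$ (of diameter roughly $2^i$) is partitioned into child clusters of diameter roughly $2^{i-1}$. To get the decomposition I would order the points, draw one random radius $\beta\in[1/2,1]$ (or a random offset) shared across the level, and carve out balls of radius $\beta 2^{i-1}$ around points in this order, breaking ties by the ordering; this is the standard Calinescu--Karloff--Rabani / Fakcharoenphol--Rabinovich--Talwar style padded decomposition. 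The key geometric fact, used repeatedly, is that by the doubling property a ball of radius $2^i$ contains at most $2^{O(d\cdot t)}$ points that are pairwise at distance $\ge 2^{i-t}$; in particular each cluster has at most $2^{O(d)}$ child clusters, and each cluster of diameter $\Delta$ admits a \emph{net} $N$ at scale $\eps\Delta/(cL)$ (for a suitable constant $c$) of size $(L/\eps)^{O(d)}$.

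Next I would define the embedding graph $H$. For each cluster $C$ with net $N_C$, and for each child cluster $C'$ with net $N_{C'}$, I would add to $H$ all edges between $N_C$ and $N_{C'}$ with length equal to the $G$-distance of their endpoints (plus, at the leaves, edges from each net point to the single vertex of its leaf cluster). Since every edge of $H$ joins two net points of a parent/child pair, the natural tree decomposition has one bag per cluster $C$ containing $N_C$ together with the union of the nets $N_{C'}$ of its children; its width is (number of children) $\times$ (net size) $= 2^{O(d)}\cdot(L/\eps)^{O(d)} = (d\log\alpha/\eps)^{O(d)}$, as claimed. The connectivity property (c) holds because a net point appears only in the bag of its own cluster and in the bags of its parent's cluster — here I would need the small technical point that nets can be chosen consistently so that a net point of $C$ that also lies in a relevant role for the parent is handled cleanly; more simply, take the nets to be disjoint copies so each net point belongs to exactly two bags. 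Lower bound on distances ($\dist_H\ge\dist_G$) is immediate since every $H$-edge has its $G$-length and $\dist_G$ satisfies the triangle inequality.

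For the stretch, fix $x,y$ and let $i^*$ be the lowest level at which $x$ and $y$ lie in different clusters; note $2^{i^*}\gtrsim\dist(x,y)$. I would route an $x$--$y$ walk in $H$ top-down: at each level $i\le i^*$ of the decomposition that separates the current ``representatives'' of $x$ and $y$, snap each representative to the nearest net point of its level-$i$ cluster, incurring additive error $\le 2\cdot\eps 2^i/(cL)$ per snap, then use the direct $H$-edge between the two net points at the top level $i^*$. Summing the snapping errors over the $O(L)$ levels from $i^*$ down to the leaves on both ends gives total additive error $O(\eps 2^{i^*}/c)=O(\eps\,\dist(x,y)/c)$, which is $\le\eps\,\dist(x,y)$ for $c$ a large enough absolute constant; so $\dist_H(x,y)\le(1+\eps)\dist(x,y)$ \emph{deterministically}, and hence certainly in expectation. (One may instead phrase the bound in expectation using the probability that level $i$ separates $x$ from $y$, which by the padded-decomposition analysis is $O(d\,\dist(x,y)/2^i)$, to shave the $\log\alpha$ in intermediate estimates — but for the stated treewidth bound the cruder deterministic argument suffices.)

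The main obstacle I expect is the bookkeeping that makes the tree decomposition legitimate: one must choose the nets so that (i) each net point lies in only two bags (its cluster and its parent's), so property (c) holds, and (ii) the net of a cluster is fine enough to keep the per-level snapping error at $\eps\Delta/(cL)$ while being coarse enough that the doubling bound keeps its size $(L/\eps)^{O(d)}$. Reconciling these — essentially, handling the fact that a net point of a parent cluster need not be a net point of any child, so routing across levels may require an extra local snap — is where Talwar's argument needs care, and is exactly the point the present paper flags as delicate for hubs (which, unlike nets here, are \emph{not} hierarchical). Everything else is a routine combination of the doubling inequality with a randomly shifted partition.
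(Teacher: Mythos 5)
This statement is not proved in the paper at all: it is quoted from Talwar and restated in more detail as \autoref{thm:Talwar}, with the paper only re-deriving the stretch computation for its own modified embedding in \autoref{lem:xembedding}. Your construction (randomly shifted hierarchical split-tree, nets as portals, complete graphs on bags, one bag per cluster) is exactly Talwar's route as the paper recounts it, and your treewidth bookkeeping is essentially right once the net scale is $\Theta(\eps 2^{\bar i}/(d\log\alpha))$ (the factor $d$ matters) rather than $\eps 2^{\bar i}/(c\log\alpha)$.

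However, your stretch argument has a genuine gap. You claim the $(1+\eps)$ bound holds \emph{deterministically} because ``$2^{i^*}\gtrsim\dist(x,y)$'' at the level $i^*$ where $x$ and $y$ are first joined/separated. That inequality is only a \emph{lower} bound on the crossing level, whereas the additive snapping cost of your route is governed by an \emph{upper} bound on it: the route pays $\Theta(\beta 2^{\bar j})$ where $\bar j$ is the lowest level at which $x$ and $y$ lie in a common cluster, and for a randomly shifted decomposition nothing prevents the partition from cutting between two very close points at every level below the root, so $\bar j$ can be as large as $\log_2\alpha$ even when $\dist(x,y)$ is the minimum distance. In that event the stretch of your route is $1+\Omega\bigl(\eps\,\alpha/(\dist(x,y)\log\alpha)\bigr)$, not $1+\eps$; this is precisely why the embedding is probabilistic rather than deterministic. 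The correct argument is the one you relegate to a parenthesis as optional: use the separation-probability bound \eqref{item:cut-prob} of \autoref{lem:splittree}, $\Pr[x,y \text{ separated at level } \bar i]=O(d\cdot\dist(x,y)/2^{\bar i})$, sum $\beta 2^{\bar i}$ against these probabilities over the $O(\log\alpha)$ levels to get expected extra cost $O(\beta d\log\alpha)\dist(x,y)$, and then set $\beta=\Theta(\eps/(d\log\alpha))$ --- exactly the computation the paper repeats in \autoref{lem:xembedding}. A smaller repair: taking ``disjoint copies'' of net points changes the point set, so $H$ would no longer be an embedding of $(X,\dist)$ in the sense of \autoref{dfn:embedding}; Talwar instead chooses the nets hierarchically (property \eqref{item:net-hierarchy} of \autoref{thm:Talwar}), which is also what the connectivity condition of \autoref{dfn:treewidth} and the later arguments of this paper (e.g.\ \autoref{lem:nearby}) rely on.
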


As described in the introduction,
Talwar's embedding employs a randomized \emph{split-tree} decomposition, which 
is a hierarchical decomposition of the vertices $X$ of a metric into 
\emph{clusters} of smaller and smaller diameter. A cluster is a subset of~$X$, 
which is partitioned into clusters of at most half the diameter on the next 
lower level, so that the highest cluster is $X$ itself and the lowest ones are 
individual vertices. The geometrically decreasing diameters of the levels are 
set according to a random variable. Each level of this hierarchy is associated 
with an index. Our construction of the embedding for low highway dimension 
graphs also has levels associated with indices, but these have different growth 
rates. To avoid confusion we will denote the levels of Talwar's split-tree 
decomposition with indices $\bar i,\bar j$, etc., and ours with indices $i,j$ 
etc. 

The tree decomposition constructed from the split-tree has a bag for each 
cluster. The tree on the bags exactly corresponds to the split-tree. Each bag 
contains a coarse set of points of the cluster. More concretely it contains a 
\emph{net}, defined as follows.

\begin{dfn}
\label{dfn:netcover}
For a metric $(X,\dist)$, a subset $Y\subseteq X$ is called a 
\emph{$\delta$-cover} if for every $u\in X$ there is a $v\in Y$ such that 
$\dist(u,v)\leq \delta$. A \emph{$\delta$-net} is a $\delta$-cover with the 
additional property that $\dist(u,v)>\delta$ for all vertices $u,v\in Y$. 
\end{dfn}
For a 
cluster $C$ on level $\bar i$ the corresponding bag contains a 
$\Theta(\eps 2^{\bar{i}}/(d\log\alpha))$-net of $C$.
For every bag $b$ the graph embedding contains a complete graph on the nodes in 
$b$ with edge lengths corresponding to distances in the metric. The net in each 
bag serves as a set of \emph{portals}, through which connections leaving the 
cluster are routed, analogous to those in~\cite{arora2003survey}.

\section{Properties of low highway dimension graphs}
\label{sec:properties}

We assume w.l.o.g.\ that every shortest path in our input graph is
unique by slightly perturbing edge lengths. This allows us to
compute locally \mbox{$O(k\log k)$-sparse} shortest path covers in
polynomial time~\cite{abraham2011vc} (or locally $k$-sparse covers in
time $n^{O(k)}$). We show in \autoref{sec:alt-defs} that computing the
highway dimension is NP-hard even for graphs with unit edge lengths,
so in general approximations are needed.

An important observation is that the vertices of low highway dimension 
graphs are grouped together in all regions that are far from the hubs. 
This gives rise to our main observation on the structure of low highway 
dimension graphs, as summarized in the following definition: for any scale the 
vertices are partitioned into one \emph{sprawl} and several \emph{towns} with 
large separations in between.

\begin{dfn}\label{dfn:towns}
Given a shortest path cover $\spc(r)$ for scale $r$, for any vertex $v\in V$ 
such that $\dist(v,\spc(r)) > 2r$, we call the set $T=\{u \in V | \dist(u,v) 
\le r\}$ a \emph{town} for scale $r$. The \emph{sprawl} for scale $r$ is the set 
of all vertices that are not in towns.
\end{dfn}

Note that the vertices of the sprawl are at most $2r$ away from a hub, but 
there can be vertices in towns that are closer than $2r$ to some hub, as long as 
the town has some other vertex that is farther away. Note also that the towns 
are defined with respect to a shortest path cover $\spc(r)$, and using two 
distinct shortest path covers will not always result in the same set of towns. 
We will fix an inclusion-wise minimal shortest path cover $\spc(r)$ for any scale $r$ and only 
consider towns with respect to this cover. We summarize the basic properties of 
towns below.

\begin{lem}
\label{lem:townproperties}
Let $T$ be a town of scale $r$. Then $\diam(T) \le r$ and $\dist(T,V\setminus T) 
> r$. For any vertex $v$ of the sprawl of scale $r$, $\dist(v,\spc(r))\leq 2r$.
\end{lem}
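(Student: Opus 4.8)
The plan is to prove each of the three claims in \autoref{lem:townproperties} separately, all as fairly direct consequences of \autoref{dfn:towns} together with basic properties of the shortest path metric. Fix a town $T$ of scale $r$, so by definition there is a vertex $v$ with $\dist(v,\spc(r))>2r$ and $T=B_r(v)=\{u\in V\mid \dist(u,v)\le r\}$.

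For the diameter bound, any two vertices $u,w\in T$ satisfy $\dist(u,w)\le\dist(u,v)+\dist(v,w)\le r+r=2r$ by the triangle inequality, which already gives $\diam(T)\le 2r$; to get the claimed $\diam(T)\le r$ I would instead argue via the separation property, or note that the intended reading is that $T$ sits inside a ball of radius $r$ and one should use the center more cleverly. Actually the cleanest route is: first establish the separation $\dist(T,V\setminus T)>r$, and then observe that if $u,w\in T$ the shortest path between them cannot leave $T$ without a vertex of $V\setminus T$ being within distance $0$ of $T$, a contradiction; hence the shortest $u$--$w$ path lies inside $T\subseteq B_r(v)$, and combined with local structure one tightens the bound. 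I expect the separation claim to be the main obstacle and the real content of the lemma, so I would spend most of the effort there.

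For the separation, take $u\in T$ and $w\in V\setminus T$, so $\dist(v,u)\le r$ and $\dist(v,w)>r$. Suppose for contradiction that $\dist(u,w)\le r$. Consider the (unique) shortest path $P$ from $v$ to $w$. I would argue $P$ has length $\dist(v,w)$ lying in a range that forces it to be covered by a hub of $\spc(r)$: since $\dist(v,w)\le \dist(v,u)+\dist(u,w)\le 2r$ and $\dist(v,w)>r$, the subpath of $P$ of length in $(r,cr/2]$ (note $2r\le cr/2$ as $c\ge 4$) must contain a hub $h\in\spc(r)$ by \autoref{dfn:spc}. But every vertex of $P$ is within distance $\dist(v,w)\le 2r$ of $v$, so $\dist(v,h)\le 2r$, contradicting $\dist(v,\spc(r))>2r$. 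Hence $\dist(u,w)>r$, proving $\dist(T,V\setminus T)>r$. (A small subtlety: if $\dist(v,w)\le r$ exactly, then $w$ would be in $T$; and the path $P$ itself realizes a shortest path of the right length range inside the relevant ball, so the cover definition applies — I would make sure the length interval $(r,cr/2]$ genuinely captures $\dist(v,w)$, handling the boundary case $\dist(v,w)=r$ separately since then $w\in T$ anyway.)

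Given the separation $\dist(T,V\setminus T)>r$, the diameter bound follows: for $u,w\in T$, the unique shortest path between them must stay within $T$ (any vertex outside $T$ on the path would be at distance $0<r$ from $T$), hence within $B_r(v)$, and in particular through $v$'s neighborhood; combined with the fact that all such vertices are within $r$ of $v$ this yields $\diam(T)\le r$ by reusing the triangle inequality more carefully, or — if the intended statement is simply that $T$ is contained in a radius-$r$ ball and the ``diameter'' is measured accordingly — it is immediate from $T=B_r(v)$. Finally, the sprawl claim is the contrapositive of \autoref{dfn:towns}: if $v$ is in the sprawl of scale $r$ then by definition $v$ lies in no town, so in particular $v$ is not the center of the town $B_r(v)$, which by \autoref{dfn:towns} is possible only if $\dist(v,\spc(r))\le 2r$. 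This last part is essentially definitional and requires no work; the entire weight of the lemma is the separation argument above, where choosing the right shortest path to feed into the shortest-path-cover definition, and checking the length interval lands in $(r,cr/2]$, is the one place care is needed.
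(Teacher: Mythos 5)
Your separation argument and the sprawl claim are fine, and they match the paper's route: the paper's \autoref{lem:towns} shows exactly what you show, namely that no vertex can be at distance in $(r,2r]$ from the town centre $v$, because such a vertex would force a hub of $\spc(r)$ onto a shortest path of length in $(r,cr/2]$ emanating from $v$, contradicting $\dist(v,\spc(r))>2r$; the sprawl statement is indeed just the contrapositive of \autoref{dfn:towns}.

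The gap is the diameter bound $\diam(T)\le r$, which you never actually prove. The triangle inequality only gives $2r$; the observation that shortest paths between $u,w\in T$ stay inside $T$ does not tighten this (all it says is that the path stays in $B_r(v)$, which is consistent with length $2r$); and the fallback reading ``$T$ is contained in a radius-$r$ ball, so the diameter is $r$'' is not available, since the paper defines $\diam$ as the maximum pairwise distance, and the stronger bound is genuinely used later (e.g.\ in \autoref{lem:laminar-towns} and \autoref{lem:netstretch}, where $\diam(T)\le r$ is played off against the separation $>r$; a bound of $2r$ would break those arguments). The missing step is one more application of the same hub-cover trick, this time to a pair \emph{inside} the town: if $u,w\in T$ had $\dist(u,w)>r$, then since $\dist(u,w)\le\dist(u,v)+\dist(v,w)\le 2r\le cr/2$, \autoref{dfn:spc} puts a hub $h\in\spc(r)$ on the shortest $u$--$w$ path; $h$ is within $\dist(u,w)/2\le r$ of the nearer of $u,w$, hence $\dist(v,h)\le r+r=2r$, contradicting $\dist(v,\spc(r))>2r$. (Equivalently, note that every $u\in T$ satisfies $\dist(u,\spc(r))>r$ and argue as in the paper's \autoref{lem:clusters}, which is the ``local structure'' your sketch gestures at but never supplies.) With that paragraph added, your proof would be complete and essentially the paper's.
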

\begin{proof}
The bound on the distance from any vertex of the sprawl to the nearest hub 
follows immediately from the definition of the towns. To prove that the diameter 
of a town $T$ is at most $r$, assume there are vertices $u,w\in T$ such that 
$\dist(u,w)>r$. By \autoref{dfn:towns} we know there is a vertex $v\in T$ such 
that $\dist(u,v)\leq r$ and $\dist(w,v)\leq r$, so that $\dist(u,v)\leq 2r$. 
This means that the length of the shortest path between $u$ and $w$ lies in the 
interval $(r,cr/2]$, as by \autoref{dfn:hd} the constant $c$ defining $\spc(r)$ 
is at least~$4$. In particular, there is a hub $h\in\spc(r)$ that lies on this 
shortest path. Assume w.l.o.g.\ that $h$ is closer to $w$ than to $u$, so that 
$\dist(h,w)\leq r$. But then, $\dist(h,v)\leq\dist(h,w)+\dist(w,v)\leq 2r$, 
which contradicts $\dist(v,\spc(r))> 2r$.

Similarly, we can prove that the distance of any vertex $u$ of a town $T$ to any 
vertex $w$ outside of $T$ is more than $r$. Consider again the vertex $v\in T$ 
given by \autoref{dfn:towns}, for which $\dist(u,v)\leq r$, $\dist(w,v)>r$, and 
$\dist(v,\spc(r))>2r$. If we assume that $\dist(w,u)\leq r$, then from the first 
distance bound for $u$ and~$v$ we get $\dist(w,v)\leq 2r$. Together with 
$\dist(w,v)>r$, this means that the length of the shortest path $P$ between $w$ 
and $v$ lies in the interval $(r,cr/2]$, as by \autoref{dfn:hd} $c\geq 4$. Hence 
there is a hub $h\in\spc(r)$ on $P$ that is at most as far from $v$ as $w$ is, 
i.e.\ $\dist(v,h)\leq 2r$. However this contradicts $\dist(v,\spc(r))>2r$.
\end{proof}

We will exploit the structure given by \autoref{lem:townproperties} for growing 
scales to construct our embedding. 
More concretely, we will consider scales $r_i=(c/4)^i$ for values 
$i\in\mathbb{N}_0$ and call~$i$ the \emph{level} of the sprawl, towns, and 
shortest path cover of scale $r_i$. We choose our scales in this way since 
$2r_i=cr_{i-1}/2$. As a consequence, a ball of radius $2r_i$ around a hub of 
level $i$ that covers part of the sprawl contains at most $s$ hubs of the next 
lower level $i-1$ if the shortest path covers are locally $s$-sparse. We will
exploit this in our analysis in order to bound the treewidth of our embedding.

Note that the scales are monotonically non-increasing if we choose
$c \leq 4$. As we shall see, positive scale-growth is essential,
however, for our algorithm as it allows us to argue that any two
disjoint towns are sufficiently separated.

Throughout this paper we will assume that the shortest path covers are 
inclusion-wise minimal. By scaling we can assume that the shortest distance 
between any two vertices is slightly more than~$c/2$. Hence 
$\spc(r_0)=\emptyset$ since there are no paths of length in~$(r_0,cr_0/2]$. In 
particular this means that on level~$0$ there is no sprawl, and each vertex 
forms a singleton town. The highest level we consider is 
\mbox{$m=\lceil\log_{c/4}\diam(G)\rceil$}. At this level $\spc(r_m)=\emptyset$ 
and hence the whole vertex set $V$ of the graph is a town.

We show next that towns of different levels form a laminar family $\mc{T}$. Due 
to this laminar structure of towns we will use tree terminology such as 
\emph{parents, children, siblings, ancestors}, and \emph{descendants} of towns 
in~$\mc{T}$. Note that these family relations are with respect to the 
laminarity of $\mc{T}$ and not the levels on which towns exist. The \emph{root} 
of the laminar family is the highest level town $V$.

\begin{lem}\label{lem:laminar-towns}
Given a graph $G$, the set
$
\mc{T}:=\{T\subseteq V\mid T\text{ is a town on level } i\in \mathbb{N}_0\}
$
forms a laminar family. Furthermore, any town $T\in\mc{T}$ on level $i$ either 
has 0 or at least 2 child towns, and in the latter case these are towns 
on levels below~$i$.
\end{lem}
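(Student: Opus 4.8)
The plan is to prove the two assertions of \autoref{lem:laminar-towns} separately: first that $\mc{T}$ is laminar, and then the structural claim about the number of children. For laminarity I would take two towns $T, T'$ and show they are either disjoint or nested. Say $T$ is a town on level $i$ (so $\diam(T)\le r_i$ and $\dist(T,V\setminus T)>r_i$ by \autoref{lem:townproperties}), and $T'$ is a town on level $i'\le i$. If $T\cap T'\ne\emptyset$, pick $x$ in the intersection. Every vertex of $T'$ is within $\diam(T')\le r_{i'}\le r_i$ of $x\in T$, so it lies within distance $r_i$ of $T$; since $\dist(T,V\setminus T)>r_i$, every such vertex must actually lie in $T$, giving $T'\subseteq T$. (The case $i=i'$ with $T\ne T'$ is then immediate: two distinct towns of the same level would have to be nested and of equal diameter-bounded size, but more directly, by \autoref{lem:clusters}-type separation two distinct level-$i$ towns are at distance $>r_i$ apart, hence disjoint.) This establishes that $\mc{T}$ is a laminar family.

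For the second part I would fix a town $T$ on level $i$ with at least one child, and show it has at least two children and that every child sits strictly below level $i$. The children of $T$ are the maximal towns in $\mc{T}$ properly contained in $T$. First, no child can be a level-$i$ town: by laminarity a level-$i$ town contained in $T$ would have to equal $T$ (distinct level-$i$ towns are disjoint, as above), so every child is on a level $i'<i$, i.e.\ strictly below $i$. For the "$0$ or $\ge 2$" dichotomy, suppose for contradiction $T$ has exactly one child $T'$, on some level $j<i$. Then every vertex of $T\setminus T'$ lies in the sprawl of level $j$ (it is not in any level-$j$ town, since the unique level-$j$ town inside $T$ would be $T'$ by laminarity, and a level-$j$ town meeting $T\setminus T'$ but not contained in $T'$ would contradict laminarity/maximality) — but then I need to derive a contradiction. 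The cleanest route: take the smallest level $j<i$ at which $T$ is split, i.e.\ the level of the child $T'$. Every vertex of $T$ not in $T'$ is within distance $\diam(T)\le r_i$ of the vertex $v$ defining $T'$, and at levels below $j$ the town $T'$ itself decomposes further, but $T$ as a whole is a town on level $i$, hence a singleton on level $0$ would force... Here the key leverage is that $V$ on level $m$ is a town, level-$0$ towns are singletons, and $|T|\ge 2$ (if $T$ has a child then $T\supsetneq T'\ne\emptyset$, and actually $T$ contains at least two vertices), so the laminar chain from $T$ down to singletons must branch at least once inside $T$; if it branched only at $T'$, then $T\setminus T'$ would be a single "town-or-sprawl-piece" that never separates, contradicting that every vertex is eventually a singleton town while remaining $\dist$-separated from $T'$.

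I expect the main obstacle to be precisely this last point: ruling out a town with exactly one child. The subtlety is that $T\setminus T'$ need not be a town at any intermediate level — it may be (part of) the sprawl — so one cannot just recurse on it as a town. The argument I would make rigorous is: since $T'\subsetneq T$ and $T$ is a town on level $i$ with $\diam(T)\le r_i$, the set $T\setminus T'$ is nonempty, and every $u\in T\setminus T'$ satisfies $\dist(u,T')\le r_i$ by the diameter bound; yet $\dist(T',V\setminus T')>r_j$ by \autoref{lem:townproperties} applied to $T'$. For levels $\ell$ with $j\le \ell$ small enough that $r_\ell<\dist(u,T')$, the vertex $u$ must lie in a town on level $\ell$ (being far from $T'$ and, after rescaling, far from $\spc(r_\ell)$ — here one uses that $T$'s separation $\dist(T,V\setminus T)>r_i$ keeps everything inside $T$ while the internal distances to hubs grow). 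That town is contained in $T$ by laminarity, not contained in $T'$, hence a second child of $T$ — contradiction. So $T$ has at least two children. I would also double-check the base/top cases ($T = V$ on level $m$, singleton towns on level $0$) to confirm the dichotomy holds trivially there, and that "children are on levels below $i$" is consistent with the scale growth $r_i = (c/4)^i$ being strictly increasing since $\lambda>0$.
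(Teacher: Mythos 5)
Your laminarity argument is fine and is essentially the paper's proof in contrapositive form: the paper assumes two crossing towns $T_1$ (level $i$) and $T_2$ (level $j\geq i$) and contradicts $\diam(T_1)\leq r_i$ against $\dist(T_2,V\setminus T_2)>r_j\geq r_i$, while you show directly that intersection forces containment; these are the same estimate. One small omission in your second part: you only rule out a child being a town on level exactly $i$, but the claim is about all levels $j\geq i$ (recall a single vertex set can be a town on several levels); this is easily repaired either by your own containment argument (a level-$j$ town with $j\geq i$ meeting $T$ must contain $T$, since $\diam(T)\leq r_i\leq r_j<\dist(T',V\setminus T')$) or, as the paper does, by playing $\dist(T',V\setminus T')>r_j\geq r_i\geq\diam(T)$ against the existence of a second child inside $T$.

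The genuine gap is in the step you yourself flag as the crux, the ``$0$ or $\geq 2$ children'' dichotomy. The argument you propose to make rigorous claims that for a suitable level $\ell$ with $r_\ell<\dist(u,T')$, a vertex $u\in T\setminus T'$ ``must lie in a town on level $\ell$, being far from $T'$ and, after rescaling, far from $\spc(r_\ell)$.'' That implication is false: by \autoref{dfn:towns} membership in a town at level $\ell$ is governed by distance to the hubs $\spc(r_\ell)$, which has nothing to do with distance to $T'$; the vertex $u$ may itself be a hub (or within $2r_\ell$ of one) at every level $\ell\geq 1$, in which case it sits in the sprawl at all positive levels and the intermediate-level town you need never exists. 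The paper's proof sidesteps all of this with the fact you mention only in passing: after the scaling convention, $\spc(r_0)=\emptyset$, so every vertex is a singleton town on level $0$. Hence any $u\in T\setminus T'$ lies in a town $\{u\}\in\mc{T}$ that is contained in $T$ and disjoint from $T'$, and the maximal member of $\mc{T}$ properly contained in $T$ that contains $u$ is a second child of $T$ --- no intermediate levels, no sprawl analysis, and no appeal to separation from $T'$ is needed. Committing to that one-line argument (rather than the branching-chain narrative, which as written asserts a ``contradiction'' that is not one) closes the gap, and the ``levels below $i$'' claim then follows as above.
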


\begin{proof}
Assume $\mc{T}$ is not laminar, in which case there are two towns $T_1$ and 
$T_2$ in $\mc{T}$ that cross, i.e., all of the sets $T_1\cap T_2$, 
$T_1\setminus 
T_2$, and $T_2\setminus T_1$ are non-empty. Assume that $T_1$ is a town of 
level 
$i$, while $T_2$ is a town of level $j\geq i$. Let $v $ and $w$ be two vertices 
of $T_1$ such that $v\in T_2$ but $w\notin T_2$. By 
\autoref{lem:townproperties}, $\dist(v,w)\leq \diam(T_1)\leq r_i$ and 
$\dist(v,w)\geq \dist(T_2,V\setminus T_2) > r_j\geq r_i$---a contradiction.

For the second part, let $T$ be a town in the set $\mc{T}$ with a child $T'$. 
Note that $T\setminus T'\neq\emptyset$, while every vertex is a town on 
level~$0$. So there must be another town that is a child of $T$. Now assume 
there is a town $T$ on level $i$ with a child town $T'$ on level~$j\geq i$. By 
\autoref{lem:townproperties}, the diameter of $T$ is at most $r_i$, and
 any other child town of $T$ must be at distance more than 
$r_j\geq r_i$ from~$T'$. This would mean that $T$ only has one child 
town---a contradiction.
\end{proof}

The above lemma proves that the following procedure has a well-defined output:
starting with a town $T$ on some level $i$, repeatedly remove child towns on 
level $i-1$ until only the sprawl remains. Continue by removing all towns on
level $i-2$, $i-3$, etc.~from the remaining nodes until all nodes have been 
removed. Then recurse on each of the removed child towns.

Starting the decomposition with town $G$ on level $\log_{c/4} \diam(G)$,
we refer to the resulting laminar family $\mc{T}$ as the \emph{towns decomposition} of 
$G$. Note that $\mc{T}$ partitions every town $T \in \mc{T}$, and 
although $T$ appears once in $\mc{T}$, $T$ can 
be a town on multiple levels of the shortest path covers, if it is a town with 
respect to both $\spc(r_i)$ and $\spc(r_{i+1})$.
From now on we will consider the graph metric $(V,\dist_G)$ induced by $G$ 
instead of $G$ itself. All properties of towns and sprawl, such as given by 
\autoref{lem:townproperties} and~\ref{lem:laminar-towns}, are still valid in 
the metric. 
\section{Constructing the embedding}
\label{sec:construction}

We now describe our algorithm in more detail. 
PTASs for Euclidean and low doubling 
metrics~\cite{arora2003survey,talwar2004bypassing} use hierarchical 
decompositions coupled with a small number of ``portal'' nodes:
paths leaving a cluster in the decomposition must do so via an appropriate portal,
resulting in a small ``interface'' between distinct clusters in the decomposition.
Intuitively, the hubs are natural choices for portals, since long paths through 
some ball must pass through a hub. However problems crop up almost immediately 
because hubs are not guaranteed to be well-spaced or consistent between levels, 
and although all long paths through a ball may be hit by portals, there may be 
many short paths that go nowhere near one.

We overcome these difficulties by exploiting the properties of the towns 
decomposition. 
\autoref{lem:townproperties} guarantees that towns are isolated from both each 
other and the sprawl. Consequently, any approximate shortest path between nodes 
in a town must remain within that town.
The embedding is constructed recursively on the metric using the structure of 
the towns decomposition $\mc{T}$. That is, for a town $T\in\mc{T}$ we assume 
that we have already computed an embedding (and accompanying tree decomposition) 
with expected stretch $1+\eps$ for each child town of~$T$. We then connect 
these embeddings so that distances between them are preserved within a~$1+\eps$ 
factor in expectation. This gives an embedding for $T$ and, since $V$ itself is 
the root of the towns decomposition, eventually yields an embedding for $G$. 

The key insight that lets us connect the child towns of $T$ is that there 
exists a set of so-called \emph{approximate core hubs}~$X_T$ in~$T$ with low 
doubling dimension that can serve as the crossroads through which child towns 
connect. We will compute a low-treewidth embedding of the set~$X_T$ based on 
\autoref{thm:Talwar-light} and connect the embeddings of the child towns to it. 
In particular, for every child town~$T'$ we will identify a bag $b$ of the tree 
decomposition of $X_T$ containing hubs that are close to~$T'$. We call~$b$ the 
\emph{connecting bag} of $T'$. 
The embedding of~$T$ is constructed by connecting every vertex in each child 
town to every hub in the corresponding connecting bag. As we show in 
\autoref{sec:stretch}, this means that short connections between child towns can 
be routed directly through hubs in the connecting bags. Long connections on the 
other hand can be routed through the embedding of the core hubs $X_T$ at only a 
small overhead.

\begin{wrapfigure}[18]{R}{0.3\textwidth}
\centering{\includegraphics[width=0.25\textwidth]{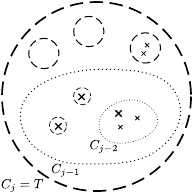}}
\caption{\label{fig:core} The cores of three different levels of town $T$ 
(enclosed by dotted lines for levels $i<j$). Note that some hubs of level $j-2$ 
(small crosses) lie in towns of level $j-1$ (larger dashed circles), and these 
are not core hubs.}
\end{wrapfigure}

The tree decomposition for $T$ is constructed by connecting each tree 
decomposition $D_{T'}$ for a child town $T'$ to the corresponding connecting bag 
$b$ of the tree decomposition $D_X$ for the hubs in~$X_T$ (lines~\ref{decomp1} 
to \ref{decomp2} in \autoref{alg_embedding}). Even though this yields a tree of 
bags containing all vertices of the town $T$, properties~\eqref{item:tw-edges} 
and~\eqref{item:tw-vertices} of \autoref{dfn:treewidth} might be violated by 
this initial attempt. As we will show in \autoref{sec:treewidth}, we need to 
make two modifications to the bags: first we need to add all vertices of $b$ to 
each bag of~$D_{T'}$. Since the treewidth of $D_X$ is bounded by 
\autoref{thm:Talwar-light}, this does not increase the sizes of bags by much. 
Second, we also need to add all hubs of $X_T$ within the child town $T'$ to each 
bag of $D_{T'}$, as well as to $b$ and all descendants of $b$ in~$D_X$. To bound 
the growth of the bags in this step, we need to bound the number of hubs in 
$X_T$ in a child town~$T'$, which we do in \autoref{sec:treewidth}.

The set $X_T$ is an approximate hub set of $T$. To define the set properly we 
need some additional insights on the structure of hubs of different levels 
in~$T$. The \emph{core} of $T$ is the intersection of sprawls formed by removing 
all child towns of $T$ above a given level (c.f.~\autoref{fig:core}):

\begin{dfn}\label{dfn:core}
Let $T\in\mc{T}$ be a town on level $j$, and let $S_i$ be the sprawl of $V$ on 
level~$i \leq j$. The \emph{core~$C_i$ of $T$ on level $i$} is inductively 
defined as 
follows: $C_{j}=T$, and $C_i=S_i\cap C_{i+1}$ for $i\leq j-1$. The \emph{core 
hubs of $T$} are given by the set $\bigcup_{i=1}^{j-1} C_i\cap \spc(r_i)$.
\end{dfn}

By this definition a town $T$ on level $j$ can be partitioned into its core on 
level~$i$ and its child towns on levels $\{i,\ldots,j-1\}$. Observe also that 
the set system $\{C_i\}_{i=0}^{j}$ given by the cores forms a chain, 
i.e.~$C_{i-1}\subseteq C_i$. Intuitively, the core hubs should have low doubling 
dimension: if the shortest path covers are locally $s$-sparse, then in a ball 
around a hub at level $i$ there will be at most $s$ hubs in that ball on 
level~$i-1$, and the balls of half the radius around these hubs cover the core 
on that level (cf.~\autoref{fig:core-hubs}). In fact one can show that the 
doubling dimension of the core hubs is fairly small but unfortunately not small 
enough for our purposes. In particular, we need the doubling dimension to be 
independent of the aspect ratio $\alpha$ of the metric. To circumvent this 
issue, roughly speaking, we shift each core hub so that it overlaps with lower 
level core hubs if possible, making the hubs nested to some degree. However, in 
order to preserve distances we will only shift them by at most an $\eps$ 
fraction. This shifting produces the set $X_T$ of \emph{approximate core hubs} 
of~$T$, which we use to construct our core embedding. Note that we do not use 
the approximate core hubs $X_T$ to define our towns decomposition, only to 
produce a low-treewidth core embedding (see lines~\ref{core-hubs} 
and~\ref{towns-decomp} in \autoref{alg_embedding}). We rely on the following 
non-trivial properties, which require an intricate proof provided in 
\autoref{sec:dd}.

\begin{thm}\label{thm:doubling_dim}
Let $\mc{T}$ be a towns decomposition of a graph of highway dimension~$k$, given 
by locally $s$-sparse shortest path covers on all levels with violation 
$\lambda>0$. For any town $T\in\mc{T}$ of a level~$j$ there exists a 
polynomially computable set of approximate core hubs $X_T\subseteq T$ such that
\begin{itemize}
\item for any core hub $h\in C_i\cap\spc(r_i)$ of $T$ on level 
$i\in\{1,\ldots,j-1\}$, there is a vertex $h'\in X_T$ with $\dist_G(h,h')\leq 
\eps r_i$, and 
\item the doubling dimension of $X_T$ is
$d = O(\log(\frac{ks\log(1/\eps)}{\lambda})/\lambda)$.
\end{itemize}
\end{thm}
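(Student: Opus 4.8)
The plan is to build $X_T$ by a greedy level-by-level \emph{alignment} of the core hubs, and then to bound its doubling dimension by a recursive covering argument whose engine is a new bound on the number of hubs in the ``vicinity'' of a ball. For the construction I would process the levels $i=1,2,\dots,j-1$ in increasing order, maintaining the set of approximate core hubs found so far. At level $i$, for each core hub $h\in C_i\cap\spc(r_i)$: if some already placed approximate core hub $h'$ (necessarily from a level below $i$) has $\dist_G(h,h')\le\gamma\eps r_i$ for a universal constant $\gamma$, then $h$ is \emph{aligned} to $h'$ and nothing is added; otherwise $h$ is \emph{fresh} and $h$ itself is added to $X_T$. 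Running the whole construction with $\eps$ replaced by $\eps/\gamma$, the first bullet is immediate (a fresh hub is its own representative; an aligned hub at level $i$ has a representative within $\gamma(\eps/\gamma)r_i=\eps r_i$); also $X_T\subseteq T$ since $C_i\cap\spc(r_i)\subseteq C_i\subseteq T$ and alignment only points to previously placed hubs, and the procedure is clearly polynomial. Two features of the alignment are used below: (a) a core hub at level $i$ is moved by at most $O(\eps r_i)$; and (b) for any ball of radius $\rho$, all approximate core hubs lying in it that originate from a level $i$ with $r_i>\rho/\eps$ are aligned rather than fresh, since by (a) the representatives of all such very coarse hubs sit within $O(\eps r_i)$ of the centre of the ball and hence within $\gamma\eps r_i$ of one another once $\gamma$ is large enough. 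Feature (b) is exactly what keeps the final bound from depending on the number of levels, i.e.\ on $\alpha$.

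The technical heart is a \emph{vicinity lemma}: for every scale $r$ and vertex $v$, a ball $B_{(c/2+2)r}(v)$ contains at most $N:=(ks/\lambda)^{O(1)}$ hubs of a locally $s$-sparse $\spc(r)$. Since one cannot cover such a ball by a bounded number of radius-$cr/2$ balls in general, I would instead take an inclusion-wise maximal $r$-separated subset $P$ of these hubs; local sparsity of the radius-$r$ balls around points of $P$ (valid because $r<cr/2$) bounds the hub count by $s|P|$, and it remains to bound $|P|$. Shortest paths between points of $P$ stay in $B_{3(c/2+2)r}(v)$; the short ones are covered by $\spc(r)$, the long ones are hit by the $O(k)$ vertices that \autoref{dfn:hd} guarantees at a suitably chosen scale, and the ``short'' and ``long'' ranges can be made to overlap precisely because $\lambda>0$ makes $cr/2$ strictly exceed $2r$. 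A counting argument over these hitting vertices then bounds $|P|$, hence $N$, by $(ks/\lambda)^{O(1)}$.

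To bound the doubling dimension, fix a ball $B=B_\rho(v)$ and let $\ell$ be the largest level with $r_\ell\le\rho$; I want to cover $X_T\cap B$ by $2^d$ balls of radius $\rho/2$. Split the levels into \emph{coarse} ($r_i>\rho$) and \emph{fine} ($r_i\le\rho$). For a coarse level $i$, every core hub with a representative in $B$ lies in the ball of radius $\rho+\eps r_i<2r_i<cr_i/2$ around $v$, so local $s$-sparsity gives at most $s$ of them; by feature (b) only the coarse levels with $r_i\le\rho/\eps$ yield fresh representatives, and there are only $O(\log(1/\eps)/\lambda)$ such levels since consecutive scales differ by the factor $c/4=1+\lambda/4$ --- so coarse levels contribute $O(s\log(1/\eps)/\lambda)$ points of $X_T$ in $B$, each put into its own ball. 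For the fine levels, the cores form a chain, so all core hubs at levels $\le\ell$ lie in $C_\ell\subseteq S_\ell$; as $\rho<r_{\ell+1}=(c/4)r_\ell\le cr_\ell/2$ we have $B\subseteq B_{cr_\ell/2}(v)$, and by \autoref{lem:townproperties} each vertex of $S_\ell$ is within $2r_\ell$ of $\spc(r_\ell)$, so $S_\ell\cap B_{cr_\ell/2}(v)$ is covered by the balls $B_{2r_\ell}(g)$ over the (by the vicinity lemma) at most $N$ hubs $g\in\spc(r_\ell)$ within $(c/2+2)r_\ell$ of $v$. The positive violation resurfaces here: $2r_\ell=cr_{\ell-1}/2<cr_\ell/2$, so each of these $N$ balls has the same shape at the next lower scale and we recurse, at each step also setting aside the at most $s$ core hubs of the current level lying in the current ball (its radius $2r_i$ is again $<cr_i/2$, so local sparsity applies). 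The recursion stops once the radius, which shrinks by $4/c=1/(1+\lambda/4)$ per step, drops below $\rho/2$, i.e.\ after $\Theta(1/\lambda)$ steps, so the fine levels contribute at most $N^{\Theta(1/\lambda)}$ balls (absorbing the set-aside hubs and a harmless radius inflation that also catches the $\eps r_i$-perturbed representatives). Altogether $2^d=O(s\log(1/\eps)/\lambda)+N^{\Theta(1/\lambda)}$ with $N=(ks/\lambda)^{O(1)}$, and taking logarithms (using w.l.o.g.\ $\lambda\le1$, as a larger violation only helps) yields $d=O(\log(\frac{ks\log(1/\eps)}{\lambda})/\lambda)$.

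The main obstacle is the vicinity lemma together with the recursion it drives: without a positive violation the covering balls $B_{2r}(g)$ are no smaller than the ball containing them, so the recursion makes no progress, and a naive treatment of the coarse levels would contribute one term per level and hence a $\log\alpha$ factor. The alignment step is engineered precisely to collapse those levels while moving each hub by only $O(\eps r_i)$; verifying that this drift simultaneously preserves the near-hub counts and the $\eps r_i$ guarantee --- two requirements that pull against each other --- is the delicate balance of the whole proof, and it is carried out in detail in \autoref{sec:dd}.
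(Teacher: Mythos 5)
Your overall architecture is essentially the paper's: the level-by-level alignment defining $X_T$ is \autoref{alg:xt}, the collapse of very coarse levels is \autoref{lem:nested}, and the depth-$O(1/\lambda)$ recursive covering is \autoref{lem:cover_1level} and \autoref{lem:cover_levels}. The genuine gap is your proof of the vicinity lemma, which you correctly identify as the engine of the whole argument. Your sketch uses only three facts: that $\spc(r)$ covers paths of length in $(r,cr/2]$, that \autoref{dfn:hd} supplies $k$ hitting vertices at a suitably chosen scale, and local $s$-sparsity --- but it never uses the inclusion-wise \emph{minimality} of the shortest path cover, and without minimality the vicinity bound is false. Concretely, take a star with center $z$ and $n$ leaves at distance $3r$, with $c$ close to $4$ so that $cr/2<3r$: the graph has highway dimension $1$, there are no shortest paths of length in $(r,cr/2]$, so the set of all leaves is a (vacuously valid, non-minimal) locally $1$-sparse shortest path cover, and all of its unboundedly many hubs lie within $(c/2+2)r$ of $z$. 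So any correct proof must exploit minimality, as the paper's \autoref{lem:hub_bound} does: minimality forces each hub to hit a witness path of length in $(r,cr/2]$, and the proof is a three-radii case analysis on where that witness lives, via \autoref{lem:balls_aux}. The same example also shows why your ``counting argument over the hitting vertices'' cannot close even as a heuristic: all the long pairs of your $r$-separated set $P$ can be hit by a single vertex (the star center), so knowing that at most $k$ vertices hit all long pairs puts no bound whatsoever on $|P|$; and for the short pairs the witness hubs lie only within $(c+2)r$ of $v$, i.e.\ in a larger vicinity ball of the very kind you are trying to bound, so invoking sparsity there is circular unless reorganized around a fixed base point.

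The rest is sound modulo small repairs that mirror the paper. Your feature (b) is overstated: the lowest coarse level meeting the ball can still contribute fresh hubs (there need not be any previously placed hub anywhere nearby); \autoref{lem:nested} is phrased with exactly this caveat, and the extra contribution is absorbed by one more application of sparsity, so this is cosmetic. Likewise, your covering balls are centered at hubs of $\spc(r_i)$, i.e.\ at arbitrary vertices of $V$, so to conclude a doubling bound for the submetric $(X_T,\dist_G)$ you still need the re-centering step of \autoref{lem:set-d-dim}, which only doubles the exponent. With the paper's \autoref{lem:hub_bound} substituted for your unproved vicinity lemma, your reorganization (handling the coarse levels once, relative to the top ball, rather than at every recursion node) does deliver the claimed dimension bound; without it, the central quantitative step of the proof is missing.
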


From now on, we use $d$ to denote the above doubling dimension bound for $X_T$. 
Our algorithm computes the low-treewidth embedding $H_T$ of $T$ by explicitly 
computing its tree-decomposition $D_T$. The latter is constructed by connecting 
the recursively computed tree decompositions $D_{T'}$ for child towns $T'$ of 
$T$ to the tree decomposition $D_X$ of an embedding $H_X$ for the metric induced 
by the approximate core hubs $X_T$. For this to work we need to make sure that 
the approximate core hubs contained in the same child town $T'$ do not end up in 
different bags in the tree decomposition $D_T$ of $H_T$. Our solution is to pick 
a representative core hub for each child town~$T'$. Specifically, let 
$Y_T\subseteq X_T$ contain one arbitrary approximate core hub for each child 
town $T'$ of $T$ for which $T'\cap X_T\neq\emptyset$. We say that a vertex $v\in 
Y_T$ of a child town $T'$ \emph{represents} the nodes in $X_T\cap T'$ (including 
$v$ itself). The sub-metric $Y_T$ of $X_T$ inherits the doubling dimension bound 
of \autoref{thm:doubling_dim}, since the doubling dimension of any sub-metric is 
at most twice the doubling dimension of the original metric. This was already 
noted in~\cite{gupta2003bounded}, and we give a formal proof of this fact in the 
following. We state this observation slightly more general than we need it here, 
as we will reuse it in \autoref{sec:dd}: in the next lemma the metric $Z$ is not 
required to have bounded doubling dimension, but the premise is clearly 
fulfilled if it does.

\begin{lem}\label{lem:set-d-dim}
Let $(Z,\dist)$ be a metric and $Z'\subseteq Z$. If for every ball 
$B_{2r}(v)\subseteq Z$ of radius $2r$ there are at most $2^\delta$ balls 
$B_r(u_i)\subseteq Z$, with centers $u_i$ and each with radius $r$, such that 
their union contains all vertices in $B_{2r}(v) \cap Z'$, then the doubling 
dimension of $(Z',\dist)$ is at most~$2\delta$.
\end{lem}
\begin{proof}
Any ball in $(Z',\dist)$ corresponds to a ball in $(Z,\dist)$ with a center 
vertex in $Z'$. Pick a ball $B_{2r}(v)\subseteq Z$ with radius $2r$ and $v\in 
Z'$. For each of the $2^\delta$ balls $B_r(u_i)$ that exist for $B_{2r}(v)$, 
there again are at most $2^\delta$ balls $B_{r/2}(w_{ij}) \subseteq Z$ with 
radius $r/2$ whose union contains $B_r(u_i)\cap Z'$. Pick any vertex~$w_{ij}' 
\in Z'$ (if any) in such a ball $B_{r/2}(w_{ij})$ and consider the ball 
$B_r(w_{ij}')$ of double the radius. This ball must contain~$B_{r/2}(w_{ij})$. 
Doing this for all such balls $B_{r/2}(w_{ij})$ gives at most $2^{2\delta}$ 
balls, each with a center vertex in $Z'$, such that their union covers 
$B_{2r}(v)\cap Z'$. Hence the ball $B_{2r}(v)\cap Z'$ in $(Z',\dist)$ is covered 
by at most $2^{2\delta}$ balls in $(Z',\dist)$ by intersecting each of these 
balls in $(Z,\dist)$ with~$Z'$.
\end{proof}

\begin{algorithm2e}[b!]
\caption{Compute embedding $H$ with tree decomposition $D_H$ of graph $G$}
\label{alg_embedding}

\LinesNumbered
\DontPrintSemicolon
\SetAlgoVlined

\For {$i = 0,\ldots, \lceil\log_{c/4} 
\diam(G)\rceil$} 
{
$\spc(r_i)\leftarrow$ locally $O(k\log k)$-sparse minimal shortest path cover
\tcp{See~\cite{abraham2011vc}}
}
$\mc{T}\leftarrow$ towns decomposition based on $\spc(r_i)$\;
\label{towns-decomp}
$(H,D_H)=Embed(V, \lceil\log_{c/4} \diam(G)\rceil)$ \tcp{Recursively compute 
embedding $H$ with tree decomposition $D_H$}

\BlankLine
\SetKwProg{Fn}{function}{}{}
\Fn{$Embed(T, j)$ \tcp*[h]{Low-treewidth embedding of town $T$ at level $j$}  } 
{
\lIf{$j=0$} {
\KwRet {$(T,T)$ \tcp*[h]{A town is a singleton at level $0$} }
} 

Compute approximate core hubs $X_T$ of $T$ \tcp*[h]{According to
\autoref{thm:doubling_dim}}\; \label{core-hubs}

$Towns \leftarrow \emptyset$ \tcp{Set of embeddings of child towns of $T$}

\SetKw{kwwhile}{while}
\For( \tcp*[h]{Recurse on child towns}) {$i = j-1,\ldots, 0$ 
} { 
\ForEach {child town $T'\in\mc{T}$ of $T$ on level $i$} { 

$(H_{T'},D_{T'}) \leftarrow Embed(T',i)$  \; 
\label{recurse}
Add $(H_{T'}, D_{T'} ,i)$ to $Towns$ \;
}

}

\tcp{Compute embedding $H_X$ for $X_T$ with tree decomposition $D_X$}
$Y_T \leftarrow $ one node in $X_T \cap T'$ for each child town $T'$ of $T$ for 
which $X_T \cap T'\neq\emptyset$ \; \label{hub-embedding1}
$(H_Y,D_Y) \leftarrow Talwar(Y_T,\eps')$ \tcp{Embedding of $Y_T$ with 
distortion $1+\eps'$}
$(H_X,D_X) \leftarrow$ expand each vertex in $H_Y,D_Y$ into all 
hubs it represents in $X_T$ \; \label{hub-embedding2}
$H_T \leftarrow H_X$ \tcp{Initially the embedding $H_T$ of $T$ is $H_X$}
$D_T \leftarrow D_X$ \tcp{Initially the tree decomposition $D_T$ of $T$ is 
$D_X$}
root$(D_T)\leftarrow$ root$(D_X)$ \tcp{Set the root bag of the tree 
decomposition}

\ForEach(\tcp*[h]{Join towns to $H_T$}) {$(H_{T'}, D_{T'}, i)$ in $Towns$} {
\tcp{Find the connecting bag $b$ for $T'$}
$T''\leftarrow$ closest sibling town to $T'$ in $T$\; \label{bag1}
$i\leftarrow$ level for which $\dist_G(T',T'')\in(r_i,r_{i+1}]$\;
$h\leftarrow$ closest hub in $X_T$ to $T'$\;
$\bar i\leftarrow \lceil \log_2 r_i\rceil$\;
$\bar j\leftarrow$ highest level of $D_X$\;
$C \leftarrow $ cluster containing $h$ at level $\bar l=\min\{\bar 
j,\bar i+\lceil\log_{2}(d/\eps)\rceil\}$ in split-tree of $X_T$\;
$b \leftarrow$ bag in $D_X$ corresponding to cluster $C$ \; \label{bag2}

\tcp{Connect $T'$ to $X_T$ in the embedding}
Add all vertices and edges of $H_{T'}$ to $H_T$ \; \label{connect1}
Add edge $\{u,v\}$ with length $\dist_G(u,v)$ to $H_T$ for each pair $u \in 
T'$, $v\in b$ \; \label{connect2}

\tcp{Add $D_{T'}$ to the tree decomposition $D_T$ of $H_T$}
Merge $D_{T'}$ and $D_T$ by connecting root$(D_{T'})$ with $b$\; \label{decomp1}
Add all vertices of $b$ to each bag of $D_{T'}$ \;
Add all hubs of $X_T\cap T'$ to each bag of $D_{T'}$, and also to $b$ and 
all descendants of $b$ in~$D_X$ (but not the descendants of $b$ in $D_T$ that 
are bags of some $D_{T''}$ for some child town $T''\neq T'$ of $T$)\; 
\label{decomp2}

}

\KwRet{$(H_T,D_T)$}
}

\end{algorithm2e}

By \autoref{lem:set-d-dim} the doubling dimension of $Y_T$ is at most $2d$, 
and so we can compute an embedding $H_Y$ for the metric $(Y_T,\dist_G)$ with 
bounded treewidth by \autoref{thm:Talwar-light}. Given $H_Y$ together with a 
tree decomposition $D_Y$ we convert it into an embedding $H_X$ of~$X_T$ together 
with a tree decomposition $D_X$ by replacing a vertex $v\in Y_T$ with all 
approximate core hubs that $v$ represents (see lines~\ref{hub-embedding1} 
to~\ref{hub-embedding2} in \autoref{alg_embedding}). In particular, the tree 
decomposition $D_X$ of $H_X$ is obtained from the decomposition $D_Y$ of $H_Y$ 
by replacing $v\in Y_T$ with all the hubs it represents in each bag 
containing~$v$. For every bag $b$ of $D_X$ the embedding $H_X$ contains a 
complete graph on 
the vertices of $b$, where the length of an edge $\{u,v\}$ is the distance 
$\dist_G(u,v)$ in $G$. It is easy to see that $D_X$ is a valid tree 
decomposition, i.e., it satisfies all properties of \autoref{dfn:treewidth}. We 
will show in \autoref{sec:treewidth} that the number of approximate core hubs in 
each child town is bounded, and therefore the growth of the treewidth caused by 
replacing a vertex by its represented hubs is also bounded. We also need to 
bound the extra distortion incurred by going from $H_Y$ to $H_X$ and show that a 
$1+\eps$ distortion of $H_Y$ translates into a $1+O(\eps)$ distortion of $H_X$, 
which entails reproving the relevant parts of \autoref{thm:Talwar-light}.

After computing the embedding $H_X$ for $X_T$, we connect each recursively 
computed embedding for the child towns of $T$ (line~\ref{recurse} of 
\autoref{alg_embedding}) to $H_X$ to form the final embedding $H_T$.  We need 
to argue that $H_X$ exists every time there are 
child towns to connect. From \autoref{lem:laminar-towns} we know that $T$ has 
at least two child towns if it has any. In \autoref{sec:stretch} we will show 
(in \autoref{lem:core-hub}) that there is a core hub $h$ in $T$ on any shortest 
path between a pair of children towns. By \autoref{thm:doubling_dim}, 
there is an approximate core hub in $X_T$ close to~$h$. 
Since $X_T$ is non-empty,  
$H_X$ exists. Once we compute $H_X$ we connect every vertex of a child 
town~$T'$ to all hubs in a bag $b$ of the tree decomposition $D_X$ of $H_X$. 
This bag $b$ is $\log_2(d/\eps)$ levels higher in the split-tree 
decomposition than the level corresponding to the shortest distance that needs 
to be bridged from $T'$ to any other vertex in~$T$. At the same time we will 
make sure that the net defining $b$ is fine enough so that lengths of 
connections passing through $b$ are preserved to a sufficient degree. This way, 
short connections from $T'$ to core hubs with length up to $O(1/\eps)$ times 
the separation of $T'$ are preserved in expectation by routing through the hubs 
in $b$. Connections to more distant hubs can be rerouted from a hub close to 
$T'$ 
through the embedding $H_X$ with only an~$\eps$ overhead, as we will prove in 
\autoref{sec:stretch}.

Recall that levels of the split-tree decomposition are denoted by $\bar i,\bar 
j$ etc. To determine the level of the bag~$b$, note that due to our growth rate 
of $c/4=1+\lambda/4$ of the levels (and the assumption that the 
violation~$\lambda$ is at most~$4$) the intervals $(r_i,2r_i]$ of the shortest 
path covers might overlap. As described in lines~\ref{bag1} to~\ref{bag2} of 
\autoref{alg_embedding}, let~$i$ be the level for which the distance between 
$T'$ and its closest sibling town lies in the interval $(r_i,r_{i+1}]$, and let 
$\bar i=\lceil\log_2 r_i\rceil$ be the corresponding level of the split tree 
decomposition of $D_X$. Now let $h\in X_T$ be the closest approximate core hub 
to~$T'$ (which might lie inside of $T'$). If $\bar j$ is the highest level of 
$D_X$, i.e.\ it is the level of the cluster containing all of~$X_T$, then the 
bag $b$ of the tree decomposition $D_X$ is the one on level $\bar l=\min\{\bar 
j,\bar i+\lceil\log_{2}(d/\eps)\rceil\}$ for which the corresponding cluster 
$C$ contains~$h$. All edges between vertices of $T'$ and $b$ are added to the 
embedding for~$T$ (lines~\ref{connect1} and~\ref{connect2} of 
\autoref{alg_embedding}), and we call the bag $b$ the \emph{connecting bag} 
for~$T'$.

Note that there are several parameters $\eps$ we could adjust independently: the 
target distortion of Talwar's algorithm, the level in the split-tree 
decomposition at which a child town is attached, and the amount of adjustment 
permitted in defining $X_T$. The latter two parameters we set to $\eps$, but the 
distortion in \autoref{thm:Talwar-light} needs to be smaller. We use $\eps'$ for 
the target distortion of this embedding and set $\eps' = \eps^2$.

\section{The expected distortion of the embedding}
\label{sec:stretch}

We now show that the expected distortion of the constructed embedding $H$
is $1+O(\eps)$. For this, we focus on a pair of vertices $u,v \in V$
and argue that
\[ \E[\dist_H(u,v)] \leq (1+O(\eps)) \dist_G(u,v). \]
The high-level idea is rather intuitive: suppose that $\dist_G(u,v) \in 
(r_i,r_{i+1}]$ for some $i$ and let $T \in \mc{T}$ be a town (a)~that contains 
both $u$ and $v$, and (b)~whose child towns separate $u$ and $v$; i.e., $u$ and 
$v$  are in different child towns of $T$. We first argue that there is a {\em 
level-$i$ core hub} $h$ of $T$ that lies on the unique shortest $u$--$v$~path. 

\begin{lem}\label{lem:core-hub}
  Let $u$ and $v$ be vertices that lie in different child towns
  of~$T$, and $i$ be such that $\dist_G(u,v)\in(r_i,r_{i+1}]$. There
  is a core hub $h\in C_i\cap \spc(r_i)$ of $T$ on level $i$ that hits
  the shortest path between $u$ and~$v$.
\end{lem}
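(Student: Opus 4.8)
The plan is to locate the required hub on a shortest $u$--$v$ path by invoking the shortest path cover at scale $r_i$, to argue that this hub must lie inside $T$, and finally to argue that it in fact lies in the core $C_i$ of $T$ on level $i$. I would begin by checking that $\dist_G(u,v)\in(r_i,cr_i/2]$: the lower bound is the hypothesis, while $\dist_G(u,v)\le r_{i+1}=(c/4)r_i\le(c/2)r_i=cr_i/2$. By \autoref{dfn:spc} the (unique, after perturbation) shortest $u$--$v$ path $P$ is therefore hit by some hub $h\in\spc(r_i)$. Along the way I would also record that $1\le i\le j-1$: since $u,v\in T$, \autoref{lem:townproperties} gives $\dist_G(u,v)\le\diam(T)\le r_j$, and because the violation $\lambda$ is positive the scales $r_i=(c/4)^i$ are strictly increasing, so $r_i<\dist_G(u,v)\le r_j$ forces $i<j$; the lower bound $i\ge1$ comes from the scaling convention that the minimum distance exceeds $c/2=2r_1$.

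Next I would show $h\in T$. Since $h$ lies on $P$ we have $\dist_G(u,h)\le\dist_G(u,v)\le r_j$, so if $h\notin T$ then, as $u\in T$, \autoref{lem:townproperties} would give $\dist_G(u,h)\ge\dist_G(T,V\setminus T)>r_j$, a contradiction. The crux is then to show $h\in C_i$. Using that $T$ is partitioned into its core $C_i$ and the child towns of $T$ on levels $\ge i$, it suffices to rule out that $h$ lies in a child town $T'$ of $T$ on some level $\ell\ge i$, which I would do by two cases. If $\ell=i$, then $T'$ is a town for scale $r_i$, i.e.\ $T'=\{x:\dist_G(x,v')\le r_i\}$ for a vertex $v'$ with $\dist_G(v',\spc(r_i))>2r_i$; since $h\in\spc(r_i)$ this gives $\dist_G(v',h)>2r_i$, contradicting $h\in T'$. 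If $\ell\ge i+1$, then \autoref{lem:townproperties} gives $\dist_G(T',V\setminus T')>r_\ell\ge r_{i+1}\ge\dist_G(u,v)$, and since $h\in T'$ with $\dist_G(u,h),\dist_G(h,v)\le\dist_G(u,v)$, both $u$ and $v$ would have to lie in $T'$ --- contradicting that $u$ and $v$ lie in two distinct, hence disjoint, child towns of $T$. Hence $h\in C_i\cap\spc(r_i)$ is a core hub of $T$ on level $i$ that hits $P$, as claimed.

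The delicate point is the last case analysis. The separation guaranteed for a child town on level $i$ is only strictly greater than $r_i$, which is too weak to exploit directly against distances that may themselves be as large as $r_{i+1}$; it must be replaced by the observation that a hub of $\spc(r_i)$ can never sit inside a town for scale $r_i$, precisely because such a town is centred at distance more than $2r_i$ from every hub in $\spc(r_i)$. The other two ingredients worth flagging are that the argument genuinely relies on the positive violation $\lambda>0$ (to get strictly growing scales, and hence $i<j$), and on the fact that the child towns of $T$ partition $T$, so two of them are either equal or disjoint.
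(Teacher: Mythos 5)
Your proposal is correct and follows essentially the same route as the paper's proof: obtain a hub $h\in\spc(r_i)$ on the shortest $u$--$v$ path (noting $\dist_G(u,v)\le r_{i+1}\le cr_i/2$), rule out $h\notin T$ via the separation $\dist(T,V\setminus T)>r_j\ge\diam(T)$, rule out $h$ lying in a level-$i$ child town because hubs of $\spc(r_i)$ cannot lie in towns of scale $r_i$, and rule out child towns of level $\ge i+1$ via their separation exceeding $r_{i+1}\ge\dist_G(u,v)$. The only cosmetic difference is that you conclude both $u$ and $v$ would lie in $T'$ and contradict disjointness, whereas the paper fixes w.l.o.g.\ $v\notin T'$ and contradicts $\dist_G(v,h)\le\dist_G(u,v)$; these are the same argument.
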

\begin{proof}
  By definition, $\spc(r_i)$ must contain some hub $h$ on the shortest
  $u$--$v$ path.  Recall that the town $T$ can be partitioned into its
  core $C_i$ on level $i$ and the child towns on levels at
  least~$i$. If hub $h$ is not a core hub, $h\notin\spc(r_i)\cap C_i$,
  then it is either outside of $T$ or in a child town of $T$ on a
  level at least $i$.

  If $h$ lies in a child town $T'$ of $T$, we can assume w.l.o.g.\
  that $v\notin T'$ since $v$ and $u$ lie in different child towns. As
  a hub on level $i$, $h$ cannot be in a town on level $i$ by
  \autoref{dfn:towns}, so $T'$ is a town on level $i+1$ or above. By
  \autoref{lem:townproperties} we then know that
  $\dist_G(v,h)>r_{i+1}$, but at the same time,
  $\dist_G(v,h)\leq\dist_G(v,u)\leq r_{i+1}$---a contradiction. If $h$
  lies outside of $T$, then by \autoref{lem:townproperties}
  $\dist_G(v,u)\geq\dist_G(v,h)\geq \dist(T,V\setminus T)>r_j$,
  where~$j$ is the level of $T$. However by the same lemma,
  $\dist_G(v,u) \leq \diam(T) \leq r_j$---again a contradiction.
\end{proof}

By \autoref{thm:doubling_dim} it now follows that there is an
approximate core hub $h_X \in X_T$ such that 
\begin{equation}\label{apxch}
 \dist_G(h, h_X) \leq \eps r_i = O(\eps) \dist_G(u,v), 
\end{equation}
since $r_{i+1}/r_i = O(1)$ using our assumption that $c=O(1)$. 
We are also able to show that the expected distances between $u$ and $h_X$ and 
$v$ and $h_X$, respectively, are well preserved by $H$.

\begin{lem}\label{lem:single-stretch}
Let $v$ be a vertex in a child town $T'$ of $T \in \mc{T}$, and let $h_X$ be an 
approximate core hub in~$X_T$. If the distance to the closest sibling town of
$T'$ is $r$, then $\E[\dist_H(v,h_X)]\leq(1+O(\eps))\dist_G(v,h_X)+O(\eps r)$.
\end{lem}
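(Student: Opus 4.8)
The plan is to prove this by induction on the level $j$ of the town $T$, following the recursive structure of the embedding. For the base case $j=0$ the town is a singleton and there is nothing to show. For the inductive step, consider $v$ in a child town $T'$ of $T$ on some level $i'$, and let $h_X\in X_T$ be the target core hub. The key dichotomy is whether $h_X$ lies in $T'$ or not.

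If $h_X\in T'$, then by induction the embedding $H_{T'}$ already approximately preserves $\dist_G(v,h_X)$ up to a $(1+O(\eps))$ factor plus an additive $O(\eps r_{i'})$ term, and since $r_{i'}$ is comparable to the separation of $T'$, which is at most the distance we are trying to bridge inside $T$, this is absorbed into the claimed bound. (If $h_X$ represents several hubs of $X_T\cap T'$, I would also need to account for the line~\ref{hub-embedding2} expansion step, but distances among represented hubs are unchanged since $H_X$ uses true $\dist_G$ lengths, and the tree-decomposition edits do not affect $H_T$ as a graph.) If $h_X\notin T'$, the route from $v$ goes: inside $H_{T'}$ from $v$ to a hub $h'$ of $X_T$ close to $T'$ (the ``closest approximate core hub'' used to define the connecting bag $b$), then via the direct edges added in line~\ref{connect2} from $h'$ into $b$, and finally through the core embedding $H_X$ from a vertex of $b$ to $h_X$. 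The first leg is handled by induction; the last leg is controlled by \autoref{thm:Talwar-light} applied to $X_T$ (more precisely to $Y_T$, then transferred to $X_T$), giving expected stretch $1+\eps'$.

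The heart of the argument is bounding the detour incurred by funnelling the path through the connecting bag $b$. Here I would split into the ``short'' and ``long'' regimes according to how $\dist_G(h',h_X)$ compares to the net granularity at level $\bar l$ of the split-tree of $X_T$. Recall $b$ sits $\lceil\log_2(1/\eps)+\log_2 d\rceil$ levels above $\bar i=\lceil\log_2 r_i\rceil$, where $r_i$ is the separation scale of $T'$; the net in $b$ has granularity $\Theta(\eps' 2^{\bar l}/(d\log\alpha))$, which by the choice of $\bar l$ is at most $O(\eps r_i)$ — this yields the additive $O(\eps r_i)$ error from shifting $v$'s image onto the portal in $b$. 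In the short regime, where $\dist_G(v,h_X)=O(r_i/\eps)$, the triangle-inequality detour through any vertex of $b$ near $h'$ costs only $O(r_i)=O(\eps)\cdot(r_i/\eps)=O(\eps)\dist_G(v,h_X)$ plus the additive net term, exactly as claimed. In the long regime, $h_X$ is far enough from $T'$ that the split-tree distance from $b$'s cluster down to $h_X$ is a $1+O(\eps)$-approximation of $\dist_G(h',h_X)$ by Talwar's analysis, and $\dist_G(v,h')+\dist_G(h',h_X)\le(1+O(\eps))\dist_G(v,h_X)$ since $h'$ is the closest core hub to $T'$ and $v$ is within $r_{i'}\le O(\eps)\dist_G(v,h_X)$ of $T'$'s boundary.

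The main obstacle I anticipate is making the ``long regime'' estimate fully rigorous: one must verify that the bag $b$, chosen as the cluster at level $\bar l$ containing the \emph{specific} hub $h'$ closest to $T'$, actually lies on (or close enough to) the split-tree path from $X_T$'s root cluster down to $h_X$, so that routing $h'\to b\to h_X$ inside $H_X$ does not cost much more than $\dist_G(h',h_X)$. This requires the distortion machinery of \autoref{thm:Talwar-light} in the refined form promised at the end of \autoref{sec:construction} (bounding stretch as a sum of per-level errors), together with the fact that once we are $\log_2(1/\eps)+\log_2 d$ levels above scale $r_i$, the cluster diameters are large enough that $h'$ and the relevant portal for $h_X$ coexist in $b$ up to the permitted error. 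I would also need to be slightly careful that the additive term is stated against $r_i$ (the separation scale of $T'$ inside $T$), not $r_{i'}$ (the level of $T'$ itself), and confirm these differ by at most a constant factor in the cases that matter; this is where having positive violation $\lambda>0$, hence genuine scale growth, is used.
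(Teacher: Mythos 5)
Your proposal has the right architecture (route via the connecting bag $b$, use Talwar's embedding of $X_T$, additive error $O(\eps r_i)$ from the net granularity at level $\bar l$), but it is missing the probabilistic ingredient that the paper's proof actually turns on, and the deterministic substitute you offer does not work. The cluster $C$ corresponding to $b$ is \emph{random}, and the correct case analysis is on the random event $h_X\in C$: when $h_X\in C$ (which happens with probability $1-O(\eps\cdot\dist_G(v,h_X)/r_i)$, by the separation-probability bound of \autoref{lem:sep-prob} used in \autoref{lem:direct}), the direct edge from $v$ goes to the portal of $b$ nearest to \emph{$h_X$} (not to a portal near $h'$), and the bag-hierarchy path from that portal down to $h_X$ costs only $O(\eps r_i)$ (\autoref{lem:nearby}), giving the deterministic bound $\dist_H(v,h_X)\le\dist_G(v,h_X)+O(\eps r_i)$. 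Your ``short regime'' instead routes through a portal near $h'$ and asserts $O(r_i)=O(\eps)\dist_G(v,h_X)$; this uses the regime's \emph{upper} bound $\dist_G(v,h_X)=O(r_i/\eps)$ as if it were a lower bound $\Omega(r_i/\eps)$. When $\dist_G(v,h_X)$ is of order $r_i$ (or smaller, e.g.\ $h_X\in T'$), an $O(r_i)$ detour is a constant-factor overhead and cannot be absorbed into $O(\eps)\dist_G(v,h_X)+O(\eps r_i)$. The claimed bound is rescued only because the detour is taken with probability $O(\eps\cdot\dist_G(v,h_X)/r_i)$, so that $\Pr[h_X\notin C]\cdot O(r_i)=O(\eps)\dist_G(v,h_X)$; and since Talwar's stretch guarantee is only an expectation and the event $h_X\notin C$ is correlated with the split-tree, one additionally needs the conditional-expectation bound of \autoref{lem:conditionalexpectation} to control $\E[\dist_{H_X}(w,h_X)\mid h_X\notin C]$. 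The ``main obstacle'' you flag (whether $b$ lies on the split-tree path towards $h_X$) is exactly what this probabilistic case analysis bypasses; no deterministic short/long split on $\dist_G(h',h_X)$ recovers it.

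A second gap is the case $h_X\in T'$ ``by induction'': the inductive hypothesis for $T'$ speaks of hubs in $X_{T'}$, but $X_T\cap T'$ is in general not contained in $X_{T'}$ (approximate core hubs of $T$ that happen to lie in a child town need not be core hubs of that child town), so nothing guarantees that $H_{T'}$ preserves $\dist_G(v,h_X)$ for such a pair. In fact the paper's proof of this lemma uses no induction at all: because line~\ref{connect2} adds edges from every vertex of $T'$ to every vertex of the connecting bag, and $h_X\in T'$ forces $h_X\in C$ deterministically, this case is handled by the same portal argument as above. (Minor point: the additive term in the statement is measured against $r_i$, the separation scale of $T'$ used by the algorithm, so no comparison between $r_i$ and the level of $T'$ is needed.)
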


Since $u$ lies in a different child town than $v$ and 
$\dist_G(u,v)\in(r_i,r_{i+1}]$, we get $O(\eps r)=O(\eps\cdot\dist_G(u,v))$ in 
\autoref{lem:single-stretch}. Hence, using triangle inequality, the bound on the 
expected distance in this lemma immediately implies the following:
\begin{align*}
\E[\dist_H(v,u)]\leq&\E[\dist_H(v,h_X)]+\E[\dist_H(h_X,u)]\\
\leq & (1+O(\eps))\dist_G(v,h_X) +(1+O(\eps))\dist_G(h_X,u) 
+O(\eps\cdot\dist_G(u,v))\\
\leq & (1+O(\eps))(\dist_G(v,h)+\dist_G(h,h_X)+\dist_G(h_X,h)+\dist_G(h,u))
+O(\eps)\dist_G(u,v)\\
\leq & (1+O(\eps))\dist_G(v,u),
\end{align*}
where the last equality uses the fact that $h_X$ lies close to a
shortest $u,v$-path (see \eqref{apxch}). Together with the fact that 
$\dist_G(u,v) \leq \dist_H(u,v)$, this implies our stretch bound.

\begin{thm}
The expected stretch of the embedding $H$ of $G$ is $1+O(\eps)$.
\end{thm}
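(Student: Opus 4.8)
The plan is to assemble the ingredients developed above for a pair of distinct vertices $u,v\in V$ (the case $u=v$ being trivial), establishing the two conditions of \autoref{dfn:embedding}: the deterministic non-contraction bound $\dist_G(u,v)\le\dist_H(u,v)$, and the expectation bound $\E[\dist_H(u,v)]\le(1+O(\eps))\dist_G(u,v)$. For the first, observe that $H$ is a graph on the vertex set $V$ every edge $\{x,y\}$ of which is given length $\dist_G(x,y)$ (lines~\ref{connect2} and~\ref{hub-embedding2} of \autoref{alg_embedding}, together with the complete graphs placed on the bags of each $D_X$); hence every $u$--$v$ walk in $H$ has length at least $\dist_G(u,v)$ by the triangle inequality in $(V,\dist_G)$, so $\dist_H(u,v)\ge\dist_G(u,v)$ for every outcome of the randomness.

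For the expectation bound I would first locate $u$ and $v$ in the towns decomposition. Since $V$ is the root of the laminar family $\mc{T}$ by \autoref{lem:laminar-towns}, there is a unique inclusion-minimal town $T\in\mc{T}$, of some level $j$, containing both $u$ and $v$. The child towns of $T$ partition $T$ (the level-$0$ core of $T$ is empty, as there is no sprawl on level $0$), so by minimality of $T$ the vertices $u$ and $v$ lie in different child towns of $T$. The scaling convention makes every inter-vertex distance larger than $r_1$, and \autoref{lem:townproperties} gives $\dist_G(u,v)\le\diam(T)\le r_j$, so there is a level $i$ with $1\le i\le j-1$ and $\dist_G(u,v)\in(r_i,r_{i+1}]$. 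Now \autoref{lem:core-hub} produces a core hub $h\in C_i\cap\spc(r_i)$ of $T$ lying on a shortest $u$--$v$ path, and \autoref{thm:doubling_dim}, applicable because $i\in\{1,\ldots,j-1\}$, produces an approximate core hub $h_X\in X_T$ with $\dist_G(h,h_X)\le\eps r_i$, i.e.\ inequality~\eqref{apxch}.

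The remaining step is to feed this into \autoref{lem:single-stretch}. Applying that lemma to the pairs $(v,h_X)$ and $(u,h_X)$ --- both legitimate since $u,v\in T$ and $h_X\in X_T$ --- and using $\dist_G(u,v)>r_i$ to absorb the additive $O(\eps r_i)$ term into $O(\eps)\dist_G(u,v)$, we obtain $\E[\dist_H(v,h_X)]\le(1+O(\eps))\dist_G(v,h_X)+O(\eps)\dist_G(u,v)$ and the analogous bound for $u$. The triangle inequality in $H$ then gives $\E[\dist_H(u,v)]\le\E[\dist_H(u,h_X)]+\E[\dist_H(h_X,v)]$, and substituting, together with $\dist_G(u,h_X)+\dist_G(h_X,v)\le\dist_G(u,h)+\dist_G(h,v)+2\dist_G(h,h_X)=(1+O(\eps))\dist_G(u,v)$ --- valid because $h$ lies on a shortest $u$--$v$ path and by~\eqref{apxch} --- yields $\E[\dist_H(u,v)]\le(1+O(\eps))\dist_G(u,v)$, the displayed bound preceding the statement. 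Combined with the non-contraction bound, $H$ is a probabilistic embedding of $G$ of expected distortion $1+O(\eps)$; rescaling $\eps$ by a suitable constant turns this into the $1+\eps$ needed for \autoref{thm:main}.

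The heavy lifting has already been done in \autoref{lem:core-hub}, \autoref{thm:doubling_dim}, and --- the genuine bottleneck --- \autoref{lem:single-stretch}, whose proof must handle short connections routed directly through a connecting bag and long connections rerouted through the core embedding $H_X$, re-proving the relevant part of \autoref{thm:Talwar-light} for the expansion from $H_Y$ to $H_X$. Relative to that, the argument above is pure bookkeeping; the only points needing care are confirming that $T$, $i$, and $h_X$ satisfy the hypotheses of the lemmas (especially $1\le i\le j-1$), and checking that the three independent $\eps$-slacks in the construction --- Talwar's distortion $\eps'=\eps^2$, the attachment-level shift of $\log_2(1/\eps)+\log_2 d$, and the $\eps$-shift defining $X_T$ --- all contribute only $O(\eps)$ to the final stretch. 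I expect no obstacle at this last assembly step.
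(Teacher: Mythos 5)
Your proposal is correct and follows essentially the same route as the paper: locate the minimal town $T$ separating $u$ and $v$, invoke \autoref{lem:core-hub} and \autoref{thm:doubling_dim} to get an approximate core hub $h_X$ near a shortest $u$--$v$ path, and combine \autoref{lem:single-stretch} for $(u,h_X)$ and $(v,h_X)$ with the triangle inequality and the domination $\dist_G\le\dist_H$, absorbing the $O(\eps r_i)$ terms via $\dist_G(u,v)>r_i$. Your added care about non-contraction and about verifying $1\le i\le j-1$ is consistent with (and slightly more explicit than) the paper's argument.
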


The remainder of this section is devoted to providing a proof of 
\autoref{lem:single-stretch}, for which we will need some further details from 
Talwar's embedding of low doubling metrics into bounded treewidth graphs.

\subsection{The distortion of an embedding for approximate core hubs}

Before proceeding with the proof of \autoref{lem:single-stretch} we
will first need to have a closer look at the properties of Talwar's
split-tree decomposition. We will use these properties to prove that our 
computed embedding $H_X$ of the approximate core hubs $X_T$ has distortion 
$1+O(\eps)$.

\begin{lem}[\cite{talwar2004bypassing}]
\label{lem:splittree}
The split-tree decomposition for a metric $(X,\dist)$ with doubling 
dimension~$d$ and aspect ratio $\alpha$ satisfies the following properties: 
 \begin{enumerate}[(1)]
\item\label{item:levels} there are $\log_2 \alpha + 2$ levels, 
\item\label{item:partition} the clusters on each level $\bar i$ partition $X$,
\item\label{item:diameter} the diameter of a cluster at level~$\bar{i}$ is at 
most $2^{\bar{i}+1}$, and 
\item\label{item:cut-prob} the probability that any points $x,y \in X$ are in 
distinct level $\bar{i}$ clusters is $O(d\cdot \dist(x,y)/2^{\bar{i}})$.
 \end{enumerate}
\end{lem}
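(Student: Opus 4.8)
The plan is to recall Talwar's randomized split-tree construction and verify each of the four listed properties in turn; all four are essentially bookkeeping once the construction is recalled, so the ``obstacle'' here is conceptual faithfulness to \cite{talwar2004bypassing} rather than any genuine difficulty. Recall that the construction fixes, for each level $\bar i$ ranging from the top down, a target diameter of order $2^{\bar i}$, and builds the level-$\bar i$ partition by a randomized version of the standard greedy ball-carving: pick a random permutation of the points (or a random permutation of a net), pick a random radius $\rho_{\bar i}\in[2^{\bar i-1},2^{\bar i})$ according to a suitable density, and assign each point to the first center (in the permutation order) whose ball of radius $\rho_{\bar i}$ contains it; one also insists that the level-$\bar i$ partition refines the level-$(\bar i+1)$ partition, so that the partitions form a laminar hierarchy. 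The root level is the unique cluster $X$, and the construction stops once clusters are singletons.

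First I would establish \eqref{item:levels}: the top level is the one whose diameter bound $2^{\bar i+1}$ first exceeds $\diam(X)$, and the bottom level is the one at which the radius has dropped below the minimum interpoint distance, forcing singleton clusters; since the aspect ratio is $\alpha$, the span of relevant levels is $\log_2\alpha+O(1)$, and a careful accounting of the rounding gives exactly $\log_2\alpha+2$. Next, \eqref{item:partition} and \eqref{item:diameter} are immediate from the construction: each point is assigned to exactly one center at each level so the level-$\bar i$ clusters form a partition of $X$, and every cluster at level $\bar i$ is contained in a ball of radius $\rho_{\bar i}<2^{\bar i}$, hence has diameter at most $2\rho_{\bar i}<2^{\bar i+1}$. (The refinement condition between consecutive levels is what makes the whole family laminar, which is what licenses the tree structure used elsewhere, but it is not needed for \eqref{item:partition}--\eqref{item:diameter}.)

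The one property requiring real work is the padding/cut bound \eqref{item:cut-prob}. Here I would argue as in the standard analysis of $\mathrm{CKR}$-style decompositions specialized to doubling metrics: for fixed $x,y$, the event that $x$ and $y$ land in different level-$\bar i$ clusters forces some center $z$ to be the first (in permutation order) among the centers whose radius-$\rho_{\bar i}$ ball separates the pair, i.e.\ whose ball contains exactly one of $x,y$; conditioning on $\rho_{\bar i}$, this happens only for centers $z$ with $\dist(z,\{x,y\})\in(\rho_{\bar i}-\dist(x,y),\rho_{\bar i}]$ roughly, and by the doubling property the number of candidate centers within distance $2^{\bar i+1}$ of $x$ that can even be ``active'' is $2^{O(d)}$; ordering them by distance to $x$ and summing the probability that the $t$-th closest is both the separator and first in the permutation yields a telescoping sum bounded by $O(d\cdot\dist(x,y)/2^{\bar i})$, the extra factor $d$ coming from the $\log$ of the $2^{O(d)}$ candidate count together with the width $2^{\bar i-1}$ of the radius interval. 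This is precisely the computation carried out in \cite{talwar2004bypassing}, so I would cite it for the constants rather than reproduce it; the main thing to get right is that the doubling dimension enters linearly (via $\log 2^{O(d)}$), not exponentially, which is exactly what makes the downstream treewidth bound in \autoref{thm:Talwar-light} polynomial in $\log\alpha$ for fixed $d$.
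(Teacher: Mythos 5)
Your plan is consistent with what the paper does: Lemma~\ref{lem:splittree} is simply quoted from \cite{talwar2004bypassing}, and your sketch accurately recapitulates Talwar's hierarchical net/ball-carving construction and the standard CKR-style analysis of the separation probability, deferring to the citation for the exact constants. The only cosmetic remark is that for property~\eqref{item:cut-prob} one should formally sum the per-level cut probabilities over all levels at and above $\bar i$ (a geometric series dominated by the level-$\bar i$ term), which your sketch leaves implicit but which does not affect the stated bound.
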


Recall the notion of $\delta$-net from \autoref{dfn:netcover}. The 
main result of \citet{talwar2004bypassing} that we use for our embedding 
is the following more detailed account of \autoref{thm:Talwar-light}.

\begin{thm}[\cite{talwar2004bypassing}]
\label{thm:Talwar}
Let $(X,\dist)$ be a metric with doubling dimension $d$ and aspect ratio 
$\alpha$. In polynomial time we can compute a probabilistic embedding $\mc{D}$ 
of $X$ into 
bounded treewidth graphs. In particular, a computed graph $H\in\mc{D}$ has a 
tree decomposition $D$ with the following properties:
 \begin{enumerate}[(i)]
\item each bag $b$ in $D$ corresponds to a cluster $C$ in the split-tree 
decomposition of $(X,\dist)$, and the tree underlying $D$ is precisely
that of the split-tree decomposition;
\item\label{item:net-hierarchy}  the nets of the clusters form a hierarchy, 
i.e., every vertex in a bag $b$ is also contained in one of the children of $b$ 
in the tree $D$;
\item a bag~$b$ corresponding to a cluster $C$ at level $\bar{i}$
  consists of a 
$\beta 2^{\bar{i}}$-net of $C$ for some~$\beta > 0$; and
\item using a $\beta 2^{\bar{i}}$-net for clusters at level $\bar{i}$, the 
expected distortion of $H$ is $1+ O(\beta d \log\alpha)$, and the treewidth of 
$D$ is at most~$(1/\beta)^{O(d)}$.
 \end{enumerate}
In particular there is a $\beta = \Theta(\eps'/(d\log\alpha))$ such that the 
expected distortion is $1+\eps'$, and the treewidth is 
$(d\log(\alpha)/\eps')^{O(d)}$.
\end{thm}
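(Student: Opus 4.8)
The plan is to reconstruct Talwar's split-tree embedding and then verify the listed properties one at a time. First I would invoke \autoref{lem:splittree} to obtain a random split-tree decomposition: a laminar family of clusters arranged in $\log_2\alpha+2$ levels, with the level-$\bar i$ clusters partitioning $X$, each of diameter at most $2^{\bar i+1}$, and with cut probability $O(d\cdot\dist(x,y)/2^{\bar i})$ between any two points $x,y$. Processing levels from the top down, I would equip each cluster $C$ on level $\bar i$ with a $\beta 2^{\bar i}$-net, built by taking the net points of $C$'s parent that lie in $C$ and greedily extending them to a $\beta 2^{\bar i}$-net of $C$; this forces every net point to lie in the net of the unique child cluster containing it, so the nets nest. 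Declaring the bag of $C$ to be its net, letting the tree of $D$ be the split tree itself, and letting $H$ place a clique with $G$-lengths on each bag, properties (i)--(iii) hold essentially by construction; to see that $D$ is a legal tree decomposition one checks that the clusters whose net contains a fixed vertex $v$ form a connected path (a leaf-side prefix of $v$'s cluster chain) and that every clique edge of $H$ lies inside some bag.

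For the treewidth half of (iv) I would bound the size of a level-$\bar i$ bag, which is a $\beta 2^{\bar i}$-net of a set of diameter at most $2^{\bar i+1}$: in a metric of doubling dimension $d$, a set of diameter $\Delta$ contains at most $(O(\Delta/\delta))^{O(d)}$ points that are pairwise more than $\delta$ apart (halve the covering radius $O(\log(1/\beta))$ times, each halving multiplying the ball count by $2^d$), and here $\Delta/\delta\le 2/\beta$, so every bag has $(1/\beta)^{O(d)}$ vertices and the treewidth is $(1/\beta)^{O(d)}$. The distortion lower bound is immediate, since every edge of $H$ carries its $G$-length and a metric satisfies the triangle inequality, so no $H$-path undercuts the corresponding $G$-distance.

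The crux is the expected-distortion bound. Fixing $x,y$ with $\ell=\dist(x,y)$ and letting $\bar i^*$ be the first level on which $x$ and $y$ fall into a common cluster, I would exhibit an $x$--$y$ path in $H$ that climbs from $x$ through portals of $x$'s cluster chain up to a net point of that common cluster and then descends symmetrically to $y$; the net hierarchy is exactly what makes consecutive portals co-bagged and hence adjacent in $H$. Each level-$\bar j$ snap to a net costs an additive $O(\beta 2^{\bar j})$, and a level is charged only when it separates $x$ from $y$, so
\[
\E[\dist_H(x,y)]-\ell \;\le\; \sum_{\bar j}\Pr[\,x,y\text{ separated on level }\bar j\,]\cdot O(\beta 2^{\bar j}).
\]
Splitting the sum at $2^{\bar j}\approx d\ell$: above the threshold the cut probability is $O(d\ell/2^{\bar j})$, so each of the $O(\log\alpha)$ surviving terms is $O(\beta d\ell)$ and together they give $O(\beta d\ell\log\alpha)$; below the threshold the probabilities are at most $1$ while $\sum 2^{\bar j}$ over those levels is a geometric sum of size $O(d\ell)$, contributing $O(\beta d\ell)$. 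Hence the expected detour is $O(\beta d\ell\log\alpha)$, i.e.\ the stretch is $1+O(\beta d\log\alpha)$. Choosing $\beta=\Theta(\eps'/(d\log\alpha))$ with a small enough constant then gives expected distortion $1+\eps'$ and treewidth $(1/\beta)^{O(d)}=(d\log\alpha/\eps')^{O(d)}$, and polynomial running time is clear since the split tree, the nets, and the cliques are all polynomial-time computable.

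The hard part will be this distortion step, in two respects: (a) setting up the net hierarchy so that each bag is simultaneously a genuine $\beta 2^{\bar i}$-net of its cluster and nested tightly enough that the connecting path's consecutive portals always share a bag; and (b) the level-by-level error accounting, whose quantitative engine is the cut-probability estimate of \autoref{lem:splittree} together with the observation that only separating levels are charged. The remaining pieces --- the treewidth count, the lower bound, and fixing the constant in $\beta$ --- are routine.
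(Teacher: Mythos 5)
This theorem is not proved in the paper at all: it is quoted from Talwar~\cite{talwar2004bypassing}, and the text only sketches why it holds (the stretch bound ``follows from \autoref{lem:splittree}'', the treewidth from the fact that a $\beta 2^{\bar i}$-net of a set of diameter $2^{\bar i+1}$ has aspect ratio $O(1/\beta)$ together with \autoref{lem:dd_bound}). Your reconstruction follows exactly this standard route, and most of it is sound: the top-down greedy construction of nested nets, the packing bound for the bag size, the domination lower bound, and the level-by-level charging of the snapping cost $O(\beta 2^{\bar j})$ against the cut probabilities $O(d\cdot\dist(x,y)/2^{\bar j})$ over the $O(\log\alpha)$ levels, with $\beta=\Theta(\eps'/(d\log\alpha))$, is the intended accounting (it is also what the paper redoes in \autoref{lem:xembedding}).

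There is, however, a genuine gap, and it sits exactly at the point you flagged as the crux. You take the bag of a cluster $C$ at level $\bar i$ to be $C$'s own $\beta 2^{\bar i}$-net, let $H$ be the union of cliques on these bags, and then assert that the net hierarchy makes consecutive portals of the climbing path co-bagged. It does not. The portal $u_{\bar j}$ is a point of the (coarse) net of the level-$\bar j$ cluster of $x$, and nothing forces it to lie in the level-$(\bar j-1)$ cluster of $x$: the parent's net may sit entirely inside a sibling child, in which case no bag contains both $u_{\bar j-1}$ and $u_{\bar j}$ --- the parent's bag is too coarse to contain $u_{\bar j-1}$, and the only other bags containing $u_{\bar j}$ belong to a different child's subtree. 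In the extreme, a vertex that never enters any higher-level net occurs only in its singleton leaf bag and is \emph{isolated} in the graph you defined, so the distortion claim fails outright for that $H$. The standard repair --- and what Talwar actually does, and what the paper's own informal description (``each bag \ldots consists of the net points of its child clusters'') and its later arguments in \autoref{lem:nearby} and \autoref{lem:xembedding} implicitly use --- is to let the bag of a level-$\bar i$ cluster be the \emph{union of the nets of its children}. By your hierarchy this bag contains the cluster's own net, so properties (i)--(iii) survive in the intended sense; consecutive portals $u_{\bar j-1}$ and $u_{\bar j}$ then both lie in the bag of the level-$\bar j$ cluster, so the required edges exist; and the treewidth bound is unharmed, since a cluster has only $2^{O(d)}$ children (a packing argument via \autoref{lem:dd_bound} applied to the child ``centers''), giving bag size $2^{O(d)}\cdot(1/\beta)^{O(d)}=(1/\beta)^{O(d)}$. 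One should also recheck condition \eqref{item:tw-vertices} of \autoref{dfn:treewidth} for the enlarged bags, which still holds because the bags containing a vertex $v$ are the parents of the contiguous chain of clusters whose nets contain $v$. With this correction the rest of your argument goes through as written.
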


For every bag $b$ in $D$, the graph $H$ contains a complete graph on the nodes 
in $b$. The $\beta 2^{\bar{i}}$-net in each bag serves as a set of 
\emph{portals}, through which connections leaving the cluster are routed, 
analogous to those in~\cite{arora2003survey}. The bound on the stretch follows 
from \autoref{lem:splittree} (see~\cite{talwar2004bypassing} for the details). 
The bound on the treewidth follows from the fact that a $\beta 2^{\bar i}$-net 
in a cluster with diameter at most~$2^{\bar i+1}$ has aspect ratio~$O(1/\beta)$ 
and the following property of low doubling dimension metrics.

\begin{lem}[\cite{gupta2003bounded}]\label{lem:dd_bound}
Let $(X,\dist)$ be a metric with doubling dimension $d$ and $Y\subseteq X$ be a 
set with aspect ratio~$\alpha$. Then $|Y|\leq 2^{d\lceil\log_2\alpha\rceil}$.
\end{lem}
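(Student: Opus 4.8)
The statement is the classical packing bound of \citet{gupta2003bounded}, and the plan is a direct ball-covering argument. Let $\rho$ and $\Delta$ be the smallest and largest distances realised between pairs of points of $Y$, so that $\alpha=\Delta/\rho$. Fix any $y_0\in Y$. Every point of $Y$ is within distance $\Delta$ of $y_0$, so $Y$ is contained in the single ball $B_\Delta(y_0)$.

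Next I would iterate the defining property of the doubling dimension: since every ball of radius $2r$ is covered by $2^d$ balls of radius $r$, the ball $B_\Delta(y_0)$ is covered by $2^d$ balls of radius $\Delta/2$, each of those by $2^d$ balls of radius $\Delta/4$, and so on, so that after $t$ rounds $Y$ is covered by at most $2^{td}$ balls of radius $\Delta/2^t$. Choosing $t=\lceil\log_2\alpha\rceil$ drives the common radius of these balls down to at most $\Delta/\alpha=\rho$, hence (with a pinch of care about the boundary, using that the perturbed distances of $G$ break ties so the relevant inequality can be taken strict) each ball of the cover contains at most one point of the $\rho$-separated set $Y$. Summing the trivial bound of $1$ over the at most $2^{td}$ balls gives $|Y|\le 2^{td}=2^{d\lceil\log_2\alpha\rceil}$.

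The argument presents no real obstacle — it is essentially a counting exercise once $Y$ is trapped in one ball and the doubling property is unrolled. The only place demanding any attention is the interplay between the number $t$ of halving rounds and $\lceil\log_2\alpha\rceil$, namely making sure the terminal balls are small enough to be forced to contain a single point of $Y$; this is a routine computation with the ceiling function and contributes nothing of substance to the proof.
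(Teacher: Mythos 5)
The paper does not prove this lemma at all --- it is imported verbatim from \cite{gupta2003bounded} --- so the comparison is with the standard packing argument, which is indeed the route you take. Your unrolling of the doubling property is fine, but the final counting step has a genuine gap: after $t=\lceil\log_2\alpha\rceil$ halvings the covering balls have radius at most $\Delta/\alpha=\rho$, and a ball of radius $\rho$ does \emph{not} contain at most one point of a set whose minimum pairwise distance is $\rho$. Such a ball has diameter $2\rho$, so it can hold two points of $Y$ at distance anywhere in $[\rho,2\rho]$; the covering centers need not lie in $Y$, so no separation argument rescues radius-$\rho$ balls. Your aside about tie-breaking via perturbed edge lengths is beside the point for two reasons: the obstruction is the factor of two between radius and diameter, not equality cases, and the lemma is a statement about an abstract metric $(X,\dist)$, not about the particular graph $G$.

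The fix is routine but changes the constant: you must continue until the radius is strictly below $\rho/2$, i.e.\ take roughly $t=\lceil\log_2\alpha\rceil+1$ (or $+2$ to be safe with equalities) halving rounds, which yields $|Y|\leq 2^{d(\lceil\log_2\alpha\rceil+1)}$ rather than the literal bound $2^{d\lceil\log_2\alpha\rceil}$. In fact the bound as printed cannot be proved exactly --- two points at equal distance have $\alpha=1$ and would force $|Y|\leq 1$ --- so the statement should be read, as is common in this literature, up to an additive constant in the exponent. Since the paper only uses the lemma inside asymptotic exponents such as $O((d/\eps)^d)$ in \autoref{lem:nr_children} and the treewidth bounds, the extra factor $2^{d}$ is immaterial there; but as a standalone proof of the stated inequality, your terminal step fails and the corrected version proves a slightly weaker (and the only correct) form.
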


To analyze the distortion of the embedding $H_X$, we rely on the following 
useful fact that relates properties of hubs in $X_T$ and their 
representatives in~$Y_T$. 
Recall that a cluster $C_X$ of $X_T$ is formed from a cluster $C_Y$ 
of $Y_T$ by expanding each hub $h\in C_Y$ into all vertices in $X_T$ that $h$ 
represents, and a bag $b_X$ of the tree decomposition $D_X$ of $X_T$ is formed 
by the same procedure from a bag $b_Y$ of the tree decomposition $D_Y$ of~$Y_T$. 
For such pairs of clusters and bags we obtain the following.

\begin{lem}\label{lem:netstretch}
If $b_Y$ is a $\delta$-net of $C_Y$ for some $\delta$, then $b_X$ is a 
$2\delta$-cover of $C_X$, i.e., for each $h_X \in C_X$ there is a $h_Y \in b_X$ 
such that $\dist_G(h_X,h_Y) \le 2\delta$.
\end{lem}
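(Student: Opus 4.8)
The plan is to check the $2\delta$-covering property one point at a time. Fix an arbitrary $h_X\in C_X$. Because $C_X$ is obtained from $C_Y$ by replacing every hub with the approximate core hubs it represents, $h_X$ is represented by some hub $v\in C_Y$; write $T'$ for the child town of $T$ containing both $v$ and $h_X$, so that $v$ is the unique element of $Y_T$ lying in $T'$. I would then split on whether $v$ itself lies in $b_Y$. If $v\in b_Y$, then forming $b_X$ from $b_Y$ inserts all of $X_T\cap T'$ into $b_X$, so in particular $h_X\in b_X$ and $h_X$ is covered by itself at distance $0\le 2\delta$.

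The interesting case is $v\notin b_Y$. Since $b_Y$ is in particular a $\delta$-cover of $C_Y$ and $v\in C_Y$, there is some $v'\in b_Y$ with $\dist_G(v,v')\le\delta$, and necessarily $v'\ne v$. The crucial point is that distinct elements of $Y_T$ lie in distinct child towns of $T$, so $v$ and $v'$ belong to different child towns $T'$ and $T''$; by \autoref{lem:laminar-towns} these are disjoint, i.e.\ $v'\in T''\subseteq V\setminus T'$. Letting $i'$ be the level of $T'$, \autoref{lem:townproperties} gives $\dist_G(v,v')\ge\dist_G(T',V\setminus T')>r_{i'}\ge\diam(T')$, so that $\diam(T')<\delta$.

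Now $h_X$ and $v$ both lie in $T'$, hence $\dist_G(h_X,v)\le\diam(T')<\delta$; and $v'\in b_Y$, being a representative, represents itself and therefore lies in $b_X$. The triangle inequality yields $\dist_G(h_X,v')\le\dist_G(h_X,v)+\dist_G(v,v')<\delta+\delta=2\delta$, so $h_X$ is covered by $v'\in b_X$ within $2\delta$. As $h_X\in C_X$ was arbitrary, $b_X$ is a $2\delta$-cover of $C_X$. The only step that is more than unwinding definitions is the middle one: one has to observe that $v$ can fail to lie in $b_Y$ only because some \emph{other} representative $v'$ is within $\delta$ of it, and that the strong separation between distinct child towns (\autoref{lem:townproperties}) then forces $T'$ to have diameter below $\delta$ — precisely the slack needed to absorb the detour from $h_X$ to $v$ while staying within a factor of two.
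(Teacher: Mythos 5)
Your proposal is correct and follows essentially the same argument as the paper: the case split on whether the representative lies in $b_Y$, the use of the covering property of $b_Y$ to find a nearby representative $v'$ in a different child town, the observation from \autoref{lem:townproperties} that the separation between child towns exceeds the diameter of $T'$, and the concluding triangle inequality. The only cosmetic difference is that the paper phrases the first case as $h_X\in b_X$ and picks $v'$ as the closest net point to $v$, which changes nothing.
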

\begin{proof}
Let $h_X\in C_X$. If $h_X \in b_X$, we are done. If not, let $h_Y$ be $h_X$'s 
representative in $Y_T$, and let $T'$ be the child town of $T$ for which 
$h_X,h_Y\in X_T\cap T'$.
We obtained~$b_X$ by expanding each $h\in b_Y$ 
into all vertices it represents, so $h_X\notin b_X$ implies
$h_Y\notin b_Y$. Let $h'_Y \in b_Y$ be the closest vertex in $b_Y$ to $h_Y$.
The set $b_Y$ is a $\delta$-net of $C_Y$, so $\dist_G(h_Y,h'_Y) \le \delta$, but
$h'_Y \notin T'$, since $h'_Y\neq h_Y$, and each 
town contains at most one representative. By 
\autoref{lem:townproperties}, $\diam(T') \le \dist_G(T',V\setminus T')$, so
$\dist_G(h_X,h_Y) \le \dist(h_Y,h'_Y)$, which 
means that $\dist_G(h_X,h'_Y) \le 2\delta$. Finally, $h'_Y \in b_X$, since $b_Y 
\subseteq b_X$.
\end{proof}

Another useful tool is given by the following lemma, which compares the 
separation probabilities of approximate core hubs and their representatives.

\begin{lem}\label{lem:sep-prob}
Let $u,v\in X_T$ be two hubs with respective representatives $u',v'\in Y_T$. If 
$u'\neq v'$, then the probability with which $u$ and $v$ are in distinct level 
$\bar i$ clusters is $O(d\cdot\dist_G(u,v)/2^{\bar i})$, where $d$ is the 
doubling dimension of $Y_T$.
\end{lem}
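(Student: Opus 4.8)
\textbf{Proof proposal for \autoref{lem:sep-prob}.}
The plan is to transfer the cut-probability bound from \autoref{lem:splittree}(\ref{item:cut-prob}), which holds for the split-tree of the metric $(Y_T,\dist_G)$, to the pair $u,v\in X_T$ by relating their separation event to that of their representatives $u',v'\in Y_T$. First I would recall that the split-tree of $X_T$ used by the algorithm is not computed directly on $X_T$: it is obtained (lines~\ref{hub-embedding1}--\ref{hub-embedding2}) from the split-tree of $Y_T$ by expanding each representative into all the approximate core hubs it represents. In particular, every level-$\bar i$ cluster of $X_T$ is exactly the image under this expansion of a level-$\bar i$ cluster of $Y_T$, and two hubs $u,v$ lie in the same level-$\bar i$ cluster of $X_T$ if and only if their representatives $u',v'$ lie in the same level-$\bar i$ cluster of $Y_T$. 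Hence the event ``$u,v$ separated at level $\bar i$ in the split-tree of $X_T$'' coincides with the event ``$u',v'$ separated at level $\bar i$ in the split-tree of $Y_T$''.

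Next I would apply \autoref{lem:splittree}(\ref{item:cut-prob}) to the metric $(Y_T,\dist_G)$ (whose doubling dimension is $d$ by \autoref{thm:doubling_dim}, since $Y_T\subseteq X_T$): since $u'\neq v'$ by hypothesis, the probability that $u'$ and $v'$ fall in distinct level-$\bar i$ clusters is $O(d\cdot\dist_G(u',v')/2^{\bar i})$. It remains to replace $\dist_G(u',v')$ by $\dist_G(u,v)$. For this I would use the triangle inequality together with the fact that each town contributes at most one representative to $Y_T$: if $u'\neq v'$ then $u$ and $u'$ lie in the same child town $T_u$ while $v'\notin T_u$ (and symmetrically), so by \autoref{lem:townproperties} $\dist_G(u,u')\le\diam(T_u)\le\dist_G(T_u,V\setminus T_u)\le\dist_G(u,v')$, and likewise $\dist_G(v,v')\le\dist_G(v,u')$. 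Combining these with the triangle inequality gives $\dist_G(u',v')\le\dist_G(u',u)+\dist_G(u,v)+\dist_G(v,v')\le 3\dist_G(u,v)$ (or a similar constant), which is absorbed into the $O(\cdot)$. This yields $\Pr[u,v\text{ separated at level }\bar i]=O(d\cdot\dist_G(u,v)/2^{\bar i})$, as claimed.

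I do not expect a serious obstacle here; the lemma is essentially a bookkeeping statement bridging the $Y_T$-metric to the $X_T$-metric. The one point that needs a little care is the clean identification of the separation events: I must make sure the expansion operation on clusters is compatible with the laminar/tree structure of the split-tree, so that ``same cluster at level $\bar i$'' really does pull back exactly. This is immediate from the construction in lines~\ref{hub-embedding1}--\ref{hub-embedding2}, since expansion is applied uniformly and vertex-wise at every level and preserves the underlying tree, but it is the step most prone to a hidden off-by-one or a subtle mismatch if one is careless about which metric the split-tree is built on. Once that identification is granted, the distance comparison via \autoref{lem:townproperties} is routine, exactly as in the proof of \autoref{lem:netstretch}.
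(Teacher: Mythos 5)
Your route is exactly the paper's: identify the level-$\bar i$ separation event for $u,v$ in the expanded split-tree of $X_T$ with the separation event for their representatives $u',v'$ in the split-tree of $Y_T$, apply \autoref{lem:splittree}~\eqref{item:cut-prob} to $(Y_T,\dist_G)$ (whose doubling dimension is $d$ since $Y_T\subseteq X_T$), and then absorb the factor relating $\dist_G(u',v')$ to $\dist_G(u,v)$ into the $O(\cdot)$. The identification of the events and the application of the cut-probability bound are fine.

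There is, however, a slip in the distance-comparison step. You bound $\dist_G(u,u')\le\diam(T_u)\le\dist_G(T_u,V\setminus T_u)\le\dist_G(u,v')$ and, symmetrically, $\dist_G(v,v')\le\dist_G(v,u')$, and then claim $\dist_G(u',v')\le\dist_G(u',u)+\dist_G(u,v)+\dist_G(v,v')\le 3\dist_G(u,v)$. The last inequality does not follow from what you established: writing $a=\dist_G(u,u')$, $b=\dist_G(v,v')$, $D=\dist_G(u,v)$, your two bounds only give $a\le D+b$ and $b\le D+a$, which is circular and puts no bound on $a$ or $b$ in terms of $D$ alone. The repair is immediate and is what the paper does: compare against $v$ itself rather than $v'$. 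Since $u'\neq v'$ forces $u$ and $v$ into different child towns, $v\notin T_u$, so \autoref{lem:townproperties} gives $\dist_G(u,u')\le\diam(T_u)\le\dist_G(T_u,V\setminus T_u)\le\dist_G(u,v)$, i.e.\ $a\le D$, and symmetrically $b\le D$; then $\dist_G(u',v')\le a+D+b\le 3D$ as you wanted, and the rest of your argument goes through unchanged.
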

\begin{proof}
If the representatives $u'$ and $v'$ of $u$ and $v$ differ, then $u$ and $v$ 
must lie in different child towns $T'$ and $T''$ of $T$. By 
\autoref{lem:townproperties}, $\diam_G(T') < \dist_G(T',V\setminus T')\leq 
\dist_G(T',T'')$, so that $\dist_G(u,u') \le \dist_G(u,v)$, and similarly for 
$\dist_G(v,v')$. Hence $\dist_G(u',v') \le \dist_G(u',u) + \dist_G(u,v) + 
\dist_G(v,v') \le 3\cdot\dist_G(u,v)$. By \autoref{lem:splittree} 
\eqref{item:cut-prob}, the separation probability of $u'$ and $v'$ on level 
$\bar i$ is at most $O(d\cdot\dist_G(u',v')/2^{\bar i})$. Since $u$ and $v$ lie 
in different clusters if and only if their representatives do, the probability 
of $u$ and $v$ being separated is $O(d\cdot\dist_G(u,v)/2^{\bar i})$.
\end{proof}

The next lemma bounds the distortion of $H_X$. Its proof closely mirrors 
Talwar's proof of \autoref{thm:Talwar} (c.f.~\cite{talwar2004bypassing}).

\begin{lem}
\label{lem:xembedding}
If the embedding $H_Y$ of $(Y_T,\dist_G)$ is computed according to 
\autoref{thm:Talwar}, then the constructed embedding $H_X$ of $(X_T,\dist_G)$ 
has expected distortion $1+O(\eps')$.
\end{lem}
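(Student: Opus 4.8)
The plan is to prove \autoref{lem:xembedding} by retracing Talwar's analysis of \autoref{thm:Talwar}, substituting the two technical lemmas established just above (\autoref{lem:netstretch} and \autoref{lem:sep-prob}) at the two points where the clean structure of $Y_T$ is replaced by the expanded structure of $X_T$. Recall that in Talwar's argument the lower bound $\dist_G(u,v)\le\dist_{H_X}(u,v)$ is immediate because every edge of $H_X$ has length equal to the true distance in $G$; so the whole task is the upper bound on $\E[\dist_{H_X}(u,v)]$ for a fixed pair $u,v\in X_T$.

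First I would set up the standard telescoping argument: route the $u$--$v$ connection through portals level by level in the split-tree of $Y_T$ (which is also the tree underlying $D_X$). At each level $\bar i$ at which $u$ and $v$ are separated into distinct clusters, the connection must pass through a portal of the connecting bag; by \autoref{lem:netstretch}, the bag $b_X$ at level $\bar i$ is a $2\delta$-cover of its cluster with $\delta=\beta 2^{\bar i}$ (Talwar's net parameter), so the detour to reach a portal costs $O(\beta 2^{\bar i})$ rather than $O(\beta 2^{\bar i})$ with a worse constant — i.e. exactly a factor-$2$ loss in the per-level error, which is absorbed into the $O(\cdot)$. Summing the detour cost over all separated levels, weighted by the separation probability, and invoking \autoref{lem:sep-prob} in place of \autoref{lem:splittree}\eqref{item:cut-prob} (again losing only a constant factor, from the bound $\dist_G(u',v')\le 3\dist_G(u,v)$), gives a total expected additive error of
\[
\sum_{\bar i}O\!\left(\frac{d\cdot\dist_G(u,v)}{2^{\bar i}}\right)\cdot O(\beta 2^{\bar i}) = O(\beta d\log\alpha)\cdot\dist_G(u,v),
\]
where the number of levels is $\log_2\alpha+2$ by \autoref{lem:splittree}\eqref{item:levels}. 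Then I would plug in the choice $\beta=\Theta(\eps'/(d\log\alpha))$ from \autoref{thm:Talwar}, which makes this error $O(\eps')\dist_G(u,v)$, yielding $\E[\dist_{H_X}(u,v)]\le(1+O(\eps'))\dist_G(u,v)$ as claimed.

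The main obstacle — really the only non-routine point — is making sure that the route constructed in $H_X$ is actually a valid path, i.e. that consecutive portals at adjacent levels are joined by edges present in $H_X$. In Talwar's construction this is guaranteed by the net hierarchy (\autoref{thm:Talwar}\eqref{item:net-hierarchy}): a portal at level $\bar i+1$ also lies in the level-$\bar i$ bag, so it is connected by a complete-graph edge within that bag to the level-$\bar i$ portal. After expansion this hierarchy is inherited: if $h\in b_Y$ lies in the child bag, then all hubs it represents lie in the corresponding child of $b_X$, so the expanded nets still form a hierarchy, and each bag of $D_X$ carries a complete graph. I would state this observation explicitly (it is essentially immediate from how $D_X$ is built from $D_Y$) and then let the rest of the argument go through verbatim as in \cite{talwar2004bypassing}, with the two substitutions above accounting for the only constant-factor degradations. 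Finally I note that whether $u'=v'$ or not is handled uniformly: if $u'=v'$ then $u,v$ lie in the same child town $T'$, hence $\dist_G(u,v)\le\diam(T')<\dist_G(T',V\setminus T')$, and the connection stays inside $T'$'s portion and is handled at the bottom of the recursion, contributing nothing to the cross-level error; the interesting case is $u'\neq v'$, which is exactly the hypothesis of \autoref{lem:sep-prob}.
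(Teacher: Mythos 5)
Your proposal is correct and follows essentially the same route as the paper's proof: the standard level-by-level portal/parent telescoping from Talwar, with \autoref{lem:netstretch} supplying the $2\beta 2^{\bar i}$-cover bound in place of the net bound, \autoref{lem:sep-prob} supplying the separation probability after expansion, the inherited bag hierarchy guaranteeing the edges exist in $H_X$, and the degenerate case $u'=v'$ contributing no cross-level error. The only difference is cosmetic phrasing (the paper treats $u'=v'$ via a zero separation probability rather than "the bottom of the recursion"), which does not affect correctness.
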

\begin{proof}
Consider a cluster $C_Y$ on level $\bar i$ in the split-tree decomposition of 
$Y_T$ given by \autoref{lem:splittree}. For any $h\in C_Y$ the \emph{$\bar 
i$-parent} of $h$ is the closest vertex to $h$ in the bag $b_Y$ corresponding to 
$C_Y$. Since by \autoref{thm:Talwar} the bag $b_Y$ consists of a $\beta 2^{\bar 
i}$-net of $C_Y$, the distance between $h$ and its $\bar i$-parent is at most 
$\beta 2^{\bar i}$. Let $C_X$ be the cluster in $X_T$ formed by expanding each 
$h\in C_Y$ into all vertices in $X_T$ that $h$ represents, and let $b_X$ be the 
corresponding bag formed by the same procedure from $b_Y$. We define the $\bar 
i$-parent of a vertex $w\in C_X$ in the same way as for $C_Y$, i.e.\ it is the 
closest vertex to $w$ in~$b_X$. According to \autoref{lem:netstretch}, the 
distance from $w$ to its $\bar i$-parent is at most $2\beta 2^{\bar i}$.

For an arbitrary pair $u,v\in X_T$ we bound the distortion of their distance in 
$H_X$ by considering the path along the $\bar i$-parents of $u$ and $v$ for 
increasing values of $\bar i$. More concretely, since the bags of the tree 
decomposition $D_Y$ of $H_Y$ form a hierarchy by \autoref{thm:Talwar}, the same 
is true for the bags of the tree decomposition $D_X$ of $H_X$. Thus on the 
lowest level $\bar l$ of the split-tree decomposition, the $\bar l$-parent of a 
vertex $w$ is $w$ itself. We inductively define $v_{\bar l}=v$, $u_{\bar l}=u$, 
and $v_{\bar i}$ and $u_{\bar i}$ to be the $\bar i$-parent of $v_{\bar i-1}$ 
and $u_{\bar i-1}$, respectively, for any level $\bar i>\bar l$. Since the bags 
of $D_X$ form a hierarchy, for each level $\bar i>\bar l$ the edges $\{u_{\bar 
i-1},u_{\bar i}\}$ and $\{v_{\bar i-1},v_{\bar i}\}$ exist in $H_X$. Thus the 
distance from $u_{\bar i-1}$ to $u_{\bar i}$ and from $v_{\bar i-1}$ to $v_{\bar 
i}$ is at most $2\beta 2^{\bar i}$ in $H_X$. Now, let $\bar j$ be the lowest 
level at which $u$ and $v$ lie in the same cluster of $X_T$. In particular, the 
$\bar j$-parents $u_{\bar j}$ and $v_{\bar j}$ lie in the same bag of $D_X$, and 
so there is an edge $\{u_{\bar j},v_{\bar j}\}$ in $H_X$. We next bound the 
expected length of the path $P=(u=u_{\bar l},u_{\bar l+1},\ldots,u_{\bar 
j},v_{\bar j},v_{\bar j-1},\ldots,v_{\bar l}=v)$ in $H_X$ in terms of 
$\dist_G(u,v)$.

For this we need to bound the probability with which any pair of $\bar 
i$-parents $u_{\bar i}$ and $v_{\bar i}$ lie in different clusters of $X_T$ on 
level $\bar i$. Note that $u$ and $v$ always lie in the same cluster as their 
respective $\bar i$-parent, and so $u_{\bar i}$ and $v_{\bar i}$ lie in 
different clusters of $X_T$ on level $\bar i$ if and only if $u$ and $v$ lie in 
different clusters of $X_T$ on the same level. \autoref{lem:splittree} gives a 
bound for the probability with which representatives lie in different clusters 
of $Y_T$ in terms of the distance between them. Let $u',v'\in Y_T$ be the 
respective representatives of $u$ and $v$. If $u'=v'$ then obviously 
$\dist_G(u',v')=0$. Otherwise, $u'$ and $v'$ lie in different child towns of 
$T$. By \autoref{lem:sep-prob}, this means that $u$ and $v$ lie in different 
clusters on level $\bar i$ with probability $O(d\cdot\dist_G(u,v)/2^{\bar i})$. 
Let $A_{\bar i}$ be the indicator variable that is $1$ if $u_{\bar i}$ and 
$v_{\bar i}$ lie in different clusters of $X_T$ on level $\bar i$, and $0$ 
otherwise, so that $\Pr[A_{\bar i}=1]=O(d\cdot\dist_G(u,v)/2^{\bar i})$.

Consider the subpaths of $P$ from $u$ to $u_{\bar j}$ and $v$ to $v_{\bar j}$. 
The length of each such path is at most $\sum_{\bar i} 2\beta 2^{\bar i+1} 
A_{\bar i}$. Accordingly, the edge $\{u_{\bar j},v_{\bar j}\}$ has length at 
most $\dist_G(u,v)+2\sum_{\bar i} 2\beta 2^{\bar i+1} A_{\bar i}$. Since there 
are at most $\log_2\alpha$ levels in the split-tree decomposition, we can bound 
the expected length of $P$ by
\[
\dist_G(u,v)+4\sum_{\bar i=\bar l}^{\log_2\alpha} 2\beta 2^{\bar i+1}\cdot 
O(d\cdot\dist_G(u,v)/2^{\bar i})=\\
(1+O(\beta d\log\alpha))\dist_G(u,v)=
(1+O(\eps'))\dist_G(u,v),
\]
where we use that $\beta = \Theta(\eps'/(d\log \alpha))$ by 
\autoref{thm:Talwar}.
\end{proof}

\subsection{The distortion of the embedding of the graph}

We now turn to proving \autoref{lem:single-stretch}. For this, throughout this 
section, we focus on a town $T$ of the towns decomposition $\mc{T}$. We further 
let $T'$ be some child town of $T$, and we let the distance $r$ between $T'$ and 
the closest sibling town be in the interval $(r_i,r_{i+1}]$. Further, we define 
$b$ to be the connecting bag of $T'$ (c.f.~\autoref{alg_embedding}), and let $C$ 
be the corresponding cluster in the split-tree decomposition of the approximate 
core hubs $X_T$.

Given vertex $v \in T' \subseteq T$, and some core hub $h_X \in X_T$, the goal 
is to bound their expected distance in the constructed embedding $H$ in terms of 
their distance in the input graph $G$. If $H$ contains an edge between $v$ and 
$h_X$ then we are of course immediately done, but this may not be the case. For 
example, in the construction of the embedding, we add direct links between 
vertices of $T'$ and members (i.e., {\em net points}) of the connecting bag $b$, 
but $h_X$ may not be a member of $b$. We first consider this issue and show 
that, even if $h_X \in C\setminus b$, then $b$ at least contains a net point 
close to $h_X$.

\begin{lem}\label{lem:nearby}
  For any approximate core hub $h\in X_T \cap C$, the bag $b$ contains a net
  point $w$ such that $\dist_H(h,w)=O(\eps r_i)$.
\end{lem}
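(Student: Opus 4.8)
The plan is to locate the net point $w$ by following the split-tree hierarchy downward from $b$ and use the fact that the cluster $C$ at level $\bar{l}$ is covered by its net to a precision that is small compared to $r_i$. Recall that $\bar{l}=\min\{\bar{j},\bar{i}+\lceil\log_2(1/\eps)+\log_2 d\rceil\}$ where $\bar{i}=\lceil\log_2 r_i\rceil$, so $2^{\bar{l}}=O(r_i d/\eps)$. By \autoref{thm:Talwar}, the bag $b_Y$ of $D_Y$ underlying $b$ is a $\beta 2^{\bar{l}}$-net of the corresponding cluster $C_Y$ of $Y_T$, with $\beta=\Theta(\eps'/(d\log\alpha))$ and $\eps'=\eps^2$. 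Applying \autoref{lem:netstretch}, the expanded bag $b=b_X$ is a $2\beta 2^{\bar{l}}$-cover of $C=C_X$. Thus for the given $h\in X_T\cap C$ there is a net point $w\in b$ with $\dist_G(h,w)\le 2\beta 2^{\bar{l}}$.

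It then remains to check that $2\beta 2^{\bar{l}}=O(\eps r_i)$. Substituting the bounds, $2\beta 2^{\bar{l}} = O\!\left(\frac{\eps^2}{d\log\alpha}\right)\cdot O\!\left(\frac{r_i d}{\eps}\right) = O\!\left(\frac{\eps r_i}{\log\alpha}\right) = O(\eps r_i)$, as needed. One small subtlety to address: $\dist_H(h,w)$ versus $\dist_G(h,w)$. Because $h$ and $w$ both lie in the cluster $C$ at level $\bar{l}$, they lie in a common bag of $D_X$ (namely $b$) once we recall that the bags of $D_X$ form a hierarchy and $h\in C\setminus b$ is connected upward to its $\bar{l}$-parent in $b$; more directly, since $h,w\in C$ and $b$ is the bag of $C$, the edge $\{h,w\}$ (or a two-hop path through $h$'s $\bar{l}$-parent) exists in $H_X\subseteq H$ with total length $O(\beta 2^{\bar{l}})$, giving $\dist_H(h,w)\le \dist_G(h,w)+O(\beta 2^{\bar{l}})=O(\eps r_i)$.

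I do not expect a serious obstacle here; the only point requiring care is bookkeeping the three independent roles of $\eps$ (the target distortion $\eps'=\eps^2$ of Talwar's embedding inside the cluster, the $\log_2(1/\eps)+\log_2 d$ offset determining $\bar{l}$, and the $\eps$-scale adjustments elsewhere) so that the $d$'s cancel and the surviving factor is $O(\eps/\log\alpha)\le O(\eps)$. The rest is a direct combination of \autoref{thm:Talwar}, \autoref{lem:netstretch}, and the definition of $\bar{l}$ from \autoref{alg_embedding}.
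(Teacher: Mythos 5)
Your bound $\dist_G(h,w)\le 2\beta 2^{\bar l}=O(\eps r_i)$ is exactly the quantitative heart of the paper's argument (same use of \autoref{lem:netstretch}, the definition of $\bar l$, and $\beta=\Theta(\eps^2/(d\log\alpha))$), but the step where you convert it into a bound on $\dist_H(h,w)$ has a genuine gap, and that conversion is the actual content of the lemma. You assert that the edge $\{h,w\}$, or a two-hop path through $h$'s $\bar l$-parent, exists in $H_X$. Neither is guaranteed: $H_X$ contains only the complete graphs on the vertices of each bag, $h\in C\setminus b$ by assumption, and the hierarchy of \autoref{thm:Talwar}~\eqref{item:net-hierarchy} only propagates bag membership \emph{downward}, so $h$ need not share any bag with $w$ or with its $\bar l$-parent. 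Saying ``the edge exists with total length $O(\beta 2^{\bar l})$'' begs precisely the question the lemma is answering, namely whether $H$ (not $G$) contains a short $h$--$b$ connection.

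The paper closes this by building the path in $H_X$ level by level: starting from $h$, repeatedly move to the closest net point of the bag one level up (each such hop lies inside a common bag, hence is an edge of $H_X$, and has length at most $2\beta 2^{\bar q}$ at level $\bar q$ by \autoref{lem:netstretch}), until reaching some vertex $w'\in b$; the total length is the geometric sum $\sum_{\bar q\le\bar l}2\beta 2^{\bar q}=O(\beta 2^{\bar l})$. Then one intra-bag edge $\{w',w\}$ of length $\dist_G(w',w)\le\dist_G(w',h)+\dist_G(h,w)=O(\beta 2^{\bar l})$ finishes the route, giving $\dist_H(h,w)=O(\beta 2^{\bar l})=O(\eps r_i)$ as in your calculation. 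So your final numbers are right, but you need this level-by-level climbing argument (or an equivalent justification of which edges of $H_X$ you are using) in place of the asserted one- or two-hop shortcut.
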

\begin{proof}
Let $\bar l$ be the level of $b$ in the tree decomposition~$D_X$, which by 
\autoref{alg_embedding} is at most $\bar i+\log_{2}(d/\eps)$, where $\bar 
i=\lceil\log_2 r_i\rceil$. 
If $h\in b$ there is nothing to show.
By \eqref{item:net-hierarchy} of \autoref{thm:Talwar}, the bags of $D_Y$ form a 
hierarchy, which by construction of $D_X$ means that the bags of $D_X$ do too. 
Thus $h\notin b$ is a vertex in a bag on some level below $\bar l$, and so we 
can reach some vertex of $b$ from $h$ in $H_X$ by starting at the bag containing 
$h$ and following the edges to higher level bags until we reach $b$. More 
concretely, the bags computed for the tree decomposition $D_Y$ of the 
representative hubs~$Y_T$ contain $\beta 2^{\bar j}$-nets of the corresponding 
clusters by \autoref{thm:Talwar}. Hence by \autoref{lem:netstretch}, the bags of 
$D_X$ contain $2\beta 2^{\bar j}$-covers of the clusters of $X_T$. Thus there is 
a path in $H_X$ from $h$ to some vertex $w$ of the bag $b$ that traverses the 
net points of the bags up the levels until reaching $\bar l$, by always moving 
to the closest net-point at the next level. The length of this path is at most 
$\sum_{\bar j=1}^{\bar l}2\beta 2^{\bar j}=O(\beta 2^{\bar l})=O(\beta 
dr_i/\eps)$, since $2^{\bar l}=O(dr_i/\eps)$. Because 
$\beta=O(\eps'/(d\log\alpha))$ by \autoref{thm:Talwar} and $\eps'=\eps^2$, this 
bound simplifies to~$O(\eps r_i)$, which also bounds $\dist_H(h,w)$.
\end{proof}

\begin{wrapfigure}[10]{R}{0.43\textwidth}
\centering{\includegraphics[width=0.42\textwidth]{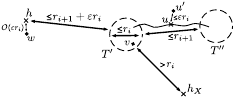}}
\caption{\label{fig:net-point} The net point $w$ lies in the connecting bag $b$ 
of $T'$, and $h$ lies in the corresponding cluster $C$. Note that $v$ may be 
closer to $h_X$ than to $h$.}
\end{wrapfigure}

The above lemma provides a vertex $w$ of the connecting bag $b$ of $T'$ through 
which we can connect to a hub $h_X$, if $h_X\in C$. In case $h_X$ lies outside 
of $C$ however, as we will see the following lemma provides such a vertex in $b$ 
to connect to $h_X$.

\begin{lem}\label{lem:detour}
For any $v\in T'$ and approximate core hub $h_X\in X_T\setminus T'$, the 
connecting bag $b$ of $T'$ contains a vertex $w$ such that 
$\dist_G(v,w)=O(\dist_G(v,h_X))$ and $\dist_G(v,w)=O(r_{i})$.
\end{lem}
\begin{proof}
Recall that, by our choice in \autoref{alg_embedding}, cluster $C$ corresponding 
to connecting bag $b$ of $T'$ contains the closest hub $h\in X_T$ to~$T'$. By 
\autoref{lem:nearby}, there exists $w\in b$ with 
$\dist_G(h,w)\leq\dist_H(h,w)=O(\eps r_i)$ (cf.\ \autoref{fig:net-point}). As 
by triangle inequality $\dist_G(v,w)\leq\dist_G(v,h)+\dist_G(h,w)$, it remains 
to show that $\dist_G(v,h)=O(r_i)$ in order to prove $\dist_G(v,w)=O(r_i)$, if 
$\eps$ tends to zero. By \autoref{lem:core-hub} there is a core hub $u$ of $T$ 
on level $i$, which lies on the shortest path between $T'$ and $T''$, the closest
sibling town to $T'$, and thus $u$ is at most as far from $T'$ as any vertex in $T''$.
Hence $\dist_G(T',u)\leq r_{i+1}$, since we assumed that the distance~$r$ 
between $T'$ and $T''$ lies in the interval~$(r_i,r_{i+1}]$. By 
\autoref{thm:doubling_dim} there is an approximate core hub $u'\in X_T$ for 
which $\dist_G(u,u')\leq\eps r_i$.  Hence the closest approximate core hub~$h$ 
is at distance at most $r_{i+1}+\eps r_i$ from~$T'$. From 
\autoref{lem:townproperties} it follows that every town on level at least $i+1$ 
has distance more than $r_{i+1}$ to any other town. Since the distance $r$ from 
$T'$ to $T''$ is at most $r_{i+1}$, the level of $T'$ is at most $i$.  Hence the 
same lemma also implies that the diameter of $T'$ is at most~$r_i$, and thus 
$\dist_G(v,h)\leq\diam(T')+\dist_G(T',h)\leq r_{i+1}+(1+\eps)r_i=O(r_i)$, since 
$r_{i+1}/r_i = c$ is constant and we assume that $\eps$ tends to zero. This 
implies $\dist_G(v,w)=O(r_{i})$ as claimed. Note that since $h_X$ lies outside 
of~$T'$, $\dist_G(v,h_X)\ge\dist_G(T',V\setminus T')>r_i$ by 
\autoref{lem:townproperties}, which immediately implies the remaining bound 
$\dist_G(v,w)=O(\dist_G(v,h_X))$.
\end{proof}

So far we have identified vertices $w$ in the connecting bag $b$ through which 
we are able to connect to a hub $h_X$ from a vertex $v\in T'$ for the two cases 
when $h_X\in C$ and $h_X\notin C$. The next lemma provides a bound on the 
probability with which we need to consider each of these cases. Additionally it 
also bounds the distance from $v$ to $h_X$ in the former case.

\begin{lem}\label{lem:direct}
Let $h_X$ be an approximate core hub in $X_T$, and $v \in T'$, then
$\Pr[h_X\notin C] = O(\eps\cdot\dist_G(v,h_X)/r_i)$. In addition, 
$\dist_H(v,h_X)\leq \dist_G(v,h_X)+O(\eps r_i)$ if $h_X\in C$.
\end{lem}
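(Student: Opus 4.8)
The plan is to analyze the two claims separately, both hinging on the choice of the connecting-bag level $\bar l = \min\{\bar j, \bar i + \lceil\log_2(1/\eps)+\log_2 d\rceil\}$ made in \autoref{alg_embedding}, where $\bar i = \lceil\log_2 r_i\rceil$ so $2^{\bar l} = \Theta(d r_i/\eps)$ (unless we are capped at the top level $\bar j$). Recall that $C$ is the level-$\bar l$ cluster of the split-tree of $X_T$ that contains the closest approximate core hub $h$ to $T'$.

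For the probability bound, I would first observe that $h_X \notin C$ means $h_X$ and $h$ lie in different level-$\bar l$ clusters of $X_T$. To bound this I pass to representatives in $Y_T$: let $h_X', h'$ be the representatives of $h_X, h$. If $h_X' = h'$ then $h_X$ and $h$ are in the same child town, hence in the same cluster of $X_T$ on every level (since a cluster of $X_T$ is obtained by expanding a cluster of $Y_T$, and all hubs a representative stands for sit inside one child town, which by \autoref{lem:townproperties} has diameter smaller than its separation), so the event is impossible. Otherwise \autoref{lem:sep-prob} (or equivalently \autoref{lem:splittree}\eqref{item:cut-prob} applied to $h_X', h'$ after bounding $\dist_G(h_X', h') = O(\dist_G(h_X, h))$ exactly as in the proof of \autoref{lem:sep-prob}) gives separation probability $O(d\cdot\dist_G(h_X,h)/2^{\bar l})$. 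Now $\dist_G(h_X, h) \le \dist_G(h_X, v) + \dist_G(v, h)$, and $\dist_G(v, h) = O(r_i)$ by (the second half of) \autoref{lem:detour} when $h_X \notin T'$; when $h_X \in T'$ the claim is about $\dist_G(v,h_X)$ which is already $O(r_i)$ since $\diam(T') \le r_i$, so in all cases $\dist_G(h_X,h) = O(\dist_G(v,h_X) + r_i)$. Plugging $2^{\bar l} = \Theta(d r_i/\eps)$ yields probability $O(\eps(\dist_G(v,h_X)+r_i)/r_i)$; since $\dist_G(v, h_X) > r_i$ whenever $h_X \notin T'$ (again \autoref{lem:townproperties}), and when $h_X \in T'$ we have $h_X \in C$ deterministically so there is nothing to prove, the $r_i$ term is absorbed and we get $O(\eps\cdot\dist_G(v,h_X)/r_i)$. (The capped case $\bar l = \bar j$ only makes $C = X_T$, so $h_X \in C$ with probability $1$ and the bound is trivial.)

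For the second claim, suppose $h_X \in C$. Then by \autoref{lem:nearby} the connecting bag $b$ contains a net point $w$ with $\dist_H(h_X, w) = O(\eps r_i)$. In the construction of $H$ we added a direct edge between $v \in T'$ and every vertex of $b$ of length $\dist_G(v, w)$, so $\dist_H(v, h_X) \le \dist_G(v, w) + \dist_H(w, h_X) \le \dist_G(v, h_X) + \dist_G(h_X, w) + O(\eps r_i) = \dist_G(v, h_X) + O(\eps r_i)$, using the triangle inequality and $\dist_G(h_X, w) \le \dist_H(h_X, w)$.

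I expect the main obstacle to be the first part: one has to be careful that the "closest hub $h$" around which $C$ is centered is indeed within distance $O(r_i)$ of $v$ (so that the distance inside the separation-probability bound is $O(r_i + \dist_G(v,h_X))$ rather than something scale-dependent), and that the passage to $Y_T$-representatives does not lose more than a constant factor — both of which we get for free from \autoref{lem:detour} and the argument of \autoref{lem:sep-prob}, but the bookkeeping of which of $\dist_G(v,h_X)$, $r_i$, $\dist_G(v,h)$ dominates in each sub-case (whether $h_X$ is inside $T'$, outside $T'$, or we are at the top level) is where the proof needs care.
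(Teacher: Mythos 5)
Your proposal is correct and follows essentially the same route as the paper's proof: reduce $h_X\notin C$ to a cluster-separation event at level $\bar l$, rule out the cases $h_X\in T'$ and $\bar l=\bar j$, apply \autoref{lem:sep-prob} with $2^{\bar l}=\Theta(dr_i/\eps)$ and the $O(r_i)$ bound coming from \autoref{lem:detour}, and for $h_X\in C$ route through the net point given by \autoref{lem:nearby}. The only (immaterial) difference is that you take the cluster-defining hub $h$ as the reference point for the separation probability, while the paper uses the bag vertex $w$ from \autoref{lem:detour}; both yield $\dist_G(\cdot,h_X)=O(\dist_G(v,h_X))$ and hence the same bound.
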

\begin{proof}
If $h_X\in T'$ then $h_X\in C$, since by \autoref{alg_embedding} the cluster $C$ 
contains the closest approximate core hub to $T'$ and all hubs of $X_T$ that are 
represented by the same hub of $Y_T\cap C$ (i.e.\ that are of the same child 
town) are contained in $C$. Hence if $h_X\notin C$ then $h_X\notin T'$. Consider 
the vertex $w\in b$ for which $\dist_G(v,w)=O(\dist_G(v,h_X))$, which now exists 
due to \autoref{lem:detour}. The hub $h_X$ is in $C$ if and only if $w$ and 
$h_X$ are in the same cluster on the level~$\bar l$ of~$C$. If the level $\bar 
l$ of the cluster $C$ is the level $\bar j$ of the root of $D_X$, then $C$ 
contains all vertices of $T$ including~$h_X$ and $w$, and so if $h_X\notin C$ 
then $\bar l\neq\bar j$. If $w$ and $h_X$ have the same representative, they 
will be in the same cluster by \autoref{alg_embedding}, so that if $h_X\notin C$ 
then $w$ and $h_X$ have different representatives in $Y_T$. 

By these observations, the probability with which $w$ and $h_X$ lie in different 
clusters is $O(d\cdot\dist_G(w,h_X)/2^{\bar l})$ using \autoref{lem:sep-prob}, 
which in turn can be bounded by $O(\eps\cdot\dist_G(w,h_X)/r_i)$, as $2^{\bar 
l}=\Theta(dr_i/\eps)$ by \autoref{alg_embedding} whenever $\bar l\neq\bar j$. 
Upper bounding $\dist_G(w,h_X)$ in terms of 
$\dist_G(w,v)+\dist_G(v,h_X)=O(\dist_G(v,h_X))$ we obtain $\Pr[h_X\notin 
C]=O(\eps\cdot\dist_G(v,h_X)/r_i)$.

To bound the distance if $h_X\in C$, by \autoref{lem:nearby} we know that there 
is a vertex $h_b\in b$ such that $\dist_H(h_b,h_X)=O(\eps r_i)$, and $v$ has an 
edge in $H$ to $h_b$. Therefore 
$\dist_H(v,h_X)\leq\dist_H(v,h_b)+\dist_H(h_b,h_X)=\dist_G(v,h_b)+O(\eps r_i)$. 
Since $\dist_G(h_b,h_X)\leq\dist_H(h_b,h_X)$, we can upper bound 
$\dist_G(v,h_b)$ by $\dist_G(v,h_X)+\dist_H(h_X,h_b)$, which proves the claim.
\end{proof}

\autoref{lem:direct} provides a bound on the distance between vertices
of $T'$ and approximate core hubs in~$C$.  We also need to bound the
distance between vertices of $T'$ and core hubs of $T$ that are not in~$C$. The 
following lemma will be useful in this endeavour.

\begin{lem}
\label{lem:conditionalexpectation}
Let $H_X$ be the probabilistic embedding of $(X_T,\dist_G)$ with expected 
distortion $1+O(\eps')$ given by \autoref{lem:xembedding}. Let $x,y \in X_T$, 
and let $C$ be a cluster in the randomized split-tree decomposition containing 
$x$. Then
$\E[\dist_{H_X}(x,y) \mid y \notin C] \le (1 + O(\eps')/\Pr[y \notin 
C])\dist_G(x,y).$
\end{lem}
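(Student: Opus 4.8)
The plan is to leverage two facts about the embedding $H_X$: it is non-contractive (every $\dist' \in \mc{D}$ satisfies $\dist_G(x,y) \le \dist'(x,y)$ by \autoref{dfn:embedding}), and it has expected distortion $1 + O(\eps')$ by \autoref{lem:xembedding}, i.e.\ $\E[\dist_{H_X}(x,y)] \le (1+O(\eps'))\dist_G(x,y)$. The whole argument is then just a decomposition of this expectation according to the event $\{y \in C\}$ versus $\{y \notin C\}$, followed by bounding the ``wrong'' branch from below using non-contractivity.

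Concretely, I would first write
\[
\E[\dist_{H_X}(x,y)] = \E[\dist_{H_X}(x,y) \mid y \notin C]\cdot\Pr[y \notin C] + \E[\dist_{H_X}(x,y) \mid y \in C]\cdot\Pr[y \in C].
\]
Since $\dist_{H_X}(x,y) \ge \dist_G(x,y)$ pointwise over the randomness, the conditional expectation $\E[\dist_{H_X}(x,y)\mid y\in C]$ is at least $\dist_G(x,y)$, so the second term is at least $\dist_G(x,y)\Pr[y\in C]$. Rearranging and applying the distortion bound of \autoref{lem:xembedding} gives
\[
\E[\dist_{H_X}(x,y) \mid y \notin C]\cdot\Pr[y\notin C] \le (1+O(\eps'))\dist_G(x,y) - \dist_G(x,y)\Pr[y\in C].
\]
Substituting $\Pr[y \in C] = 1 - \Pr[y \notin C]$, the right-hand side simplifies to $\dist_G(x,y)\bigl(O(\eps') + \Pr[y\notin C]\bigr)$; dividing through by $\Pr[y \notin C]$ yields exactly the claimed bound $\E[\dist_{H_X}(x,y)\mid y\notin C] \le (1 + O(\eps')/\Pr[y\notin C])\dist_G(x,y)$.

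There is essentially no obstacle here beyond spotting the right move: the statement looks delicate because the conditioning event may have tiny probability (which is precisely why the $1/\Pr[y\notin C]$ factor is unavoidable), but once one realizes that the complementary event contributes at least its ``fair share'' $\dist_G(x,y)\Pr[y\in C]$ to the total expectation by non-contractivity, everything is a one-line rearrangement. The only minor care needed is to note the inequality is vacuous (or trivially true) when $\Pr[y\notin C] = 0$, in which case the conditional expectation is undefined and the statement carries no content; for $\Pr[y\notin C]>0$ the division is legitimate. No use of the split-tree structure, \autoref{lem:splittree}, or the doubling dimension is required — it is purely a property of any non-contractive embedding with bounded expected distortion.
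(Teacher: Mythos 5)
Your proposal is correct and follows essentially the same argument as the paper: decompose $\E[\dist_{H_X}(x,y)]$ over the events $y\in C$ and $y\notin C$, lower-bound the conditional expectation on $\{y\in C\}$ by $\dist_G(x,y)$ using the non-contractivity (domination) of the embedding, and rearrange using the distortion bound of \autoref{lem:xembedding}. The paper's proof is the same one-line rearrangement, so there is nothing further to reconcile.
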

\begin{proof}
  By \autoref{lem:xembedding},
  the expected distance between $x$ and $y$ in $H$ is at most
  $(1+O(\eps'))$ times their distance in metric $(X_T,\dist_G)$, and
  hence 
  \begin{align*}
    \E[\dist_{H_X}(x,y)] & =  \Pr[y \notin C]\E[\dist_{H_X}(x,y) \mid y \notin 
C] + \Pr[y \in C] \E[\dist_{H_X}(x,y) \mid y \in C]\\
                     & \leq (1 + O(\eps'))\dist_G(x,y).
  \end{align*}
  Embedding $H_X$ dominates $(X_T,\dist_G)$, and hence 
  $\E[\dist_{H_X}(x,y) \mid y \in C] \geq \dist_G(x,y)$. The inequality above
  therefore implies that 
  \[ \Pr[y \not\in C] \E[\dist_{H_X}(x,y) \mid y \not\in C] + (1-\Pr[y
  \not\in C]) \dist_G(x,y) \leq (1+ O(\eps'))\dist_G(x,y). \]
  Rearranging,
  $\Pr[y \notin C](\E[\dist_{H_X}(x,y) \mid y \notin C] - \dist_G(x,y)) \le
  O(\eps')\dist_G(x,y)$, and
\[
\E[\dist_{H_X}(x,y) \mid y \notin C] \le \left(1 + \frac{O(\eps')}{\Pr[y \notin
    C]}\right) \dist_G(x,y) \; .
\]
\end{proof}

We are now ready to bound the distance between a vertex $v \in T$ and any core 
hub in $X_T$, given the tools of the above lemmas.

\begin{proof}[Proof of \autoref{lem:single-stretch}.]
Let $C$ be the cluster corresponding to the connecting bag $b$ of $T'$. We 
bound $\E[\dist_H(v,h_X)]$ in terms of the conditional expected 
values $\E[\dist_H(v,h_X) \mid h_X\in C]$ and $\E[\dist_H(v,h_X) \mid h_X\notin 
C]$. 
If $h_X\in C$ we get a (deterministic) bound on the distance between $v$ and 
$h_X$ from \autoref{lem:direct}. Hence $\E[\dist_H(v,h_X) \mid h_X\in C]\leq 
\dist_G(v,h_X)+O(\eps r_i)$.

If $h_X \in T'$ then $h_X\in C$, since $C$ contains the closest hub to $T'$ and 
all hubs of $X_T$ in the same child town of $T$ end up in the same cluster after 
expanding all hubs of $Y_T$ into the hubs of $X_T$ that they represent. Hence 
if $h_X\notin C$ then $h_X\notin T'$, and by \autoref{lem:detour} there is a 
vertex $w\in b$ for which $\dist_G(v,w)=O(\dist_G(v,h_X))$. Both $w$ and $h_X$ 
are approximate core hubs, and so $\E[\dist_H(w,h_X)\mid 
h_X\notin C]\leq\E[\dist_{H_X}(w,h_X)\mid h_X\notin C]$, as $H$ contains $H_X$. 
Applying \autoref{lem:conditionalexpectation} on this conditional expected 
distance, we obtain
\begin{align*}
\Pr[h_X \notin C]\E[\dist_H(v,h_X) \mid h_X&\notin C] \le 
\Pr[h_X \notin C](\dist_G(v,w) + \E[\dist_H(w,h_X) \mid h_X \notin C]) \\
& \le \Pr[h_X \notin C]\left(\dist_G(v,w) + 
\left(1 + \frac{O(\eps')}{\Pr[h_X \notin C]}\right)\dist_G(w,h_X)\right)
  \\
& = 
\Pr[h_X \notin C](\dist_G(v,w) + \dist_G(w,h_X)) + O(\eps') \dist_G(w,h_X) \\
& \le
\Pr[h_X \notin C](2\cdot\dist_G(v,w) + \dist_G(v,h_X))\\
& \quad + 
O(\eps') (\dist_G(w,v) + \dist_G(v,h_X))\\
& =
\Pr[h_X \notin C](2\cdot\dist_G(v,w) + \dist_G(v,h_X)) + 
O(\eps'\cdot\dist_G(v,h_X)).
\end{align*}

From \autoref{lem:detour} we also know that $\dist_G(v,w)=O(r_i)$. Additionally
using that $\eps' = \eps^2$, and the bound on $\Pr[h_X \notin C]$ in 
\autoref{lem:direct}, the expression above is
\begin{multline*}
\Pr[h_X \notin C]\dist_G(v,h_X) + 
O\left(\eps\frac{\dist_G(v,h_X)}{r_i}\right)O(r_{i})
+ O(\eps'\cdot\dist_G(v,h_X)) = \\
\Pr[h_X \notin C]\dist_G(v,h_X) + O(\eps) \dist_G(v,h_X) \; .
\end{multline*}
Combining the above bounds we obtain
\begin{align*}
\E[\dist_H(v,h_X)] =&\Pr[h_X\in C]\E[\dist_H(v,h_X) \mid h_X\in C] 
+\Pr[h_X\notin C]\E[\dist_H(v,h_X) \mid h_X\notin C]\\
\leq& \Pr[h_X\in C](\dist_G(v,h_X) + O(\eps r_i))
+ \Pr[h_X\notin C] \dist_G(v,h_X) \\
&+ O(\eps)\dist_G(v,h_X)\\
=& (1+O(\eps))\dist_G(v,h_X)+O(\eps r_i) \; ,
\end{align*}
where $r_i=\Theta(r)$, which proves the claim.
\end{proof}

\section{The doubling dimension of approximate core hubs}
\label{sec:dd}

The aim of this section is to give a proof of \autoref{thm:doubling_dim} by 
showing that for any town $T\in\mc{T}$ there is a set $X_T\subseteq T$ of 
approximate core hubs with bounded doubling dimension. We first define
 the set $X_T$ and then compare its properties with those of the core hubs. 
In particular, even though we obtain the approximate core hubs by shifting the 
core hubs to positions nearby, the resulting set is still locally sparse on 
each level. In addition, they are also \emph{locally nested}. Roughly 
speaking, this means that within a small ball of radius $\eps r_i$ 
for some level $i$, all approximate core hubs above level $i$ are 
``nested'', i.e., contained in one another. This property will help us in 
bounding the doubling dimension of $X_T$ independently of the aspect 
ratio.

The set $X_T$ of a town $T$ of level $j$ is the union of sets $X_T^i$, one for 
each level $i\in\{1,\ldots,j-1\}$, which are defined inductively as follows in 
\autoref{alg:xt}. We call a vertex in $X_T^i$ an \emph{approximate core hub of 
$T$ on level~$i$}. Recall that $C_i$ is the core of $T$ at level $i$ 
(\autoref{dfn:core}), and $C_0 = \emptyset$ since the sprawl is empty on level 
$0$.

\begin{algorithm2e}[H]
\caption{Defining $X_T$}
\label{alg:xt}
\DontPrintSemicolon
$X_T^1 \leftarrow C_1\cap\spc(r_1)$ \;
\For{$i = 2,\ldots, j-1$} {

$X_T^i \leftarrow \emptyset$ \;
\ForEach{$h\in C_i\cap \spc(r_i)$} {
\lIf{$\exists h' \in X_T^l$ for some $l < i$ such that $\dist(h,h') \le \eps r_i$} {
add $h'$ to $X_T^i$
}\lElse{
add $h$ to $X_T^i$
}
}
}
\Return $\bigcup_{i=1}^{j-1} X_T^i$
\end{algorithm2e}

Note that this definition of $X_T$ fulfills the two properties of 
\autoref{thm:doubling_dim} that there must be an approximate core hub $h'\in 
X_T$ within distance $\eps r_i$ of each core hub $h$ of level $i$ and 
that $X_T$ can be computed in polynomial time. Note also that $X_T^i\subseteq 
\bigcup_{l=1}^i C_l\cap\spc(r_l)$, and hence the vertices in $X_T^i$ are 
core hubs, but not necessarily core hubs of level $i$. 
The main benefit of shifting core hubs to approximate core hubs 
is that for any town $T\in\mc{T}$ on level 
$j$, the set system $\{X_T^i\}_{i=1}^j$ is \emph{locally nested} as
we explain in the following lemma. 

\begin{lem}\label{lem:nested}
Let $B$ be a set of diameter at most $\eps r_l$ for some level $l$, and let 
$i$ be the lowest level for which $X_T^i\cap B\neq\emptyset$. 
The approximate core hubs on level $q \geq \max\{i,l\}$ in $B$ must also
be core hubs on some level at most $\max\{l,i\}$; i.e.,
$B \cap X_T^q \subseteq \bigcup_{p=1}^{\max\{l,i\}} X_T^p$.
\end{lem}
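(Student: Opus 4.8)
The plan is to argue by contradiction via the inductive construction of $X_T$ in \autoref{alg:xt}. Suppose some approximate core hub $h \in B \cap X_T^q$ with $q \geq \max\{i,l\}$ is \emph{not} in $\bigcup_{p=1}^{\max\{l,i\}} X_T^p$. Since $X_T^p \subseteq \bigcup_{l'=1}^p C_{l'}\cap\spc(r_{l'})$, a vertex first appears in the union at the \emph{smallest} level-index of a set $X_T^p$ containing it; so assume $h$ first appears in $X_T^q$ and in no $X_T^p$ with $p < q$. By the definition of $X_T^q$ in \autoref{alg:xt}, this means that when processing $h$ at level $q$, \emph{no} previously-defined approximate core hub $h' \in X_T^{p}$ with $p < q$ satisfied $\dist(h,h') \leq \eps r_q$ — otherwise $h$ would have been replaced by such an $h'$ and would not be a ``new'' element of $X_T^q$.

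Next I would produce such a forbidden nearby hub $h'$ to derive the contradiction. By hypothesis there is some approximate core hub $h'' \in X_T^i \cap B$ on the lowest occupied level $i$. Since $\diam(B) \leq \eps r_l$ and both $h, h'' \in B$, we have $\dist(h, h'') \leq \eps r_l \leq \eps r_q$ (using $q \geq l$ and monotonicity of the $r_i$, which holds since $c/4 > 1$). If $i < q$, then $h'' \in X_T^i$ is a previously-defined approximate core hub within distance $\eps r_q$ of $h$, contradicting the fact that $h$ was added as a new element of $X_T^q$. So it remains to handle the boundary case $i = q$: but then $q = i \geq l$ forces $\max\{l,i\} = i = q$, and $h \in X_T^q = X_T^{\max\{l,i\}} \subseteq \bigcup_{p=1}^{\max\{l,i\}} X_T^p$ directly, contradicting our assumption. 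Either way we reach a contradiction, which proves the claim.

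The main obstacle I anticipate is handling the interaction between the two parameters $l$ and $i$ cleanly — in particular making sure the distance bound $\dist(h,h'') \leq \eps r_q$ is valid in all regimes (it uses $q \geq l$, which is exactly the hypothesis $q \geq \max\{i,l\}$, so this is fine), and carefully tracking the convention that ``$h \in X_T^q$ but $h \notin \bigcup_{p < q} X_T^p$'' is equivalent to ``$h$ was added as the new hub, not as a replacement hub $h'$, at level $q$.'' One subtlety to check: the algorithm might add an $h'$ to $X_T^q$ that was itself \emph{already} in some $X_T^p$, $p<q$, in which case $h'$ contributes nothing new to the union; so when I say ``$h$ first appears at level $q$'' I mean $h$ is literally a hub of $\spc(r_q)\cap C_q$ that got added in the \texttt{else} branch. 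This is the only genuinely delicate bookkeeping point; the geometry (one triangle inequality through a point of $B$) is routine.
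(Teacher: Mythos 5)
Your proposal is correct and is essentially the paper's own argument: the key step in both is that the witness $h''\in X_T^i\cap B$ lies within $\diam(B)\leq\eps r_l\leq\eps r_q$ of any level-$q$ core hub in $B$, so \autoref{alg:xt} never adds a fresh hub of level $q>\max\{l,i\}$ inside $B$. The paper packages this as an induction on the level $q$ while you phrase it as a minimal first-appearance/contradiction argument, but these are the same proof; your bookkeeping of the if/else branches and the trivial case $q=\max\{l,i\}$ matches the paper's treatment.
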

\begin{proof}
The statement is trivially true for $q=\max\{l,i\}$. Consider any higher 
level~$q>\max\{l,i\}$. 
Since the diameter of $B$ is at most $\eps r_l\leq \eps 
r_q$ and $X_T^i\cap B\neq\emptyset$, for every $h\in B \cap C_q \cap \spc(r_q)$ 
there is a vertex $h'\in X_T^i$ at distance at most $\eps r_q$ from $h$. 
Hence by the definition of the approximate core hubs in \autoref{alg:xt}, 
$X_T^q\cap B \subseteq \bigcup_{p=1}^{q-1} X_T^p$, and the claim follows
by induction.
\end{proof}

The cost of using approximate core hubs is that it is not immediately clear why 
the vertices in $X_T^i$ should still be locally sparse. This requires a tricky 
argument that we turn to now. The crucial observation leading to this result is 
that we can bound the number of hubs of a shortest path cover $\spc(r_i)$ not 
only in a ball $B_{cr_i/2}(v)$ using the local sparsity but also close to the 
ball. The approximate core hubs in $X_T^i$ are obtained by shifting the core 
hubs of level~$i$ to lower level core hubs at distance at most $\eps r_i$. Hence 
the number of vertices of $X_T^i \cap B_{cr_i/2}(v)$ can be bounded by the total 
number of level $i$ core hubs that are within distance $\eps r_i$ of 
$B_{cr_i/2}(v)$. The definition of highway dimension (\autoref{dfn:hd}) allows 
us to get a handle on the hubs in larger balls of radius~$cr_i$, and this, 
combined with the minimality of our shortest path cover, allows us to bound the 
number of nearby core hubs. Specifically, in a graph of highway dimension $k$, 
and given a locally $s$-sparse shortest path cover, we are able to show that the 
approximate core hubs $X_T^i$ of level $i$ are locally $3ks$-sparse as long as 
the stretch parameter $\eps$ is chosen to be at most $2$. The lemma is stated in 
a slightly more general form than we need it here, since we will reuse it later.

\begin{lem}\label{lem:hub_bound}
  For a metric $(V,\dist_G)$ induced by an underlying graph $G$ of highway
  dimension~$k$, let $B_{cr/2}(v)$ be a ball of radius $cr/2$ centered
  at $v\in V$, and let $\spc(r)$ be a minimal locally $s$-sparse
  shortest path cover. There are at most $3sk$ hubs $h \in \spc(r)$
  for which $\dist_G(h,B_{cr/2}(v)) \le cr/2$.
\end{lem}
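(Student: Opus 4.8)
The plan is to reduce the count of hubs near the ball $B_{cr/2}(v)$ to the local sparsity bound $s$ applied at a slightly larger scale, using the highway dimension definition (\autoref{dfn:hd}) to control paths inside a ball of radius $cr$. First I would observe that the set of hubs we want to bound, namely $S = \{h \in \spc(r) : \dist(h, B_{cr/2}(v)) \le cr/2\}$, is entirely contained in the ball $B_{cr}(v)$: any such $h$ is within $cr/2$ of some point of $B_{cr/2}(v)$, which is itself within $cr/2$ of $v$, so $\dist(h,v) \le cr$. Thus $S \subseteq B_{cr}(v)$, and I can now work inside the single ball $B_{cr}(v)$, where \autoref{dfn:hd} tells me there is a set $W$ of at most $k$ vertices hitting all shortest paths of length more than $r$ that lie inside $B_{cr}(v)$.

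The key step is to use minimality of $\spc(r)$ to associate to each hub $h \in S$ a ``witness'' shortest path that $h$ is essential for: since $\spc(r)$ is inclusion-wise minimal, removing $h$ leaves some shortest path $P_h$ of length in $(r, cr/2]$ uncovered, i.e. $h$ is the unique hub of $\spc(r)$ on $P_h$. I would then argue $P_h$ lies inside $B_{cr}(v)$: one endpoint is within $cr/2$ of $v$ up to the distance from $h$ — more carefully, $h \in P_h$ and $h \in B_{cr}(v)$, and $P_h$ has length at most $cr/2$, so every vertex of $P_h$ is within $cr/2$ of $h$ and hence within $cr/2 + cr \le$ (after the correct bookkeeping, using that $\dist(h,B_{cr/2}(v))\le cr/2$ forces $h$ much closer, within $cr$, of $v$, and every point of $P_h$ within $cr/2$ of $h$) inside a ball of radius $O(cr)$ around $v$. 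The cleanest route is: each point of $P_h$ is within $\diam(P_h) \le cr/2$ of $h$, and $h$ is within $cr/2$ of $B_{cr/2}(v)$, so each point of $P_h$ is within $cr/2 + cr/2 + cr/2 = 3cr/2$ of $v$ — which is too big. So instead I would re-center: replace $B_{cr}(v)$ by the slightly larger ball and invoke \autoref{dfn:hd} at the appropriate radius, or observe that the factor $3$ in the statement is exactly the slack needed to run the argument at scale $r$ while the relevant ball has radius a constant multiple of $cr/2$; the precise constant is where the "$3$" in $3sk$ comes from. This is the main obstacle: getting the radii to line up so that the witness paths genuinely lie in a ball to which \autoref{dfn:hd} applies, while keeping the hitting-set size at $k$.

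Once the witness paths $P_h$ for $h\in S$ all lie inside a ball to which the highway-dimension bound applies, I get a hitting set $W$ of size $\le k$ that meets every $P_h$. For each $w \in W$, the hubs $h \in S$ whose witness path $P_h$ passes through $w$ are hubs of $\spc(r)$ lying within $\diam(P_h) \le cr/2$ of $w$, hence all inside $B_{cr/2}(w)$; by local $s$-sparsity there are at most $s$ of them. Summing over $W$ gives $|S| \le sk$, and the extra factor of $3$ absorbs the radius slack discussed above (e.g. covering the enlarged ball by a bounded number of balls of radius $cr/2$, or re-running the hitting-set argument at two or three nearby scales). I would finish by checking the constraint $\eps \le 2$ is not actually needed for this lemma — it is mentioned for the application — and simply state the bound $3sk$. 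The whole argument is short; essentially all the content is the geometric containment of witness paths inside the right ball plus one application each of \autoref{dfn:hd} and \autoref{dfn:spc}'s local sparsity.
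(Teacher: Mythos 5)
Your setup matches the easy half of the paper's argument: by inclusion-wise minimality each hub $h$ in the set you call $S$ hits a witness shortest path $Q_h$ of length in $(r,cr/2]$, and for those $h$ whose witness path lies inside $B_{cr}(v)$, applying \autoref{dfn:hd} at scale $r$ gives a hitting set of size at most $k$; since the length of $Q_h$ is at most $cr/2$, each such $h$ is within $cr/2$ of a hitting vertex, and local $s$-sparsity of $\spc(r)$ in balls of radius $cr/2$ yields at most $sk$ such hubs. The genuine gap is exactly the case you flag and then wave away: hubs whose witness path escapes $B_{cr}(v)$. Neither of your proposed fixes works as stated. If you invoke \autoref{dfn:hd} at a larger scale $\tilde r$ so that $B_{c\tilde r}(v)$ contains the escaping paths, the resulting hitting set only covers paths of length more than $\tilde r$, whereas the witness paths may have length barely more than $r$, so they need not be hit at all. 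And covering the enlarged ball by a ``bounded number'' of balls of radius $cr/2$ is not available: the graph need not be doubling, so that covering number is unbounded in general (the paper notes right after the lemma that even $B_{cr}(v)\cap\spc(r)$ can be unbounded, via a star), so the claim that ``the factor $3$ absorbs the slack'' is unsubstantiated.

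The paper closes this case with a further two-way split you would still need to discover. Let $u_h$ be the closest point of $B_{cr/2}(v)$ to $h$ and $w_h$ the farthest point of $Q_h$ from $v$ (which lies outside $B_{cr}(v)$). Because $h$ covers no path of length in $(r,cr/2]$ inside $B_{cr}(v)$, the shortest $u_h$--$h$ path forces $\dist(u_h,h)\le r$, hence all of $Q_h$ lies in $B_{(c+1)r}(v)$. If $Q_h$ is long, i.e.\ longer than $(1+1/c)r$, apply \autoref{dfn:hd} at scale $\tilde r_2=(1+1/c)r$: now $Q_h$ lies in $B_{c\tilde r_2}(v)$ and exceeds $\tilde r_2$, so the same hitting-set plus sparsity count gives another $sk$. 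If instead $Q_h$ is short (length in $(r,(1+1/c)r]$), then $w_h\notin B_{cr}(v)$ and $u_h\in B_{cr/2}(v)$ force $\dist(u_h,h)>(c/2-1-1/c)r\ge(1/2+1/c)r$ using $c\ge 4$, so the shortest $u_h$--$h$ path is itself a long-enough witness contained in $B_{(c/2+1)r}(v)$, and a third application at scale $\tilde r_3=(1/2+1/c)r$ gives the last $sk$. These three carefully aligned scales, not a covering argument, are where the $3$ in $3sk$ comes from; your proposal identifies the obstacle correctly but does not contain the idea that resolves it.
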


We note that this lemma does not bound the number of hubs in $\spc(r)$ that lie 
in a ball $B_{cr}(v)$, and in fact the number of hubs in $B_{cr}(v)\cap\spc(r)$ 
can be unbounded: in a star with edges of length $cr$ a minimal shortest path 
cover $\spc(r)$ may contain all vertices except the center vertex $v$ of the 
star. This shortest path cover is also locally $1$-sparse, since any ball of 
radius $cr/2$ contains only one vertex of the star. However the ball of radius 
$cr$ centered at $v$ contains the whole star, and thus all hubs from $\spc(r)$, 
i.e.\ a potentially unbounded number.

Since the hubs considered in \autoref{lem:hub_bound} may lie outside of 
$B_{cr/2}(v)$, we need to use \autoref{dfn:hd}, which bounds the number of hubs 
in larger balls of radius~$cr$.  However, the hubs given by \autoref{dfn:hd} do 
not necessarily coincide with those of $\spc(r)$. Therefore, we need an 
additional tool, as given by the following technical lemma, which relates the 
hubs given by \autoref{dfn:hd} with those in $\spc(r)$.

In the following lemma, we consider once more a metric induced by
graph $G=(V,E)$ of highway dimension $k$. As usual, we let $\spc(r)$
denote a locally $s$-sparse shortest-path cover for radius $r$. 
Consider radii $r, \tilde r$ such that $\tilde r < cr/2$, and
let $B_{c\tilde r}(v)$ be a ball of radius $c\tilde r$ centered at $v$. For each vertex
$h \in B_{c\tilde r}(v) \cap \spc(r)$, we let $P_h$ be a shortest path
that (a) lies in $B_{c\tilde r}(v)$, i.e.\ $V(P_h)\subseteq B_{c\tilde r}(v)$, 
(b)~has length in $(\tilde r, cr/2]$, and (c) contains $h$. If no such path 
exists, we let $P_h= \bot$. 

\begin{lem}\label{lem:balls_aux}
Let $\tilde W$ be the set of all vertices $h\in B_{c\tilde r}(v) \cap
\spc(r)$ for which $P_h\neq \bot$. 
Then $|\tilde W|\leq sk$.
\end{lem}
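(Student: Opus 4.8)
The plan is to exhibit a single short path that contains a hitting set for all the paths $P_h$, $h \in \tilde W$, and then apply the definition of highway dimension together with local sparsity of $\spc(r)$. First I would observe that every path $P_h$ lies in $B_{c\tilde r}(v)$ and has length in $(\tilde r, cr/2]$; in particular, since $\tilde r < cr/2$, each $P_h$ has length more than $\tilde r$. By Definition \ref{dfn:hd} applied at scale $\tilde r$ to the ball $B_{c\tilde r}(v)$, there is a set $W$ of at most $k$ vertices in $B_{c\tilde r}(v)$ that hits all shortest paths in $B_{c\tilde r}(v)$ of length more than $\tilde r$. In particular, $W$ hits every $P_h$ for $h \in \tilde W$: pick for each $h$ a vertex $w(h) \in W \cap V(P_h)$.

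Next I would bound, for each fixed $w \in W$, the number of hubs $h \in \tilde W$ with $w(h) = w$. The key point is that if $w = w(h)$, then $w$ lies on $P_h$, and $h$ also lies on $P_h$; since $P_h$ is a shortest path of length at most $cr/2$, the sub-path of $P_h$ between $w$ and $h$ is itself a shortest path of length at most $cr/2$, so $\dist(w,h) \le cr/2$, i.e.\ $h \in B_{cr/2}(w)$. Hence all such hubs $h$ lie in the ball $B_{cr/2}(w) \cap \spc(r)$, which by local $s$-sparsity of $\spc(r)$ contains at most $s$ vertices. Therefore at most $s$ hubs of $\tilde W$ are charged to each $w \in W$.

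Summing over the at most $k$ vertices of $W$ gives $|\tilde W| \le sk$, as claimed. The only mild subtlety — which I'd want to state carefully rather than grind through — is the argument that a sub-path of a shortest path is a shortest path (so that the length-at-most-$cr/2$ bound transfers to $\dist(w,h)$), and that $w, h$ both genuinely lie on $P_h$ so that this sub-path makes sense; both are immediate from the definitions, and the uniqueness-of-shortest-paths assumption made earlier in the paper makes "the sub-path" well-defined. I expect the charging step (each $w \in W$ accounts for at most $s$ hubs via the containment in $B_{cr/2}(w)$) to be the conceptual heart of the argument; everything else is bookkeeping with the hypothesis $\tilde r < cr/2$ and the length bounds on the $P_h$.
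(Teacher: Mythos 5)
Your proposal is correct and follows essentially the same route as the paper's proof: apply \autoref{dfn:hd} at scale $\tilde r$ to $B_{c\tilde r}(v)$ to get a hitting set of at most $k$ vertices, note that each $h\in\tilde W$ lies within distance $cr/2$ of the hitting vertex on its path $P_h$ (since $P_h$ is a shortest path of length at most $cr/2$), and then invoke local $s$-sparsity of $\spc(r)$ in the $k$ balls of radius $cr/2$ to conclude $|\tilde W|\le sk$.
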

\begin{proof}
  The proof follows directly from \autoref{dfn:hd}. The definition
  implies that there is a set $K\subseteq B_{c\tilde r}(v)$ of at most
  $k$ vertices covering all shortest paths in $B_{c\tilde r}(v)$ of
  length more than $\tilde r$. In particular these vertices cover each
  path $P_h$ for $h \in \tilde W$.  We have $h \in V(P_h)$ and the
  length of $P_h$ is at most $cr/2$, so $\dist_G(h,K) \le cr/2$.
  Therefore $\tilde W$ can be covered by at most $k$ balls of radius
  $cr/2$ centered at each vertex in $K$.  The set $\spc(r)$, and with
  that also $\tilde W$, is locally $s$-sparse, so each of these balls
  contains at most $s$ nodes, yielding $|\tilde W| \le sk$.
\end{proof}

We now prove \autoref{lem:hub_bound}. For this, define 
\[ W = \{ h \in\spc(r) \mid \dist_G(h,B_{cr/2}(v)) \le cr/2\} \]
as the set of hubs near $v$ whose size we want to bound.  
In order to accomplish this, we carefully choose three radii
$\tilde{r}_i$, where $i\in\{1,2,3\}$, and let $\tilde{W}_i$ be the 
corresponding set of hubs as defined in \autoref{lem:balls_aux} (see 
\autoref{fig:hub_bound}).
We will then show that 
\[ W \subseteq \tilde{W}_1 \cup \tilde{W}_2 \cup \tilde{W}_3, \]
and conclude that $W$ has at most $3sk$ elements directly from
\autoref{lem:balls_aux}.

\begin{figure} %
\centering{\includegraphics[width=\textwidth]{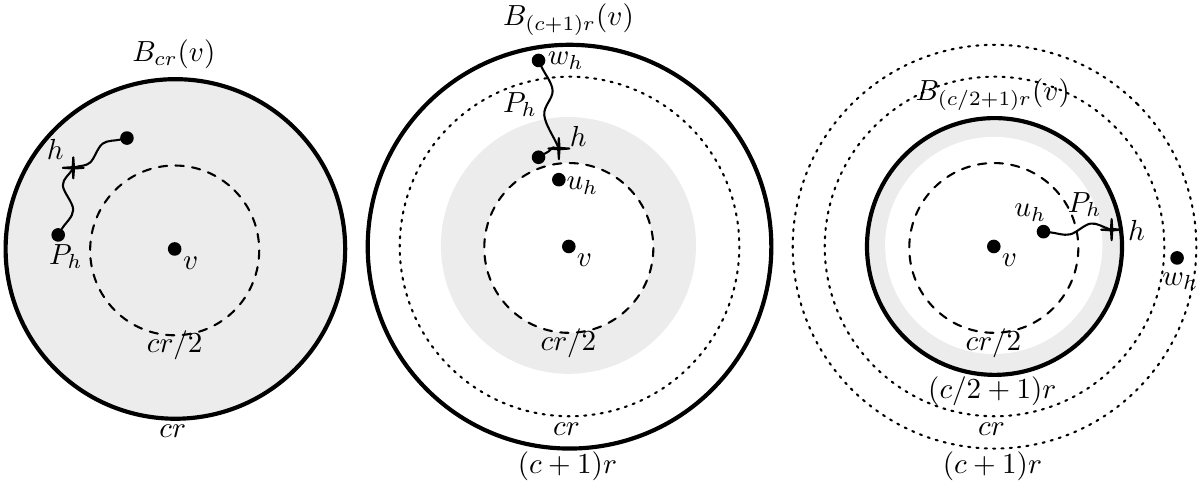}}
\caption{\label{fig:hub_bound} 
The three balls in \autoref{lem:hub_bound}. The dashed ball is $B_{cr/2}(v)$,
and the bold balls are the three considered balls $B_{c\tilde r_i}(v)$, 
moving from left to right. Hubs are crosses, and
shaded areas represent possible locations for hubs.
Hubs in $\tilde W_1$ (left) cover paths entirely in $B_{cr}(v)$. For a hub $h\in 
\tilde W_2$ (center) the path $P_h$ between $h$ and $w_h$ is long, while for 
$h\in \tilde W_3$ (right) the path $P_h$ from $h$ to $u_h$ is long.}
\end{figure}

\begin{proof}[Proof of \autoref{lem:hub_bound}]
We first apply \autoref{lem:balls_aux} for $\tilde{r}_1=r$, and infer that the 
set $\tilde{W}_1$ of hubs $h \in \spc(r)$ that cover a shortest path contained 
in $B_{cr}(v)$ and with length in $(r,cr/2]$, is at most~$sk$.

Observe that, by the inclusion-wise minimality of $\spc(r)$, each $h \in 
\spc(r)$ must hit some shortest path $Q_h$ with length in $(r,cr/2]$. For $h \in 
W\setminus\tilde{W}_1$ this path $Q_h$ is not contained in $B_{cr}(v)$. Let 
$w_h$ be a vertex on path $Q_h$ of maximum distance from $v$, which by 
assumption must lie outside the ball~$B_{cr}(v)$. We know $\dist_G(h,w_h) \le 
cr/2$, as the distance between $h$ and $w_h$ is bounded by the maximum length of 
$Q_h$. Also let $u_h$ be the closest vertex in $B_{cr/2}(v)$ to $h$. By the 
definition of $W$, $\dist_G(u_h,h)\leq cr/2$. Since $h$ does not cover any 
shortest path inside $B_{cr}(v)$ with length in $(r,cr/2]$, we must have 
$\dist_G(u_h,h) \le r$. Combining these, the distance from $v$ to $w_h$ is at 
most
\[
\dist_G(v,u_h)+\dist_G(u_h,h)+\dist_G(h,w_h)\leq cr/2+r+cr/2= (c+1)r=c(1+1/c)r.
\]
Hence, $Q_h$ lies in the ball $B_{c\tilde{r}_2}(v)$ if we choose 
$\tilde{r}_2=(1+1/c)r$. Furthermore, $h \in \tilde{W}_2$ if $Q_h$
has length in the interval $(\tilde{r}_2, cr/2]$. 

Finally, let us consider a hub
$h \in W\setminus (\tilde{W}_1 \cup \tilde{W}_2)$, for which
the length of the path $Q_h$
must lie in the interval
$(r,\tilde{r}_2]=(r,(1+1/c)r]$. Let $u_h$ and $w_h$ be defined as before.  
The distance between $h$ and $w_h$ is now at most $(1+1/c)r$, while the 
distance between $u_h$ and $w_h$ is more than $cr/2$, as $u_h\in B_{cr/2}(v)$ 
and $w_h\notin B_{cr}(v)$. It follows that
\begin{equation}\label{eq:lenPh} \dist_G(u_h,h) > cr/2 - (1+1/c)r =
  (c/2 -1 - 1/c)r. 
\end{equation}
We already saw that the left-hand side of the above inequality is at
most $r$, so this case only arises when $c < \sqrt{6}+2$. 
Note also that
\[\dist_G(v,h) \le \dist_G(v,u_h)+\dist_G(u_h,h)\leq cr/2+ r=
(c/2+1)r,\]
and hence $B_{(c/2+1)r}(v)$ contains a shortest path $P_h$ from $u_h$ to $h$. 
Equivalently, $P_h$ is contained in $B_{c\tilde{r}_3}(v)$ for 
$\tilde{r}_3=(1/2+1/c)r$.
Observe that by \eqref{eq:lenPh}, the length of $P_h$ is greater than  
\[ (c/2-1-1/c)r \geq \tilde{r}_3=(1/2+1/c)r, \]
as $c \geq 4$. The length of $P_h$ is of course also bounded by
$cr/2$, the maximum length of $Q_h$, and hence~$h \in \tilde{W_3}$. 

In conclusion, we showed that $W \subseteq \tilde{W}_1 \cup
\tilde{W}_2 \cup \tilde{W}_3$, and hence $W$ contains at most $3sk$
elements by \autoref{lem:balls_aux}.
\end{proof}

We have now determined all the properties of approximate core hubs that we need 
in order to prove that any set $X_T$ has low doubling dimension. Recall that for 
this we need to show that we can cover any ball $B$ of radius $2r$ in the metric 
defined by $X_T$ by a bounded number of balls of half the
radius~$r$. We first prove a slightly weaker result in which we show
that core hubs in a ball of radius $cr/2$ can be covered by a small
number of balls of radius $2r$, for some given $r$ (note that, for~$c>4$, $2r$
is smaller than $cr/2$). We will later apply the next lemma recursively
in order to obtain a bound on the doubling dimension of $X_T$.

\begin{lem}\label{lem:cover_1level}
For any level $i$ and any ball $B_{cr_i/2}(v)\subseteq V$ of radius $cr_i/2$ we 
can cover $B_{cr_i/2}(v)\cap X_T$ with at most 
$O\left(ks\log(1/\eps)/\lambda\right)$ balls in $V$ of radius $2r_i$ each, for 
any $0<\eps\leq 2$ and violation $\lambda>0$.
\end{lem}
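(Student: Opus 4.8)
The plan is to cover $B_{cr_i/2}(v)\cap X_T$ by processing the approximate core hubs level by level, from the lowest relevant level upward, and to exploit the local nesting property (\autoref{lem:nested}) to avoid paying for each level separately. Fix the ball $B=B_{cr_i/2}(v)$. Recall that each hub in $X_T^\ell$ is a core hub on some level at most $\ell$, and that core hubs on level $\ell$ lie in the sprawl $S_\ell$, which is covered by balls of radius $2r_\ell$ around the hubs of $\spc(r_\ell)$. The first observation is that only levels $\ell$ with $2r_\ell \le cr_i/2$, i.e.\ $\ell$ not too much larger than $i$, can contribute a hub to $B$ that forces us to use a ball strictly smaller than $B$ itself; but more importantly, the key range of levels to worry about is a window of size roughly $\log_{c/4}(1/\eps) = O(\log(1/\eps)/\lambda)$ above the lowest nonempty level. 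Concretely, let $\ell_0$ be the lowest level with $X_T^{\ell_0}\cap B \ne \emptyset$ (if none, we are trivially done).

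The main step is: I would show that every approximate core hub in $B$ is, in fact, a core hub on some level in the window $\{\ell_0, \ell_0+1, \dots, \ell_0 + t\}$ where $t = O(\log(1/\eps)/\lambda)$. This is where \autoref{lem:nested} does the work: although $B$ has radius $cr_i/2$, which is much larger than $\eps r_\ell$, I can cover $B$ itself by $O(1)$ balls of radius $\eps r_{\ell}$ for the relevant $\ell$ — no wait, that costs too much. Instead the right move is: a hub $h \in X_T^q \cap B$ with $q$ large is a core hub on level at most $q$, hence lies in $\spc(r_p)$ for some $p \le q$; by the shifting rule in \autoref{alg:xt}, once some approximate core hub of a low level $p$ sits within $\eps r_q$ of a level-$q$ core hub, that core hub never creates a new point in $X_T^q$. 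Since the window of levels between $r_p$ and $r_q$ for which $\eps r_q$ is still smaller than, say, the separation guaranteed inside $B$, has length $O(\log_{c/4}(1/\eps)) = O(\log(1/\eps)/\lambda)$, only $O(\log(1/\eps)/\lambda)$ levels contribute genuinely new points. For each such contributing level $\ell$, the points of $X_T^\ell \cap B$ are core hubs within distance $\eps r_\ell \le 2r_\ell$ of level-$\ell$ core hubs whose host sprawl is covered by radius-$2r_\ell$ balls around $\spc(r_\ell)$; and since $B$ has radius $cr_i/2$ and $r_\ell$ is comparable to $r_i$ (within the window, up to an $O(1)$ factor absorbed by enlarging the window slightly), the relevant $\spc(r_\ell)$-hubs lie within $cr_\ell/2$ of $B$, so by \autoref{lem:hub_bound} there are at most $3ks$ of them. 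Each contributes one ball of radius $2r_\ell$, and since $r_\ell$ ranges over an $O(1)$-bounded multiplicative window around $r_i$, each radius-$2r_\ell$ ball is contained in $O(1)$ radius-$2r_i$ balls (or we simply note $2r_\ell \le 2r_i \cdot (c/4)^{O(\log(1/\eps)/\lambda)}$ — careful here, this could blow up, so the window must genuinely be constrained to levels with $r_\ell = \Theta(r_i)$, which forces $t = O(1/\lambda)$ rather than $O(\log(1/\eps)/\lambda)$; the $\log(1/\eps)$ factor must then come from the nesting argument counting distinct "shift source" levels, not from the range of $r_\ell$).

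Putting it together: the number of covering balls is (number of contributing levels) $\times$ (hubs per level from \autoref{lem:hub_bound}) $= O(\log(1/\eps)/\lambda) \cdot 3ks = O(ks\log(1/\eps)/\lambda)$, matching the claimed bound, with the constraint $\eps \le 2$ entering exactly where \autoref{lem:hub_bound} requires it.

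**The main obstacle** I anticipate is disentangling the two different sources of the $\log(1/\eps)/\lambda$ factor and making sure they do not compound. The range of levels $\ell$ with $r_\ell$ within a constant factor of $r_i$ has length only $O(1/\lambda)$ (since $r_{\ell+1}/r_\ell = 1+\lambda/4$), so a naive "$\log(1/\eps)$ levels of radius-$2r_\ell$ balls" argument fails because radii would grow by a factor $(c/4)^{\log(1/\eps)}$, which is polynomial in $1/\eps$ and far too large. The correct argument must instead be: for levels $\ell$ with $r_\ell \gg r_i$ the hubs of $X_T^\ell \cap B$ are \emph{not} new points at level $\ell$ but inherited (via the shifting rule) from some level $p$ with $r_p = O(r_i)$ — because the shift tolerance $\eps r_\ell$ at the higher level $\ell$ comfortably exceeds $\diam(B)$ once $r_\ell \gtrsim \mathrm{diam}(B)/\eps = \Theta(cr_i/\eps)$, i.e.\ once $\ell \gtrsim i + \log_{c/4}(c/\eps)$. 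For the intermediate band $i \le \ell \lesssim i + \log_{c/4}(1/\eps)$, one still gets genuinely new points, but these can be charged to the $\spc(r_\ell)$-hubs near $B$; and although $r_\ell$ ranges over a $(1/\eps)^{O(1)}$-sized window, we replace each such ball of radius $2r_\ell$ not by radius-$2r_i$ balls (impossible) but observe that we only need to \emph{cover} $B\cap X_T$, and points of $X_T^\ell$ for $\ell$ in this band lie within $2r_\ell$ of a $\spc(r_\ell)$-hub — here one must instead use balls of the \emph{correct} radius $2r_\ell$ but centered so as to also be usable as radius-$2r_i$ balls, which only works if the band is truncated to $\ell = i + O(1)$; the honest resolution, which I would present, is that the statement's "$r_i$" combined with the recursive application in the doubling-dimension proof allows $r$ to be chosen as the \emph{largest} relevant scale, so within a single application only levels $\ell = i, i\!+\!1, \dots$ up to where $r_\ell \le cr_i/2$ contribute, a range of size $O(1/\lambda)$, and the $\log(1/\eps)$ enters solely through bounding, via the nesting lemma, how many of these $O(1/\lambda)$ levels can each independently contribute a fresh set of $3ks$ centers before the shift rule collapses them — an accounting that yields the product $O(1/\lambda)\cdot\log(1/\eps)\cdot ks$ only after a careful induction on levels inside $B$, which is the technical heart of the argument.
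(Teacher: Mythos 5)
You have the right ingredients on the table (the sprawl/core structure, \autoref{lem:hub_bound}, the nesting property of \autoref{lem:nested}, the band of $\lceil\log_{c/4}(c/\eps)\rceil = O(\log(1/\eps)/\lambda)$ levels above $i$, and the place where $\eps\le 2$ is needed), but the central accounting step is never resolved, and the ``honest resolution'' you settle on is wrong. You claim that within a single application only levels $\ell$ with $r_\ell\le cr_i/2$, a range of size $O(1/\lambda)$, can contribute, with the $\log(1/\eps)$ factor coming from an unspecified induction on how often the shift rule fails to collapse levels. That is not how the bound arises, and no such induction is given: approximate core hubs of levels well above $i$ genuinely do appear in $B_{cr_i/2}(v)$ and must be covered. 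The worry driving you there --- that a ball of radius $2r_q$ for $q$ in the band cannot be traded for $O(1)$ balls of radius $2r_i$ --- is a red herring, because for those levels one never uses balls of radius $2r_q$ at all. The missing observation is that for each level $q\in\{i+1,\ldots,i+\lceil\log_{c/4}(c/\eps)\rceil\}$ one bounds the \emph{number} of points of $X_T^q$ inside $B_{cr_i/2}(v)$ by $3ks$ and then spends one radius-$2r_i$ ball per point, which trivially covers it; covering single points makes the radius irrelevant. The count itself follows because each element of $X_T^q$ is within $\eps r_q\le 2r_q<cr_q/2$ of its source hub in $\spc(r_q)$ (this is exactly where $\eps\le 2$ and $c>4$ enter), the ball $B_{cr_i/2}(v)$ has radius at most $cr_q/2$, so all source hubs lie within $cr_q/2$ of a ball of radius at most $cr_q/2$ and \autoref{lem:hub_bound} applied at scale $r_q$ gives at most $3ks$ of them; each source hub produces at most one point of $X_T^q$ by \autoref{alg:xt}. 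Summing over the $O(\log(1/\eps)/\lambda)$ levels of the band gives the claimed $O(ks\log(1/\eps)/\lambda)$.

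The two remaining regimes, which your write-up leaves implicit or garbles, are handled as follows in the paper. For levels $\ell\le i$, all of $\bigcup_{\ell\le i}X_T^\ell$ consists of core hubs of level at most $i$ and hence lies in the core $C_i$ (the cores form a chain by \autoref{dfn:core}); $C_i$ is part of the level-$i$ sprawl, which is covered by balls of radius $2r_i$ around hubs of $\spc(r_i)$, and only hubs within distance $2r_i\le cr_i/2$ of $B_{cr_i/2}(v)$ matter, so \autoref{lem:hub_bound} caps their number at $3ks$; here the balls of radius $2r_i$ are used for real, not one per point. For levels above the band, $\eps r_\ell\ge cr_i\ge\diam(B_{cr_i/2}(v))$, so \autoref{lem:nested} says every such approximate core hub in the ball already belongs to $X_T^{p}$ for some $p\le\max\{l,q\}$, where $q$ is the lowest level meeting the ball --- i.e.\ it is already covered, except possibly for one extra batch of at most $3ks$ points of $X_T^q$ when $q$ exceeds the band, which again get one ball each. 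Your proposal gestures at this collapse (``inherited via the shifting rule'') but misstates the inherited level as $p$ with $r_p=O(r_i)$ and, more importantly, never closes the middle band, which is where the $\log(1/\eps)$ actually lives; as written the argument does not go through.
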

\begin{proof}
Recall that $X_T=\bigcup_{l=1}^{j-1} X_T^l$, where $j$ is the level of the town 
$T$ and $X_T^l$ are the approximate core hubs at level $l$ of $T$. We 
distinguish three cases based on the level $l$. 
First consider the vertices in $\bigcup_{l=1}^i X_T^l$ 
up to level~$i$, and recall that $X_T^i\subseteq \bigcup_{l=1}^i 
(C_l\cap\spc(r_l))$, i.e., the approximate core hubs of level $i$ are core hubs 
of levels up to $i$. By \autoref{dfn:core} the cores of town $T$ form a chain---
$C_{q-1}\subseteq C_q$---and thus every vertex of $\bigcup_{l=1}^i X_T^l$ 
is contained in the core $C_i$ of $T$ on level $i$. The core $C_i$ is part of 
the sprawl of level~$i$, which by \autoref{dfn:towns} is covered by balls of 
radius $2r_i$ centered at hubs in $\spc(r_i)$. For such a ball to cover some 
parts of the core $C_i$ in $B_{cr_i/2}(v)$, its center $v$ must be at 
distance at most $2r_i$ from $B_{cr_i/2}(v)$.
Hence by \autoref{lem:hub_bound} there are at most $3ks$ balls 
of radius $2r_i$ covering all of $\bigcup_{l=1}^i X_T^l$ in $B_{cr_i/2}(v)$.

Second, consider the approximate core hubs on levels $q\in\{i+1,\ldots,l\}$ 
where $l=i+\lceil\log_{c/4}(c/\eps)\rceil$. Cover every vertex of 
$\bigcup_{q=i+1}^{l} X_T^q$ in $B_{cr_i/2}(v)$ by one ball of radius $2r_i$ 
each. For any such level $q > i$ the radius of $B_{cr_i/2}(v)$ is at most 
$cr_q/2$. Since we assumed that $\eps\leq 2$ while $c>4$, the approximate core 
hubs on level $q$ are shifted by at most $\eps r_q\leq 2 r_q< cr_q/2$ to lower 
level core hubs by \autoref{alg:xt}. Hence we can bound the number of such hubs 
in $B_{cr_i/2}(v)$ per level by~$3ks$ using \autoref{lem:hub_bound}, which 
also bounds the number of balls we use to cover them. If the violation 
$\lambda$ tends to zero, the number of such levels is 
$O(\log_{c/4}(c/\eps))=O(\log(1/\eps)/\lambda)$, since 
$\log(c/4)=\log(1+\lambda/4)=\Theta(\lambda)$. In total this makes 
$O\left(ks\log(1/\eps)/\lambda\right)$ balls for levels up to $l$.

For the remaining levels $l > i+\lceil\log_{c/4}(c/\eps)\rceil$ we use the fact 
that the approximate core hubs are locally nested by \autoref{lem:nested}. In 
particular, note that $\eps r_l\geq cr_i$ since $r_l=(c/4)^l$, i.e., the 
diameter of $B_{cr_i/2}(v)$ is at most $\eps r_l$ for level $l$. Let $q$ be the 
lowest level for which $X^q_T\cap B_{cr_i/2}(v)\neq\emptyset$. If $q\leq l$ the 
hubs in $X^q_T\cap B_{cr_i/2}(v)$ are already accounted for. Otherwise, as 
before we greedily cover each hub in $X^q_T\cap B_{cr_i/2}(v)$ by a ball of 
radius $2r_i$ each, and by \autoref{lem:hub_bound} we need at most $3ks$ balls 
to do so. Now, by \autoref{lem:nested}, every vertex of $X^p_T\cap 
B_{cr_i/2}(v)$ for a level $p>\max\{l,q\}$ is contained in some set 
$X^{p'}_T\cap B_{cr_i/2}(v)$ for $p'\leq\max\{l,q\}$. Since we already covered 
each hub in $X^{p'}_T\cap B_{cr_i/2}(v)$ with a ball, the claim follows.
\end{proof}

We can now use the above lemma recursively to cover the set $X_T$ in a ball 
$B_{2r}(v)$ 
with balls of half the radius, as we show next.

\begin{lem}\label{lem:cover_levels}
Let $T\in\mc{T}$ be a town and let $B_{2r}(v)\subseteq V$ be a ball of 
radius $2r$. Then $B_{2r}(v)\cap X_T$ can be covered by at most
$\left(ks\log(1/\eps)/\lambda\right)^{O(1/\lambda)}$
balls in $V$ of radius $r$, for any $0<\eps\leq 2$ 
and violation~$\lambda>0$.
\end{lem}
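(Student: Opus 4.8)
The plan is to apply \autoref{lem:cover_1level} iteratively, peeling off a factor of roughly $c/4 = 1+\lambda/4$ in radius at each step, until we reach balls of radius $r$. Starting from the ball $B_{2r}(v)$, note that $2r < cr'/2$ where $r' = 4r/c \cdot (\text{something} \geq 1)$; more precisely, we want to express $B_{2r}(v)$ as a ball whose radius is of the form $cr_t/2$ for a suitable scale $r_t$ in our geometric sequence, or at least bounded by such a ball. Since the scales $r_i = (c/4)^i$ form a geometric progression with ratio $c/4 = 1+\lambda/4$, and we only care about covering $X_T$, we can round $r$ up to the nearest scale $r_t$ and work with $B_{cr_t/2}(v) \supseteq B_{2r}(v)$ (using $2r \le cr_t/2$, which holds once $r_t \ge 4r/c$, i.e.\ after rounding up by at most one scale). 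Then \autoref{lem:cover_1level} covers $B_{cr_t/2}(v) \cap X_T$ by $O(ks\log(1/\eps)/\lambda)$ balls of radius $2r_t$.

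The recursion then proceeds: each ball of radius $2r_t$ obtained is again contained in a ball of the form $B_{cr_{t'}/2}$ for some $t' < t$ (since $2r_t = cr_{t-1}/2$ exactly, by our choice of scales, so in fact $t' = t-1$ works cleanly), and we reapply \autoref{lem:cover_1level} to each. After one application we have radius $2r_{t-1} = cr_{t-2}/2 \cdot (c/4) \cdot \ldots$; tracking carefully, each application shrinks the working radius by a factor of $c/4$ relative to the previous, so after $N$ applications the radius has dropped from $\approx 2r$ to $\approx 2r/(c/4)^N$. To reach radius $r$ we need $(c/4)^N \ge 2$, i.e.\ $N = O(\log 2/\log(c/4)) = O(1/\lambda)$, since $\log(c/4) = \log(1+\lambda/4) = \Theta(\lambda)$. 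At each of these $O(1/\lambda)$ levels of recursion the number of balls is multiplied by $O(ks\log(1/\eps)/\lambda)$, so the total number of balls of radius $r$ is $\big(O(ks\log(1/\eps)/\lambda)\big)^{O(1/\lambda)} = \big(ks\log(1/\eps)/\lambda\big)^{O(1/\lambda)}$, as claimed.

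The one point requiring care — and the main obstacle — is the bookkeeping at the ends of the geometric sequence: the radii $2r$ and $r$ need not themselves be scales $r_i$, so I must be careful that rounding $r$ up to a scale (to invoke \autoref{lem:cover_1level}, which is stated only for radii of the form $cr_i/2$) costs only a constant factor in radius and hence only a constant number of extra recursion steps, which is absorbed into the $O(1/\lambda)$ bound. Concretely I would set $t = \lceil \log_{c/4}(4r/c) \rceil$ so that $cr_t/2 \ge 2r$, apply \autoref{lem:cover_1level} to $B_{cr_t/2}(v)$, and then iterate, stopping as soon as the current radius $2r_{t-N}$ drops to at most $r$; the number of iterations is $N \le \lceil \log_{c/4} 2\rceil + O(1) = O(1/\lambda)$. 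Everything else is a routine product bound over the recursion tree, with the branching factor and depth as computed above.
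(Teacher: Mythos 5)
Your proposal is correct and follows essentially the same route as the paper: round $2r$ up to the nearest radius of the form $cr_l/2$, apply \autoref{lem:cover_1level} recursively using the identity $2r_{i+1}=cr_i/2$, and observe that the recursion depth is $O(\log_{c/4}2)=O(1/\lambda)$, giving the product bound $\left(ks\log(1/\eps)/\lambda\right)^{O(1/\lambda)}$. The rounding bookkeeping you flag is handled in the paper exactly as you suggest (choosing the smallest level $l$ with $cr_l/2\geq 2r$ and stopping once the radius drops below $r$).
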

\begin{proof}
Let $l$ be the smallest level for which $cr_l/2\geq 2r$. Instead of using 
$B_{2r}(v)$ directly, we will cover the larger set $B_{cr_l/2}(v)\cap X_T$ with 
balls of radius $cr_{l-1}/4 < r$, which we find by recursively covering 
$B_{cr_l/2}(v)$ with balls of the next lower level. 

Since $r_i=(c/4)^i$, a ball $B_{2r_{i+1}}(h)$ has radius 
$2r_{i+1}=2cr_i/4=cr_i/2$. Hence, by \autoref{lem:cover_1level}, we can cover 
$X_T\cap B_{2r_{i+1}}(h)$ with $O\left(ks\log(1/\eps)/\lambda\right)$ balls of 
radius $2r_i$, on which we  recurse. By the choice of $l$, $r>cr_{l-1}/4$, and 
since $r_i=(c/4)^i$, the number of levels $\beta$ on which we need to recurse is 
at most
$$
\log_{c/4}(cr_{l}/2)-\log_{c/4}(cr_{l-1}/4) =
1+\frac{1}{\log_2(c/4)}=
O(1/\lambda).
$$
The total number of balls needed to cover $B_{2r}(v)$ with balls of radius $r$ 
is then 
at most
$$
\sum_{i=0}^{\beta-1} O\left(ks\log(1/\eps)/\lambda\right)^i
= \left(ks\log(1/\eps)/\lambda\right)^{O(1/\lambda)}, %
$$
which concludes the proof.
\end{proof}

The balls $B_r(h)$ found in \autoref{lem:cover_levels} are centered at hubs. If 
all these hubs are part of $X_T$, then we have shown that $X_T$ has bounded 
doubling dimension. However, if $h \notin X_T$ for some ball center, then we 
have partly covered $B_{2r}(v)\cap X_T$ with invalid balls that are not centered 
at points in the metric~$X_T$. We already addressed this issue in 
\autoref{sec:construction} by proving \autoref{lem:set-d-dim}.
Thus we are finally ready to prove the remaining part of 
\autoref{thm:doubling_dim} by bounding the doubling dimension of $X_T$. Consider 
a ball $B_{2r}(v)\subseteq V$. According to \autoref{lem:cover_levels} we can 
cover $B_{2r}(v)\cap X_T$ using at most 
$(ks\log(1/\eps)/\lambda)^{O(1/\lambda)}$ balls in $V$ of radius~$r$. Recall 
that the doubling dimension is $\log_2 \delta$, where $\delta$ is the number of 
balls needed. Hence by \autoref{lem:set-d-dim} the doubling dimension of $X_T$ 
is $O(\log(\frac{ks\log(1/\eps)}{\lambda})/\lambda)$, as claimed.

\section{The treewidth of the embedding}
\label{sec:treewidth}

We prove by induction that the embedding has bounded treewidth. That is, we 
prove that the embedding of any town $T\in\mc{T}$ has bounded treewidth, 
assuming that the embeddings of its child towns have bounded treewidth. In 
particular, we prove the following, which implies the treewidth bound of 
\autoref{thm:main}, since there are 
$O(\log_{c/4}\alpha)=O(\log(\alpha)/\lambda)$ levels in total, and we can 
assume that $s=O(k\log k)$ by \cite{abraham2011vc}.

\begin{thm}\label{lem:treewidth}
The embedding constructed for a town $T\in\mc{T}$ of level $j$ has treewidth 
\[
j\cdot (\log(\alpha))
^{O\left(\log^2(\frac{ks}{\eps\lambda})/\lambda\right)}.
\]
\end{thm}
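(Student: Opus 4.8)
The plan is to prove the statement by induction on the level $j$ of $T$, exhibiting the tree decomposition $D_T$ produced by \autoref{alg_embedding} and bounding the size of every bag by $jW$, where $W=(\log\alpha)^{O(\log^2(\frac{ks}{\eps\lambda})/\lambda)}$ with the constant in the exponent chosen large enough that $W$ dominates every ``small'' quantity appearing below (it suffices that $W\ge(\log\alpha)^{ad}$, $W\ge(d/\eps)^{ad}$ and $W\ge sm\,(d/\eps)^{ad}$ for a suitable constant $a$, where $d=O(\log(\frac{ks\log(1/\eps)}{\lambda})/\lambda)$ is the doubling-dimension bound of \autoref{thm:doubling_dim} and $m=O(\log\alpha/\lambda)$ is the number of levels; these hold since $\log W=\Theta(d\log(\frac{ks}{\eps\lambda})\log\log\alpha)$). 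The base case is immediate: a town with no child towns is a singleton by \autoref{lem:laminar-towns}, and $D_T=(T,T)$ then has bags of size $1$; so assume $T$ has at least two child towns. That $D_T$ is a valid tree decomposition of $H_T$ is a routine check --- the modifications in lines~\ref{decomp1}--\ref{decomp2} are exactly what restores properties~\eqref{item:tw-edges} and~\eqref{item:tw-vertices} of \autoref{dfn:treewidth}. The bags of $D_T$ come in two kinds: the (modified) bags of $D_X$, and, for each child town $T'$ on a level $i<j$, the (modified) bags of the recursively computed $D_{T'}$.

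The combinatorial core is the bound $|X_T\cap T'|\le s\,m$ for every child town $T'$ on level $i$. Every element of $X_T$ is itself a core hub of $T$, since \autoref{alg:xt} introduces no new points and only reuses lower-level core hubs; hence $X_T\cap T'$ lies among the core hubs of $T$ contained in $T'$. A core hub of level $p$ lies in $C_p\cap\spc(r_p)$, and $C_p=T\cap S_p\cap\dots\cap S_{j-1}$ is contained in $S_i$ whenever $p\le i\le j-1$; as $T'$ is a town on level $i$ it is disjoint from the sprawl $S_i$, so $C_p\cap T'=\emptyset$ for all $p\le i$. Therefore the core hubs of $T$ in $T'$ lie in $\bigcup_{p=i+1}^{j-1}(\spc(r_p)\cap T')$, and since $\diam(T')\le r_i<cr_p/2$ for every $p\ge i+1$, local $s$-sparsity gives $|\spc(r_p)\cap T'|\le s$ at each of the at most $m$ levels, proving the claim.

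Next, $\mathrm{width}(D_X)$ is controlled. By \autoref{thm:doubling_dim}, $X_T$ --- and hence its submetric $Y_T$ --- has doubling dimension $d$ and aspect ratio at most $\alpha$, so \autoref{thm:Talwar} with $\eps'=\eps^2$ produces $D_Y$ of width $(d\log\alpha/\eps')^{O(d)}$. Since $D_X$ arises from $D_Y$ by replacing each net point $y\in Y_T$ in each bag by the $\le sm$ approximate core hubs of the child town that $y$ represents, $\mathrm{width}(D_X)\le(d\log\alpha/\eps')^{O(d)}\cdot sm=:W_X\le W$. Now consider an attached bag of $D_{T'}$ with $T'$ on level $i\le j-1$: by induction it has size $\le iW$, and the construction adds to it the vertices of the connecting bag $b$ together with the $\le sm$ hubs of $X_T\cap T'$. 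Granting the bound $|b|\le W-sm$ established next, such a bag has size $\le iW+(W-sm)+sm=(i+1)W\le jW$, as required.

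The remaining and, I expect, hardest step is to bound the modified bags of $D_X$. A bag $b''$ of $D_X$ corresponds to a split-tree cluster $C''$ at some level $\bar l''$, and it accumulates the hubs $X_T\cap T'$ for exactly those child towns $T'$ whose connecting bag equals $b''$ or is one of its ancestors in $D_X$. Every such $T'$ has its closest approximate core hub in $C''$, and has connecting-bag level $\bar l=\min\{\bar j,\lceil\log_2 r_i\rceil+\lceil\log_2(1/\eps)+\log_2 d\rceil\}\ge\bar l''$; this forces the separation $r_i$ of $T'$ from its nearest sibling town to satisfy $r_i=\Omega(\eps\,2^{\bar l''}/d)$, so the contributing towns are pairwise $\Omega(\eps\,2^{\bar l''}/d)$-separated while each lies within $O(r_i)$ of $C''$ (by \autoref{lem:detour}). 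The delicate point --- where I expect the real work to lie, since the ambient metric is not doubling --- is to parlay this separation and the local sparsity of the shortest path covers (in the spirit of \autoref{lem:hub_bound}, which already bounds the hubs of a cover lying near a ball) into a level-by-level bound of roughly $sm\,(d/\eps)^{O(d)}$ on the number of hubs of $\bigcup_{T'}(X_T\cap T')$ that can be piled onto $b''$. Granting this, $b''$ has final size $\le W_X+sm\,(d/\eps)^{O(d)}\le W-sm$ for $a$ large, closing the $D_X$ case and supplying the bound used above. Taking the maximum over all bags of $D_T$ gives bag size $\le jW$, hence treewidth $\le jW=j\cdot(\log\alpha)^{O(\log^2(\frac{ks}{\eps\lambda})/\lambda)}$, completing the induction; the bound of \autoref{thm:main} then follows by substituting $j\le m=O(\log\alpha/\lambda)$ and $s=O(k\log k)$.
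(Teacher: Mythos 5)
Your overall architecture matches the paper's proof: induct on the level, bound the width of $D_X$ by expanding each representative in $D_Y$ into the hubs it represents, charge the child-town bags with $|b|+|X_T\cap T'|$ on top of the inductive bound, and bound the accumulation of hubs onto the bags of $D_X$. Your bound $|X_T\cap T'|\le sm$ is correct and, although weaker than the paper's \autoref{lem:nr_reps} (which gives $O(s\log(1/\eps)/\lambda)$, independent of $\alpha$, via the local-nesting \autoref{lem:nested}), the extra $\log\alpha/\lambda$ factor is absorbed by the $(\log\alpha)^{O(d)}$ term, so this is not a problem for the stated asymptotics.

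The genuine gap is the step you explicitly defer with ``granting this'': the bound on how many child towns can deposit their hubs onto a fixed bag of $D_X$, i.e.\ the content of \autoref{lem:nr_children}. Without it the bound on the modified $D_X$ bags, and hence on $|b|$, which you feed back into the child-town case, is unsupported, so the induction does not close. Moreover, you anticipate that this step requires new sparsity machinery ``in the spirit of \autoref{lem:hub_bound}, since the ambient metric is not doubling,'' but that is not where the argument lives: the towns $T'$ with $X_T\cap T'\neq\emptyset$ and connecting bag $b$ each have their \emph{representative} inside the cluster $C'$ of $Y_T$ corresponding to $b$ (because the cluster of the connecting bag contains the closest hub of $X_T$ to $T'$, which then lies in $X_T\cap T'$, and all of $X_T\cap T'$ shares one representative), so these representatives form a subset of the doubling metric $Y_T$ of diameter at most $2^{\bar l+1}$ by \eqref{item:diameter} of \autoref{lem:splittree}, while the choice $\bar l=\min\{\bar j,\bar i+\lceil\log_2(1/\eps)+\log_2 d\rceil\}$ forces pairwise separation more than $\eps 2^{\bar l-1}/d$; \autoref{lem:dd_bound} then caps their number at $(d/\eps)^{O(d)}$ with no reference to the ambient metric or to shortest-path-cover sparsity. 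Multiplying by the per-town hub count and by the $O(\log\alpha)$ levels of $D_X$ (to account for propagation to descendants) gives exactly the quantity you needed; absent this packing argument, your proposal asserts rather than proves the decisive estimate. (Your ``routine check'' that $D_T$ is valid also hides a small but real argument—one needs the hierarchy and partition properties of \autoref{thm:Talwar} and \autoref{lem:splittree} to see that adding $X_T\cap T'$ only to $b$ and its descendants restores connectivity—but that is minor compared to the missing counting bound.)
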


To prove \autoref{lem:treewidth}, we show how to compute a tree decomposition 
$D_T$ of the embedding~$H_T$, when $T$ has child towns in the towns 
decomposition. Recall that $H_T$ is obtained by connecting the embeddings 
$H_{T'}$ of each child town $T'$ to the embedding $H_X$ of the approximate core 
hubs~$X_T$. In particular, an edge is added between every vertex in $T'$ and 
every hub in the connecting bag $b$ of $T'$ in the tree decomposition $D_X$ 
of~$H_X$. To compute $D_T$ we will join the tree decompositions of the child 
towns with $D_X$. For this we need to inductively specify a root bag for each 
tree decomposition, and the root bag of $D_T$ is the highest level bag 
of~$D_X$. 

Now for each child town $T'$, consider appending the subtree $D_{T'}$ to $D_X$ 
by adding the root bag of $D_{T'}$ as a child of the connecting bag $b$ of $T'$ 
in $D_X$. This satisfies condition~\eqref{item:tw-union} of 
\autoref{dfn:treewidth}, as the union of all bags is $T$. Unfortunately, 
though, this initial tree of bags $D_T$ does not satisfy the remaining 
requirements of a valid tree decomposition of~$H_T$ according to 
\autoref{dfn:treewidth}: the edges added to connect the child towns and their 
connecting bags may not be contained in any 
bag---violating~\eqref{item:tw-edges}---and there might be some vertex $v$ for 
which the bags containing~$v$ are not connected 
in~$D_T$---violating~\eqref{item:tw-vertices}. 

To make 
$D_T$ valid we change the initial tree of bags in two steps, of which the 
first will guarantee that~\eqref{item:tw-edges} is satisfied, and the second 
guarantee that~\eqref{item:tw-vertices} is satisfied. Namely, we perform the 
following for every child town $T'$ and its connecting bag $b$ in~$D_X$:
\begin{enumerate}
\item \label{item:td_edges} add all vertices of $b$ to each bag of $D_{T'}$, and
\item \label{item:td_connected} add all hubs of $X_T\cap T'$ to each bag of $D_{T'}$, and also to $b$ and 
all descendants of $b$ in $D_X$ (but not the descendants of $b$ in $D_T$ that 
are bags of some $D_{T''}$ for some child town $T''\neq T'$ of $T$).
\end{enumerate}
We now argue that the resulting tree decomposition is valid.

\begin{lem}
\label{lem:td_edges}
After performing step \eqref{item:td_edges} above, all edges are contained within
some bag.
\end{lem}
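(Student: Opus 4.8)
The plan is to verify that step~\eqref{item:td_edges} makes every edge of $H_T$ appear in some bag. The edges of $H_T$ come in three flavours, and I would treat each in turn. First, the edges inside some embedding $H_{T'}$ of a child town: these are handled by the inductive hypothesis, which gives a valid tree decomposition $D_{T'}$ of $H_{T'}$; such an edge already lies in a bag of $D_{T'}$, and since step~\eqref{item:td_edges} only \emph{adds} vertices to bags of $D_{T'}$ (it never removes any), the edge is still covered. Second, the edges inside the core-hub embedding $H_X$: these lie in bags of $D_X$ by \autoref{thm:Talwar} (each bag of $D_X$ induces a clique in $H_X$), and $D_X$ is a subtree of $D_T$ whose bags are untouched by step~\eqref{item:td_edges} applied for $T'$ — or, if touched, only enlarged — so these edges remain covered as well.

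The only genuinely new edges are those added in line~\ref{connect2} of \autoref{alg_embedding}, i.e.\ the pairs $\{u,v\}$ with $u\in T'$ and $v\in b$, where $b$ is the connecting bag of $T'$ in $D_X$. This is exactly the case step~\eqref{item:td_edges} is designed for: after adding all vertices of $b$ to the root bag of $D_{T'}$ (indeed to every bag of $D_{T'}$), that root bag contains all of $b$, and it also contains all vertices of $T'$ appearing there — but we need a single bag containing both the specific endpoint $u\in T'$ and the specific endpoint $v\in b$. Here I would invoke property~\eqref{item:tw-union} of the \emph{child} decomposition $D_{T'}$: $u\in T'$, so $u$ appears in some bag $b'$ of $D_{T'}$, and after step~\eqref{item:td_edges} that bag $b'$ also contains all of $b$, hence contains $v$. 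So $\{u,v\}\subseteq b'$, as required. One should also double-check the boundary cases where $u$ happens to be a core hub of $X_T\cap T'$, but such a vertex still lies in $T'$ and hence in some bag of $D_{T'}$, so the same argument applies.

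The step I expect to require the most care is making sure the bookkeeping across different child towns does not interfere: step~\eqref{item:td_edges} is performed for \emph{every} child town $T'$, and a bag of $D_X$ (in particular a connecting bag) may serve several child towns, while a bag of $D_{T'}$ belongs to only one. Since step~\eqref{item:td_edges} only enlarges bags of $D_{T'}$ by vertices of the connecting bag $b$ of that same $T'$, the modifications for distinct child towns act on disjoint parts of the tree (the subtrees $D_{T'}$ are vertex-disjoint as trees of bags), so there is no conflict, and the argument above goes through uniformly. Finally, I would note that this lemma only addresses condition~\eqref{item:tw-edges}; the connectedness condition~\eqref{item:tw-vertices} is deferred to step~\eqref{item:td_connected} and handled in the subsequent lemma.
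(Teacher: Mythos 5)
Your proposal is correct and follows essentially the same argument as the paper: the internal edges of $H_X$ and of each $H_{T'}$ are covered by the validity of $D_X$ (\autoref{thm:Talwar}) and of $D_{T'}$ (induction), and the only new edges are those between a child town $T'$ and its connecting bag $b$, which are covered because every vertex of $T'$ lies in some bag of $D_{T'}$ and step~\eqref{item:td_edges} adds all of $b$ to every such bag. The extra bookkeeping you verify (modifications for distinct child towns acting on disjoint subtrees, bags only ever being enlarged) is implicit in the paper's shorter proof and is consistent with it.
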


\begin{proof}
First, note that the decompositions $D_X$ and 
$D_{T'}$ for each child town $T'$ are valid by \autoref{thm:Talwar} and by 
induction, respectively. Hence the only edges that are not contained in any 
bag of $D_T$ are those added to connect a child town $T'$ and its 
connecting bag $b$. We add all vertices of $b$ to every bag of the 
decomposition~$D_{T'}$, so after repeating this for every child town, 
for every edge in $E(H_T)$ there is a bag in $D_T$ containing both endpoints. 
\end{proof}

We will bound the growth of the bags during this step later on using the bound 
on the size of each bag $b$ of~$D_X$ given by \autoref{thm:Talwar}.
Next we show that performing the second step will guarantee 
that~\eqref{item:tw-vertices} of \autoref{dfn:treewidth} is satisfied. 

\begin{lem}
After performing step \eqref{item:td_connected} above, for all vertices $v$, the set of bags
containing $v$ form a connected subtree of $D_T$, 
and $D_T$ is a valid tree decomposition of $T$. 
\end{lem}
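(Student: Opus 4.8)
The plan is to verify that after the two modification steps $D_T$ satisfies all three conditions of \autoref{dfn:treewidth} as a tree decomposition of $H_T$. Condition~\eqref{item:tw-union} is immediate: once the subtrees $D_{T'}$ are attached the bags already cover $X_T\cup\bigcup_{T'}T'=T$, and steps~\eqref{item:td_edges} and~\eqref{item:td_connected} only ever insert vertices of $T$ into existing bags. Condition~\eqref{item:tw-edges} was proved in \autoref{lem:td_edges} after step~\eqref{item:td_edges}, and step~\eqref{item:td_connected} only enlarges bags, so it cannot destroy edge containment. Hence all the work is in condition~\eqref{item:tw-vertices}: for every $v\in T$ the bags of $D_T$ containing $v$ must induce a connected subtree. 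I would organise this by the unique child town $T_v\in\mc{T}$ with $v\in T_v$ (unique because $\mc{T}$ partitions $T$) and by whether $v\in X_T$.

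The case $v\notin X_T$ is short: step~\eqref{item:td_edges} copies only vertices of connecting bags, which lie in $X_T$, and step~\eqref{item:td_connected} copies only hubs of $X_T$, so $v$ is never inserted anywhere. Thus the bags containing $v$ are exactly those bags of $D_{T_v}$ that already contained $v$ in the recursively computed decomposition (the modifications merely enlarge those bags), and by the inductive hypothesis they form a connected subtree of $D_{T_v}$, hence of $D_T$.

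For $v\in X_T\cap T'$ with $T':=T_v$, let $y\in Y_T\cap T'$ be the representative of $T'$, so $v$ is one of the hubs obtained by expanding $y$, and let $b$ be the connecting bag of $T'$ with associated split-tree cluster $C$ of $X_T$. The structural fact I would establish first is that $C$ contains $y$: since $X_T\cap T'\neq\emptyset$, the approximate core hub of $X_T$ closest to $T'$ lies \emph{inside} $T'$ (at distance $0$), so by the choice of $C$ in \autoref{alg_embedding} and the fact that every element of $X_T\cap T'$ has representative $y$, all of $X_T\cap T'$ lies in $C$; hence the corresponding cluster of $Y_T$ contains $y$, i.e.\ $b$ sits on the chain of clusters of the split-tree of $Y_T$ that contain $y$. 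By the net-hierarchy property~\eqref{item:net-hierarchy} of \autoref{thm:Talwar}, the bags of $D_Y$ containing $y$ --- equivalently the bags of $D_X$ containing $v$ --- form a path running from a leaf bag upward along this chain to some top level, and the portion of this path at or below the level of $b$ lies inside the subtree of $D_X$ rooted at $b$. Step~\eqref{item:td_connected} additionally inserts $v$ into $b$ and into every descendant of $b$ in $D_X$, so the set of $D_X$-bags containing $v$ becomes the union of that path with the subtree of $D_X$ rooted at $b$, which is a connected subtree of $D_X$ (the two pieces overlap along the chain below $b$ and meet at $b$) and contains $b$.

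It remains to account for the appended subtrees. Step~\eqref{item:td_connected} also inserts $v$ into every bag of $D_{T'}$, and step~\eqref{item:td_edges} for a child town $T''$ inserts $v$ into every bag of $D_{T''}$ exactly for those $T''$ whose connecting bag ends up containing $v$ --- which by the previous paragraph forces that connecting bag into the connected set of $D_X$-bags just found (the order in which steps~\eqref{item:td_edges} and~\eqref{item:td_connected} are applied to the various child towns is immaterial here, since any extra copy introduced by~\eqref{item:td_connected} is only ever propagated further to a $D_{T''}$ hanging below the same bag). Since each $D_{T''}$ is attached to $D_T$ at the connecting bag of $T''$, gluing each such $D_{T''}$ (in particular $D_{T'}$, attached at $b$) onto that connected set keeps the union a connected subtree; and no bag outside this collection contains $v$, because a $D_{T''}$-bag receives $v$ only via step~\eqref{item:td_edges} for $T''$ or, for $T''=T'$, from the recursion, while a $D_X$-bag receives $v$ only via step~\eqref{item:td_connected} applied to $T'=T_v$. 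This proves condition~\eqref{item:tw-vertices}, and together with the first paragraph that $D_T$ is a valid tree decomposition of $H_T$. I expect the main obstacle to be precisely this bookkeeping --- identifying exactly which bags hold an approximate core hub --- together with the lemma-sized fact that the connecting bag of $T_v$ lies on the representative's cluster chain: that is what lets the net-hierarchy path and the subtree hanging below the connecting bag fuse into one connected piece, and it is why the exclusion clause in step~\eqref{item:td_connected} (not copying $v$ into the $D_{T''}$ with $T''\neq T_v$) is needed to avoid a stray disconnected occurrence of $v$.
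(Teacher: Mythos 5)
Your proposal is correct and follows essentially the same route as the paper's proof: reduce to the case of an approximate core hub $v\in X_T\cap T'$, use the fact that the cluster of the connecting bag $b$ contains all of $X_T\cap T'$ (because the closest hub to $T'$ lies in $T'$ and all of $X_T\cap T'$ share one representative), invoke the net-hierarchy and partition properties of \autoref{thm:Talwar} and \autoref{lem:splittree} to locate the $D_X$-bags containing $v$ relative to $b$, and conclude that step~\eqref{item:td_connected} glues everything into a connected subtree while steps \eqref{item:tw-union}--\eqref{item:tw-edges} are unaffected. Your treatment is somewhat more explicit than the paper's (e.g.\ the case $v\in b$, and the order-of-processing/propagation of added hubs into other $D_{T''}$), but these are refinements of the same argument rather than a different approach.
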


\begin{proof}
Suppose 
there is a vertex $v$ such that the bags containing $v$ are not connected 
after performing the first step. 
By \autoref{thm:Talwar} and by induction, the sets of bags containing each 
vertex are connected within $D_X$ and $D_{T'}$ for all child towns $T'$, so $v$ 
must be in $X_T \cap T'$ for some $T'$. This means that $v$ is an approximate 
core hub of $T$ that happens to lie in the child town~$T'$. Since child towns of 
$T$ are disjoint by \autoref{lem:laminar-towns}, $v$ cannot be contained in two 
different ones, so that $T'$ is the only child town containing $v$. Note that 
$v$ cannot be in the connecting bag $b$ of $T'$ because then the first step 
would have added $v$ to all bags of $D_{T'}$, which would have connected the 
sets of bags in $D_X$ and $D_{T'}$ containing $v$. Hence it can only be that $v$ 
is in a bag of $D_{T'}$ and in some bag of $D_X$ other than the connecting bag 
of~$T'$. 

We know from \eqref{item:net-hierarchy} in \autoref{thm:Talwar} that the 
vertices in the bags of the decomposition $D_Y$ for the representative hubs 
$Y_T$ of $T$ form a hierarchy: every vertex in a bag $b'$ of $D_Y$ is 
also contained in one of the child bags of $b'$. Recall that the 
decomposition $D_X$ of $X_T$ is obtained from $D_Y$ by simply replacing each 
vertex with all hubs it represents. Hence the vertices in the bags of $D_X$ also 
form a hierarchy. Furthermore, all hubs in $X_T\cap T'$ are in the same bags in 
$D_X$, since they are represented by the same vertex of $Y_T$. Since~$v\in 
X_T\cap T'$ is not yet in the connecting bag $b$ of~$T'$, this means that 
in $D_X$ none of the hubs in $X_T\cap T'$ are in a bag on a higher level than 
$b$.

Recall that we choose the connecting bag $b$ so that its corresponding cluster 
contains the closest approximate core hub $h$ to $T'$. In this case, $X_T\cap 
T'\neq\emptyset$ as it contains $v$, so $h$ is a hub in $X_T\cap T'$. By the 
construction of $D_X$, if $b$ contains $h$ then $b$ contains the entire set 
$X_T\cap T'$. By \eqref{item:partition} of \autoref{lem:splittree}, on each 
level the clusters for $Y_T$ partition $Y_T$. Clearly this is also true for 
$X_T$. Hence any hub of $X_T\cap T'$, including the problematic vertex $v$, can 
only be contained in bags of the decomposition~$D_X$ that are descendants of 
$b$. 

Due to these observations we add all hubs of $X_T\cap T'$ to each bag 
of~$D_{T'}$ and also to $b$ and all descendants of $b$ in $D_X$, and this will
ensure there will not be any $v$ for which the bags containing it are disconnected
in the resulting decomposition.
Note that we 
do not need to add these hubs to descendants of $b$ in $D_T$ that are bags of 
some $D_{T''}$ for some other child town $T''\neq T'$.

For the second part of the lemma, note that adding nodes to bags does not
break conditions~\eqref{item:tw-union} or~\eqref{item:tw-edges} of
\autoref{dfn:treewidth} established in \autoref{lem:td_edges},
so the resulting tree decomposition is valid.
\end{proof}

At this point we have a valid tree decomposition $D_T$, 
but we still need to bound the sizes of the resulting bags 
in $D_X$ and each $D_{T'}$. 
We use the following two lemmas to bound the size of the bags of $D_X$.
In the first we show that for each bag $b$ of $D_X$, the 
number of child towns connecting to~$b$ and containing approximate core hubs is 
bounded. In the second lemma we prove a bound on the maximum number of 
approximate core hubs in each child town.

\begin{lem}\label{lem:nr_children}
Let $b$ be a bag of the decomposition $D_X$ of the embedding $H_X$ for $X_T$, 
and let $d$ be the doubling dimension of $X_T$. The number of child towns $T'$ 
of $T$ for which $X_T\cap T'\neq\emptyset$ and for which~$b$ is their 
connecting bag, is~$O((d/\eps)^{d})$.
\end{lem}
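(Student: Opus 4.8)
The plan is to bound the number of such child towns by packing them into the cluster $C$ corresponding to the bag $b$ and then invoking the doubling-dimension bound on $X_T$. First I would recall that a child town $T'$ has $b$ as its connecting bag only if the cluster $C$ of $b$ at level $\bar l$ contains the closest approximate core hub $h_{T'}\in X_T$ to $T'$, and that by the choice in \autoref{alg_embedding} this level satisfies $\bar l=\min\{\bar j,\bar i+\lceil\log_2(1/\eps)+\log_2 d\rceil\}$ where $\bar i=\lceil\log_2 r_i\rceil$ and $(r_i,r_{i+1}]$ is the interval containing $\dist_G(T',T'')$ for the closest sibling $T''$. The key point is that the separation of $T'$ from its closest sibling is $\Theta(r_i)$, while the diameter of the cluster $C$ is only $2^{\bar l+1}=O(d\,r_i/\eps)$ by property~\eqref{item:diameter} of \autoref{lem:splittree}. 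Thus all the hubs $h_{T'}$ for the child towns $T'$ connecting to $b$ lie in a set of diameter $O(d\,r_i/\eps)$, and since $X_T\cap T'\neq\emptyset$ for each such $T'$, each of these hubs is a genuine point of the metric $X_T$.

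Next I would argue that distinct child towns give hubs that are far apart in $X_T$. If $T'$ and $T''$ are two distinct child towns each containing an approximate core hub, then by \autoref{lem:townproperties} they are separated by more than their diameters, and a short calculation (as in the proof of \autoref{lem:sep-prob}) shows that the closest-hub representatives $h_{T'}\in X_T\cap T'$ and $h_{T''}\in X_T\cap T''$ satisfy $\dist_G(h_{T'},h_{T''})=\Omega(r_i)$, because $h_{T'}$ is within $O(r_i)$ of $T'$ (it is the closest approximate core hub, and by \autoref{lem:detour}-style reasoning the closest hub is $O(r_i)$ away) while the inter-town distance dominates. Hence the set $\{h_{T'}\}$ is an $\Omega(r_i)$-separated subset of $X_T$ sitting inside a ball of radius $O(d\,r_i/\eps)$; its aspect ratio is $O(d/\eps)$. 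By \autoref{lem:dd_bound} applied to the metric $X_T$ of doubling dimension $d$, such a set has size at most $2^{d\lceil\log_2 O(d/\eps)\rceil}=(d/\eps)^{O(d)}$, which is the claimed bound.

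The main obstacle I anticipate is making the two distance estimates fully rigorous and uniform over all child towns connecting to the same bag $b$: on the one hand the upper bound, that every $h_{T'}$ lies within $O(d\,r_i/\eps)$ of a common center (one must be careful that the relevant scale $r_i$ is the same for all these child towns, or else argue that the clusters at level $\bar l$ force a common scale — this follows because $b$ is fixed, so $\bar l$ is fixed, hence $\bar i$ and thus $r_i$ are determined up to a constant factor); and on the other hand the lower bound, that hubs from different child towns are $\Omega(r_i)$-separated, which needs the separation-versus-diameter inequality of \autoref{lem:townproperties} together with the fact that the connecting-bag construction places the \emph{closest} hub to $T'$ inside $C$. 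Once both estimates are in place, the packing argument via \autoref{lem:dd_bound} is routine.
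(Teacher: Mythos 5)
Your proposal follows essentially the same route as the paper: pack one point per counted child town inside the cluster at level $\bar l$, get a pairwise separation of order $\eps 2^{\bar l}/d$ from the town separation together with the choice of $\bar l$, get an upper bound of order $2^{\bar l}$ on the diameter of the packed set from the split-tree, and conclude with \autoref{lem:dd_bound}. However, two steps as you state them lean on claims that are not literally true, and both are exactly what the paper's decision to pack the \emph{representatives} $Y\subseteq Y_T$ inside the cluster $C'$ of $Y_T$ (rather than the hubs $h_{T'}\in X_T$ inside the expanded cluster $C$) is designed to avoid. First, property~(3) of \autoref{lem:splittree} applies to the split-tree that Talwar's algorithm actually builds, which is run on $Y_T$; the cluster $C$ of $X_T$ is obtained by expanding every representative in $C'$ into \emph{all} hubs it represents, and a child town whose representative happens to lie in $C'$ may have diameter far larger than $2^{\bar l}$, so $C$ itself need not have diameter $2^{\bar l+1}$. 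The conclusion you need --- that the specific hubs $h_{T'}$ of the towns connecting to $b$ lie in a set of diameter $O(2^{\bar l})$ --- is still true, but it requires the extra observation that each such $h_{T'}$ is within $\diam(T')\le r_{i_{T'}}$ of its representative in $C'$ and that $r_{i_{T'}}=O(2^{\bar l})$ for the towns being counted (which itself takes a short argument when $\bar l$ is the root level); packing the representatives sidesteps all of this.

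Second, your parenthetical claim that fixing $b$ fixes $\bar l$ and hence determines $\bar i$ and $r_i$ up to a constant factor fails when the level is capped at the root, i.e.\ when $\bar l=\bar j$: every child town whose value $\bar i+\lceil\log_2(1/\eps)+\log_2 d\rceil$ exceeds $\bar j$ gets the root as its connecting bag, and these towns can have wildly different scales $r_i$. What survives is only the one-sided inequality $\bar i\ge\bar l-\lceil\log_2(1/\eps)+\log_2 d\rceil$, and that is all the paper uses: it yields the uniform separation lower bound $\eps 2^{\bar l-1}/d$ (any two counted towns are siblings and each contains its packed point because $X_T\cap T'\neq\emptyset$), while the uniform upper bound comes from the diameter of the level-$\bar l$ cluster of $Y_T$, not from a common scale $r_i$. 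With these two repairs your packing argument coincides with the paper's proof; also note that your detour through \autoref{lem:detour} for the separation is unnecessary, since $h_{T'}\in T'$ whenever $X_T\cap T'\neq\emptyset$, so the separation follows directly from $\dist_G(T',T'')>r_i$.
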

\begin{proof}
Let $Y\subseteq Y_T$ be the set containing exactly one representative for each 
of child town $T'$ that has $b$ as its connecting bag and for which $X_T\cap 
T'\neq\emptyset$. We can bound the size of $Y$ in order to bound the 
desired number of child towns. To prove the bound we will use the fundamental 
property of low doubling dimension metrics given by \autoref{lem:dd_bound}, 
which says that such metrics have a bounded number of vertices in terms of their 
aspect ratio. We will use this lemma to bound the size of $Y$ by deriving a 
bound on its aspect ratio: since the child towns connect to the same bag $b$, we 
are able to obtain an upper bound on the distance between the representatives in 
$Y$. We also get a lower bound on the distances from the fact that $b$ was 
chosen for a child town according to the minimum distance to any other child 
town.

More concretely, consider the tree decomposition $D_Y$ for the representative 
hubs~$Y_T$. The bag $b$ was obtained from a bag $b'$ of $D_Y$ by replacing each 
vertex with the represented hubs of~$X_T$. If the level of bag $b'$ is $\bar l$ 
then, by \eqref{item:diameter} of \autoref{lem:splittree}, the diameter of the 
cluster $C'$ corresponding to $b'$ is at most~$2^{\bar l+1}$.

Suppose $T'$ is a child town that has $b$ as its connecting bag and for which 
$X_T\cap T'\neq\emptyset$. The bag $b$ was chosen so that the corresponding 
cluster contains the closest hub $h$ of~$X_T$. Since $X_T\cap T'\neq\emptyset$, 
this means $h\in X_T\cap T'$. Analogous to the connecting bag~$b$, its cluster $C$
is obtained from cluster $C'$ by replacing each vertex with its represented 
hubs. Hence all of $X_T\cap T'$ resides in $C$. 
Accordingly, the representative for the set $X_T\cap T'$ of 
each considered child town $T'$ is in $C'$, i.e.,\ $Y\subseteq C'$.

Bags $b$ and $b'$ are at the same level $\bar{l}$. Recall that we chose this 
level in the following way: if the closest sibling of a child town~$T'$ is at a 
distance in the interval $(r_i,r_{i+1}]$, then the level $\bar l$ of $b$ is 
$\min\{\bar j,\bar i+\lceil\log_{2}(d/\eps)\rceil\}$, where $\bar j$ is the 
level of the root of $D_X$ and $\bar i=\lceil\log_2 r_i\rceil$. Let $\bar 
i'=\bar l-\lceil\log_{2}(d/\eps)\rceil$ so that $\bar i'\leq \bar i$. Thus the 
distance from $T'$ to any of its siblings is more than~$r_i\geq 2^{\bar i-1}\geq 
2^{\bar i'-1}\geq\eps 2^{\bar l-1}/d$.

Since each vertex of $Y$ is in a different child town, 
the distance between any pair of vertices in $Y$ is more 
than~$\eps 2^{\bar l-1}/d$, so the aspect ratio of the set $Y$ is at most 
$2^{\bar l+1}d/(\eps 2^{\bar l-1})= O(d/\eps)$, due to the bound on the 
diameter of cluster $C'$ containing $Y$. By \autoref{lem:dd_bound} we then get 
$|Y|\leq O((d/\eps)^d)$, and this bound is the same for the number of considered 
child towns.
\end{proof}

Next we prove that the number of approximate core hubs in each child town is 
bounded. This result will also help in bounding the treewidth of $H_X$, since 
it gives a bound on the number of approximate core hubs that a vertex from 
$Y_T$ 
represents.

\begin{lem}\label{lem:nr_reps}
For any child town $T'$ of $T$, the number of approximate core hubs in the 
intersection $X_T\cap T'$ is $O\left(s\log(1/\eps)/\lambda\right)$.
\end{lem}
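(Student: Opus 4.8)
The plan is to show that only $O(\log(1/\eps)/\lambda)$ levels of approximate core hubs can meet $T'$, and that each of these levels contributes at most $s$ hubs, since a town of diameter at most $r_{i'}$ fits inside a single ball of radius $cr_p/2$ for every relevant scale $r_p$. Throughout, let $i'$ be the level of the child town $T'$, recall from \autoref{alg:xt} that $X_T^l\subseteq\bigcup_{p=1}^{l}(C_p\cap\spc(r_p))$, and that by \autoref{lem:townproperties} we have $\diam(T')\le r_{i'}$.

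First I would describe the hubs that can lie in $T'$. Because $T'$ is a town on level $i'$ it is disjoint from the sprawl $S_{i'}$ by \autoref{dfn:towns}, and since the cores form the chain $C_p=T\cap\bigcap_{q=p}^{j-1}S_q$ (see \autoref{dfn:core}) we get $C_p\subseteq S_{i'}$, and hence $C_p\cap T'=\emptyset$, for every level $p\le i'$. Next, for $h'\in X_T\cap T'$ let $p(h')$ be the \emph{first} level at which $h'$ is added to $X_T$; at that level it cannot have entered as a shift of a higher-level core hub (a shift already lies in some $X_T^{l''}$ with $l''<p(h')$), so it was added directly, i.e.\ $h'\in C_{p(h')}\cap\spc(r_{p(h')})$. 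Combining the two observations, every hub of $X_T\cap T'$ lies in $\spc(r_{p(h')})\cap T'$ with $p(h')\ge i'+1$.

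Then I would invoke local nestedness. Since $\diam(T')\le r_{i'}$, setting $l_0=i'+\lceil\log_{c/4}(1/\eps)\rceil=i'+O(\log(1/\eps)/\lambda)$ gives $\diam(T')\le\eps r_{l_0}$, so \autoref{lem:nested} applies with $B=T'$. Let $i^\ast$ be the lowest level with $X_T^{i^\ast}\cap T'\neq\emptyset$ (if there is none, $X_T\cap T'=\emptyset$ and we are done). \autoref{lem:nested} then says that no element entering $T'$ has first-appearance level exceeding $\max\{i^\ast,l_0\}$. If $i^\ast>l_0$, minimality of $i^\ast$ forces $p(h')=i^\ast$ for every $h'\in X_T\cap T'$, so $X_T\cap T'\subseteq\spc(r_{i^\ast})\cap T'$; otherwise $i'+1\le p(h')\le l_0$ and $X_T\cap T'\subseteq\bigcup_{p=i'+1}^{l_0}\bigl(\spc(r_p)\cap T'\bigr)$. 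In either case, for each relevant $p$ we have $\diam(T')\le r_{i'}<r_p<cr_p/2$, so $T'$ lies in a ball of radius $cr_p/2$ and local $s$-sparsity of $\spc(r_p)$ yields $|\spc(r_p)\cap T'|\le s$; summing over the $O(\log(1/\eps)/\lambda)$ relevant levels gives the bound $O(s\log(1/\eps)/\lambda)$.

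I expect the main obstacle to be the bookkeeping around the shifting in \autoref{alg:xt}: a single hub can be ``re-added'' to $X_T^l$ on many levels $l$, so the union $\bigcup_l X_T^l$ is far from disjoint, and the only way to curb the number of \emph{distinct} hubs landing in $T'$ is to interlock two facts cleanly --- that a hub of $X_T\cap T'$ is, at its first appearance, a genuine $\spc(r_p)$-hub for some $p>i'$, and that local nestedness caps the highest level at which anything new can enter the small set $T'$. Getting the borderline case $i^\ast>l_0$ to yield the stronger single-level bound, rather than accidentally summing over too many levels, is the delicate point; the remaining steps (a town of diameter $\le r_{i'}$ sitting inside a ball of radius $cr_p/2$ for every $p>i'$, and converting $\log_{c/4}$ into $O(1/\lambda)$) are routine.
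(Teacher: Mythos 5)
Your proof is correct and follows essentially the same route as the paper's: exclude core hubs of level at most the level of $T'$, bound the $O(\log_{c/4}(1/\eps))=O(\log(1/\eps)/\lambda)$ levels just above it by $s$ each via local sparsity and $\diam(T')\le r_{i'}\le cr_p/2$, and use the local nestedness lemma (with the corner case where the lowest occupied level lies above the window contributing a single further $s$). Your bookkeeping via first-appearance levels and the explicit case split on $i^\ast$ versus $l_0$ is just a slightly more detailed rendering of the paper's ``assume $X=\emptyset$'' step.
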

\begin{proof}
  Suppose that $T'$ is a town on level $i$, and 
  recall from \autoref{sec:dd} that
  \begin{equation}\label{eq:hubs}
    X_T^i\subseteq \bigcup_{q=1}^i C_q\cap\spc(r_q),
  \end{equation}
  i.e.\ the approximate core hubs of level $i$ are core hubs on levels
  $i$ or below. By \autoref{dfn:core} no such core hubs exist, and
  hence $T'$ also does not contain any approximate core hubs of level
  at most $i$. 

  Let $l = i+ \lceil \log_{c/4}(1/\eps) \rceil$, and consider
  $q \in (i,l]$.
  Once more, since $T'$ does not contain core hubs of level at most~$i$, any 
approximate core hub of level $q$ must also be
  a core hub of level $l' \in (i,q]$, and hence we focus on bounding
  the size of $\spc(r_{l'}) \cap T'$ for each $l' \in (i,l]$. Recall that
  \autoref{lem:townproperties} implies that town $T'$ 
  has diameter at most $r_i \leq cr_{l'}/2$, and therefore $T'$ is contained in
  $B_{cr_{l'}/2}(v)$ for any $v \in T'$. 
  \autoref{dfn:spc} implies that $|B_{cr_{l'}/2}(v)\cap\spc(r_{l'})|\leq s$, and
  hence also $T'$ contains no more than $s$ level $l'$ core hubs. 
  In summary, we have just shown that the set 
  \[ X= T' \cap \bigcup_{q \leq l} X^q_T \]
  has cardinality at most $s\lceil \log_{c/4}(1/\eps)\rceil$. It remains to 
consider levels $q>l$.
  Yet again by \autoref{lem:townproperties}, $T'$ has diameter at most
  \[ r_i = \left(\frac{c}{4}\right)^i \leq \eps \left( \frac{c}{4}
  \right)^{l} < \eps r_q. \]
  \autoref{lem:nested} directly implies that any approximate core hub
  in $T'$ of level greater than $l$ is contained in $X$ if the
  latter set is non-empty. So let us assume that $X=\emptyset$.  In
  this case we argue as before, and use \autoref{dfn:spc} to bound
  $|\spc(r_q) \cap T'|$ by $s$.
  All in all, we showed that $T'$ contains 
  $O(s\log_{c/4}(1/\eps))$ approximate core hubs. 
\end{proof}

Using the obtained bounds in the above lemmas, we are now ready to prove that 
the treewidth of the embedding $H_T$ is bounded.

\begin{proof}[Proof of \autoref{lem:treewidth}]
Towns that have no children are singletons, since every vertex is a town on 
level~$0$. Hence for these the claim is trivially true. Otherwise, by 
\autoref{lem:laminar-towns}, a town has at least two children. For these we need 
to bound the resulting bag sizes of the tree decomposition $D_T$, as described 
in this section. First off we determine the treewidth of the embedding $H_X$ for 
$X_T$. The decomposition $D_X$ was obtained from the decomposition $D_Y$ for 
$Y_T$ by replacing each vertex with the hubs of $X_T$ it represents. For each 
vertex of $Y_T$ the number of represented hubs is bounded by 
\autoref{lem:nr_reps}, while the treewidth of the embedding for $Y_T$ is bounded 
by \autoref{thm:Talwar}. Thus if the doubling dimension of $Y_T$ is $d$ then the 
treewidth $t_X$ of $H_X$ is 
\[
t_X\leq(d\log(\alpha)/\eps')^{O(d)}\cdot s\log(1/\eps)/\lambda.
\]

In the first step of the transformation to make the tree decomposition $D_T$ 
valid, we add all vertices of a bag $b$ of $D_X$ to all bags of the 
decomposition trees $D_{T'}$ of child towns $T'$ for which $b$ is the connecting 
bag. By \autoref{lem:laminar-towns}, if $T$ is a town on level $j$ then each of 
its child towns is on some level $i\leq j-1$. Hence if, by induction, the 
treewidth of some embedding $H_{T'}$ was $i\cdot t_X$, then it is at most 
$j\cdot t_X$ after adding the vertices of~$b$.

In the second step of the transformation of $D_T$, we add all hubs of $X_T\cap 
T'$ to every bag of~$D_{T'}$. By \autoref{lem:nr_reps}, $|X_T\cap T'|\leq 
O(s\log(1/\eps)/\lambda)$ for any child town $T'$. This term is dominated by the 
asymptotic bound on $t_X$. The second step also adds the hubs of $X_T\cap T'$ to 
the connecting bag $b$ and all descendants of $b$ in $D_X$. Note that this does 
not affect the bags of a decomposition $D_{T''}$ of any child town $T''\neq T'$ 
of $T$. By \autoref{lem:nr_children}, each bag $b$ of $D_X$ receives approximate 
core hubs from $O((d/\eps)^d)$ child towns for which $b$ is the connecting bag. 
Each such child town adds $O(s\log(1/\eps)/\lambda)$ hubs to $b$ by 
\autoref{lem:nr_reps}. Hence the total number of hubs added to $b$ from child 
towns having $b$ as their connecting bag is $O((d/\eps)^d\cdot 
s\log(1/\eps)/\lambda))$. However these hubs are also added to all descendants 
of such a bag $b$. The total number of levels of the decomposition tree $D_X$ is 
$O(\log\alpha)$ by~\eqref{item:levels} of \autoref{lem:splittree}. Hence any 
bag of $D_X$ receives at most $O((d/\eps)^d\log(\alpha)\cdot 
s\log(1/\eps)/\lambda))$ additional hubs from all its ancestors. This term is 
again dominated by the asymptotic bound on~$t_X$, since~$\eps'=\eps^2$.

It follows that the treewidth of $D_T$ is $j\cdot O(t_X)$. Hence to
conclude the proof we only need to bound $t_X$. The doubling dimension
$d$ of $Y_T\subseteq X_T$ is
$O(\log(\frac{ks\log(1/\eps)}{\lambda})/\lambda)$ by
\autoref{thm:doubling_dim}. Since
$x\cdot(\log x)^{O(\log x)}\subseteq(\log x)^{O(\log x)}$,
$(x\log x)^{O(1)}\subseteq x^{O(1)}$, and $O(\log x)\subseteq O(x)$,
the treewidth $t_X$ of $H_X$ is at most
$\log(\alpha)^{O(\log^2(\frac{ks}{\eps\lambda})/\lambda)}$.
\end{proof}

\section{Obtaining approximation schemes}
\label{sec:ptas}

In this section we demonstrate how we can use the embedding of 
\autoref{thm:main} to derive QPTASs for various  network design problems when 
the input graph $G=(V,E)$ is an edge-weighted graph with low highway dimension. 
Specifically, we consider the Travelling Salesman, Steiner Tree and Facility 
Location problems. We begin by defining these (see also~\cite{Vazirani01book}), 
and we briefly mention how these problems historically arose in contexts given 
by transportation networks. 

For the \emph{Travelling Salesman} problem the shortest \emph{tour}, i.e.\ 
cycle in the shortest-path metric, visiting all vertices of $G$ needs to be 
found. One of the earliest references\footnote{For historical references see 
\citet{schrijver2005history} and \citet{cook2012TSP}.} to the Travelling 
Salesman problem appears in a manual of 1832, in which five tours through German 
cities are suggested to a traveling salesman. The problem became known as the 
``48 States Problem of Hassler Whitney'' in 1934 after Whitney studied it in the 
context of finding the shortest route along the capitals of the lower 48 US 
states. Later milestones in its study include computing the shortest routes 
through an increasing number of cities in countries such as the USA, Germany, 
and Sweden (though these instances used Euclidean distances).

In the \emph{Steiner Tree} problem, in addition to $G$ a set of \emph{terminals} 
$R\subseteq V$ is given. The aim is to find a minimum cost tree in $G$ spanning 
all terminals (a so called Steiner tree). An early reference\footnote{For 
historical references see \citet{brazil2014history}.} to the Steiner Tree 
problem appears in a letter by Gauss from 1836, who mentioned it in the context 
of connecting cities by railways. The problem was later popularized by the book 
``What is Mathematics?'' in 1941 by Courant and Robins, who described it in 
terms of minimizing the total length of a road network. 

The \emph{Facility Location} problem assumes additional weights on the vertices, 
and the goal is to select a subset of vertices $W\subseteq V$ (the 
\emph{facilities}). The \emph{opening cost} of a facility is given by its vertex 
weight, and the \emph{connection cost} of a vertex $v\in V$ is the distance from 
$v$ to the closest facility in~$W$. The objective is to minimize the sum of all 
opening and connection costs. The Facility Location 
problem has the same root\footnote{For historical references see 
\citet{smith2009history}.} as the Steiner Tree problem in the Fermat-Torricelli 
problem from 1643, in which a point is to be found that minimizes the total 
distance to three other points in the plane. The generalization to an arbitrary 
number of other points became known as the Weber problem, after Alfred Weber 
studied it in 1909 in the context of finding a factory location so as to 
minimize the transportation costs of suppliers. Among other problems, Hakimi 
introduced Facility Location
to networks in 1964, and related it to finding locations for police stations in 
road networks.

The main result of this section is the following, of which we give a proof 
sketch below.

\begin{thm}\label{thm:qptas}
If the input graph $G$ has constant highway dimension $k$ with constant 
violation~$\lambda>0$, then for any constant $\eps\in(0,1]$ a 
$(1+\eps)$-approximation to each of the Travelling Salesman, Steiner Tree and  
Facility Location problems can be found in quasi-polynomial time.
\end{thm}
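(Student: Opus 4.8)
The plan is to combine \autoref{thm:main} with known exact (or $(1+\eps)$-approximate) algorithms for the three problems on graphs of bounded treewidth, after first reducing each instance to one whose aspect ratio is polynomial.

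\textbf{Reducing the aspect ratio.} For each of the three problems I would begin with a standard scaling-and-rounding argument that replaces the input metric $(V,\dist_G)$ by a metric of aspect ratio $\mathrm{poly}(n/\eps)$ while losing only a $(1+\eps)$ factor. Concretely, one first computes in polynomial time a feasible solution of value $U$ with $\mathrm{OPT}\le U\le O(\mathrm{OPT})$ --- MST-doubling for \textsc{Travelling Salesman}, the minimum spanning tree on the terminals for \textsc{Steiner Tree}, and any constant-factor approximation for \textsc{Facility Location}. Since an optimal solution is a connected structure reaching every relevant vertex, the diameter of the relevant vertex set (for \textsc{Steiner Tree} one first discards vertices at distance more than $U$ from every terminal, and for \textsc{Facility Location} vertices whose cheapest connection exceeds $U$) is $O(U)=O(\mathrm{OPT})$, and every feasible solution uses $O(n)$ metric edges. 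Hence rounding every distance (and, for \textsc{Facility Location}, every opening cost) up to the nearest multiple of $\delta:=\eps U/n$ never decreases $\mathrm{OPT}$ and increases the value of every solution by at most $O(n\delta)=O(\eps\,\mathrm{OPT})$, while making the smallest nonzero distance $\ge\delta$ and the largest distance $O(U)$, i.e.\ the aspect ratio $O(n/\eps)$. The rounding perturbs each distance by a factor in $[1,1+O(\eps)]$ except on the $O(\log(1/\eps))$ smallest scales, so the rounded instance still has highway dimension $O(k)$ and violation $\Omega(\lambda)$. After rescaling $\eps$ it therefore suffices to $(1+\eps)$-approximate this rounded instance.

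\textbf{Embedding and solving.} Next I would apply \autoref{thm:main} to the reduced instance: its aspect ratio is $\alpha=\mathrm{poly}(n/\eps)$ and its highway dimension $k$ and violation $\lambda$ are constants, so the resulting random embedding $H$ has treewidth $(\log\alpha)^{O(\log^2(k/(\eps\lambda))/\lambda)}=(\log n)^{O(1)}$ and expected distortion $1+\eps$, with $\dist_G(u,v)\le\dist_H(u,v)$ for all $u,v$. Each of \textsc{Travelling Salesman}, \textsc{Steiner Tree} and \textsc{Facility Location} can be solved optimally on a graph of treewidth $t$ by dynamic programming over a tree decomposition in time $2^{O(t\log t)}\cdot\mathrm{poly}(n)$ (see \cite{ageev1992facility,bateni2011prize}); with $t=(\log n)^{O(1)}$ this is $2^{(\log n)^{O(1)}}$, i.e.\ quasi-polynomial. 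Let $S$ be the solution obtained on $H$. Since $\dist_G\le\dist_H$ and the objective of each problem is monotone in the distances, $\mathrm{cost}_G(S)\le\mathrm{cost}_H(S)=\mathrm{OPT}_H$; and because any optimal solution $S^*$ for $G$ is also feasible for $H$ we have $\mathrm{OPT}_H\le\mathrm{cost}_H(S^*)$, whose expectation is at most $(1+\eps)\mathrm{cost}_G(S^*)=(1+\eps)\mathrm{OPT}_G$. Thus $\E[\mathrm{cost}_G(S)]\le(1+\eps)\mathrm{OPT}_G$, and since always $\mathrm{cost}_G(S)\ge\mathrm{OPT}_G$, Markov's inequality yields $\mathrm{cost}_G(S)\le(1+O(\eps))\mathrm{OPT}_G$ with constant probability; $O(\log n)$ independent repetitions, returning the cheapest solution, give this with high probability (alternatively one derandomizes the split-tree construction). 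Rescaling $\eps$ finishes the proof.

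\textbf{Where the difficulty lies.} The two non-routine points are (i) verifying that the aspect-ratio reduction truly preserves low highway dimension --- rounding distorts shortest paths on the smallest scales, so one must check that only $O(\log(1/\eps))$ scales are affected and that a $(1+\eps)$-multiplicative perturbation of the metric changes the highway dimension and violation by only constant factors --- and (ii) noting that \autoref{thm:main} bounds only the \emph{expected} stretch, which is nonetheless enough because the domination $\dist_G\le\dist_H$ makes the solution computed in $H$ automatically feasible and no costlier in $G$, so an expectation bound plus Markov's inequality suffices. The bounded-treewidth dynamic programs and the constant-factor approximations used to seed the scaling step are standard.
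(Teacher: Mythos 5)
Your overall architecture (constant-factor seed solution, aspect-ratio reduction to $\mathrm{poly}(n/\eps)$, then \autoref{thm:main} plus exact bounded-treewidth algorithms, then the domination-plus-linearity-of-expectation argument) matches the paper. The second half of your argument is fine and is essentially the paper's. But the aspect-ratio reduction contains a genuine gap: you round all distances up to multiples of $\delta=\eps U/n$ and assert that the rounded instance ``still has highway dimension $O(k)$ and violation $\Omega(\lambda)$'' because distances are perturbed by a $(1+O(\eps))$ factor outside the $O(\log(1/\eps))$ smallest scales. This claim is unsupported and is exactly the difficulty the paper goes out of its way to avoid. Highway dimension (\autoref{dfn:hd}) is a statement about hitting sets for the \emph{vertex sets of exact shortest paths}; even a tiny multiplicative perturbation can make a completely vertex-disjoint detour become the shortest $u$--$v$ path, so the hubs of $G$ need not hit any shortest path of the perturbed instance and the highway dimension can blow up. Worse, in the metric obtained by rounding all pairwise distances up, every direct edge of the induced complete graph is itself a shortest path (rounding up is subadditive), so covering ``all shortest paths longer than $r$'' can degenerate into a vertex-cover-type requirement with unboundedly many hubs; and on the $\Theta(\log(1/\eps))$ scales you propose to wave away, the perturbation is up to a factor $2$, while the highway dimension bound has to hold at \emph{every} scale, so these scales cannot simply be excluded. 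In short, \autoref{thm:main} cannot be applied as a black box to the rounded instance without a proof that low highway dimension survives the rounding, and no such proof is offered (nor is one known; the paper's own examples show the parameter is quite brittle).

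The paper's proof is structured precisely to sidestep this: it reduces the aspect ratio by taking an $\eps\kappa/n$-net of $V$, explicitly concedes that the net need \emph{not} be a low highway dimension metric, and instead imposes the towns decomposition and the shortest path covers of the original graph $G$ onto the net, replacing each hub by the net point it is assigned to (``shifted hubs''). It then re-verifies the two places where low highway dimension was used: local sparsity of the (approximate, now additionally shifted) core hubs still follows from \autoref{lem:hub_bound} after truncating all levels below the one containing $\eps\kappa/n$, so that the total shift stays below $cr_i/2$; and the stretch analysis of \autoref{sec:stretch} goes through with an extra additive $O(\eps\kappa/n)$ per distance, which by linearity of the objectives costs only $O(\eps\kappa)=O(\eps\cdot OPT_G)$ overall, plus an $O(n\delta)$ term for converting a net solution back to one for $G$. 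If you want to keep your rounding-based route you would have to supply the missing robustness argument, which is the real content of this step; otherwise you should switch to the net-plus-shifted-hubs argument. A secondary, fixable slip: for Facility Location the diameter of the instance is not $O(OPT)$ (far-apart clusters can each contain a cheap facility); the paper instead splits the graph into connected components of the subgraph of edges of length at most $\kappa$ and solves each separately, giving diameter $O(n\kappa)$ and aspect ratio $O(n^2/\eps)$.
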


Our approach is similar to those used for Euclidean~\cite{Arora96polynomialtime} 
and low doubling dimension~\cite{talwar2004bypassing} metrics. Accordingly it 
can also be used for other problems, as in~\cite{Arora96polynomialtime}. The 
main idea is to compute a bounded treewidth graph from the input according to 
\autoref{thm:main}, and then optimally solve the computed graphs using known 
algorithms for which the running time can be bounded in terms of the treewidth. 
However, the treewidth bound of \autoref{thm:main} depends on the aspect ratio 
$\alpha$. To guarantee quasi-polynomial running times we therefore need to 
ensure that the aspect ratio of the input used in \autoref{thm:main} is not too 
large. We achieve this by computing a coarse net of polynomial aspect ratio for 
the input graph first. It is not too hard to show that only a small distortion 
of the optimum solution is incurred if the nets are fine enough, and we 
therefore obtain approximation schemes for the input instances. However, it is 
not necessarily the case that the nets themselves are shortest-path metrics of 
low highway dimension graphs, even if they are obtained from graphs of low 
highway dimension. Hence we need to argue that we can actually achieve the 
treewidth bound of \autoref{thm:main}, even though we use the nets as inputs. 

We go on to describe how a QPTAS as claimed in \autoref{thm:qptas} can be 
obtained, if a problem~$\mc{P}$ has the following properties. Thereafter we will 
show that they are true for each of our considered problems.
\begin{enumerate}
\item An optimum solution for $\mc{P}$ can be computed in time $n^{O(t)}$ for 
graphs of treewidth $t$, \label{it:tw}
\item a constant approximation to $\mc{P}$ in $G$ can be computed in 
(quasi-)polynomial time, \label{it:apx}
\item the diameter of the input graph $G$ can be assumed to be $O(n\cdot 
OPT_G)$, where $OPT_G$ is the cost of an optimum solution in~$G$, 
\label{it:diam}
\item an optimum solution in a $\delta$-net of the vertices $V$ of $G$ has 
cost at most $OPT_G+O(n\delta)$, \label{it:net}
\item the optimization function of $\mc{P}$ is linear in the edge costs, and 
\label{it:fct}
\item any solution of $\mc{P}$ in a $\delta$-net of the vertices $V$ of $G$ 
can be converted to a solution for $G$ losing at most an additive factor of 
$O(n\delta)$. \label{it:conv}
\end{enumerate}

Assuming that $\eps$, the highway dimension $k$, and the violation $\lambda$ 
are constant, the treewidth bound of \autoref{thm:main} is polylogarithmic in 
the aspect ratio $\alpha$. Combining \autoref{thm:main} with an algorithm for 
bounded treewidth graphs having a running time as proclaimed in \autoref{it:tw}, 
thus does not guarantee quasi-polynomial running time yet, since $\alpha$ might 
be large. Hence we will reduce the aspect ratio by pre-computing a coarse set of 
vertices of the input first. In particular, we greedily compute a $\delta$-net 
of $V$, where $\delta=\eps\kappa/n$ and $\kappa=\Theta(OPT_G)$ is a constant 
approximation of the cost $OPT_G$ of the optimum solution for the considered 
problem, which can be obtained according to \autoref{it:apx}. We \emph{assign} 
each vertex in $V$ to the closest point of the $\eps\kappa/n$-net. Note that 
this point is unique if we assume each shortest-path length to be unique. Since 
the minimum distance between any two vertices of the $\eps\kappa/n$-net is 
$\Omega(\eps\cdot OPT_G/n)$ and at most $O(n\cdot OPT_G)$ 
according to \autoref{it:diam}, the aspect ratio of the net is $O(n^2/\eps)$. 
For such polynomial aspect ratios, the treewidth guaranteed by 
\autoref{thm:main} yields quasi-polynomial $2^{O(\text{polylog}(n))}$ running 
times given an algorithm for bounded treewidth graphs as in \autoref{it:tw}.

Computing an embedding for the metric given by the $\eps\kappa/n$-net is not
straightforward though, since the net is not necessarily a metric given by the 
shortest-path distances of a low highway dimension graph. We will therefore use 
the structure of the input graph~$G$ and impose it on the computed net. More 
concretely, a town $T$ on level $i$ of $G$ induces a town $T'$ of level $i$ of 
the $\eps\kappa/n$-net, by restricting $T$ to the vertices of the net. All 
properties such as laminarity, separation bounds, and diameter (see 
\autoref{sec:properties}) needed for our construction are maintained by these 
subsets~$T'$. However the shortest-path covers are not maintained, since the 
hubs might not be part of the $\eps\kappa/n$-net. Instead of a shortest path 
cover, for every level $i$ we will use a set of \emph{shifted hubs}. For each 
hub in $\spc(r_i)$ of $G$ this set of shifted hubs contains the vertex of the 
$\eps\kappa/n$-net it was assigned to, which is at distance at most 
$\eps\kappa/n$.

Note that the towns decomposition of the net is given by the original hubs of 
the input graph~$G$, and not the shifted hubs. Consider the embedding that 
results from using the shifted hubs together with the imposed towns 
decomposition of the $\eps\kappa/n$-net as input to the algorithm. Apart from 
the fact that towns contain only a subset of the vertices, the only difference 
to using $G$ as input to the algorithm is that the approximate core hubs $X_T$ 
of a town $T$ are now shifted by a total of at most $\eps r_i + \eps\kappa/n$ on 
level $i$ from the original positions of the hubs in $G$. By re-examining the 
proofs of \autoref{sec:stretch} it is therefore not hard to see that in the 
embedding of the net the expected shortest-path length for any pair~$u,v$ is 
$(1+O(\eps))(\dist_G(u,v)+O(\eps\kappa/n))$, when using these hubs. By 
\autoref{it:net} the optimum solution in the $\eps\kappa/n$-net has cost at most 
$OPT_G+\eps\kappa$, and by \autoref{it:fct} the optimization function is linear 
in the edge costs. Hence by linearity of expectation, the optimum solution in 
the embedding, computed by the algorithm given by \autoref{it:tw}, has expected 
cost at most $(1+O(\eps))(OPT_G+O(\eps\kappa))= (1+O(\eps))OPT_G$. This 
solution still has to be converted into a solution of the input graph $G$, which 
can be done by \autoref{it:conv} with only an $O(\eps\kappa)$ additive 
overhead. Hence we obtain an approximation scheme.

We still need to argue that we obtain the same treewidth bound of 
\autoref{thm:main} when using shifted hubs. In particular, it might be that the 
approximate core hubs are not locally sparse, due to the additional 
$\eps\kappa/n$ shift. To argue that local sparsity can be maintained, we make 
the level $j$ for which $\eps\kappa/n\in (r_j,r_{j+1}]$ the lowest level, i.e.\ 
for any level below $j$ we remove all hubs. Note that the resulting set of hubs 
still covers all distances in the $\eps\kappa/n$-net. The total shift of a hub 
is now at most $\eps\kappa/n + \eps r_i\leq r_{j+1}+\eps r_i\leq (c/4+\eps)r_i$, 
since we made $j$ the lowest level. If we assume that $\eps\leq 1$ then this 
shift is less then $cr_i/2$. Accordingly, \autoref{lem:hub_bound} still implies 
that the hubs in $X_T$ are locally $3ks$-sparse, as needed. All other proofs 
are as before and thus we obtain the same treewidth bound as in 
\autoref{thm:main}.

Thus if all claimed properties for the considered problems are true, then this 
gives us QPTASs for low highway dimension graphs, as claimed in 
\autoref{thm:qptas}. We will go on to argue that each of the properties can be 
maintained for Travelling Salesman, Steiner Tree, and Facility Location. For 
the latter two, in addition to using a net as input instead of $G$, we also 
need to specify the additional input parameters. In particular for Steiner 
Tree, in addition to assigning each vertex of $G$ to the closest net point, we 
also need to shift terminals. More concretely, if a terminal of $R$ is assigned 
to a vertex $v$ of the net, then we make $v$ a terminal of the net. For Facility 
Location we need to adapt the opening costs in the net, which we do by setting 
the cost of a vertex $v$ in the net to the smallest cost of any vertex of $G$ 
assigned to $v$.

For each of the three problems, the linearity of the optimization function as 
required by \autoref{it:fct} is obvious from their definitions. For Travelling 
Salesman and Steiner Tree, \citet{bateni2011prize} show how to solve these 
problems in time $n^{O(1)}\cdot t^t$, where $t$ is the treewidth of the input 
instance. For Facility Location, \citet{ageev1992facility} gives an $O(n^{t+2})$ 
algorithm. This settles \autoref{it:tw}. It is well-known that a 
$2$-approximation for Travelling Salesman can be obtained from the minimum 
spanning tree~(MST), and that for Steiner Tree the MST on the metric induced by 
the terminals is a $2$-approximation~(see e.g.~\cite{Vazirani01book}). 
\citet{Mahdian06FacilityLocation} give a 1.52-approximation algorithm for the 
Facility Location problem. Hence we obtain an estimate $\kappa=\Theta(OPT_G)$ in 
each case, so that also \autoref{it:apx} is true.

It is easy to see that for any instance of the Travelling Salesman problem, 
$OPT_G$ is at least twice the diameter of the graph $G$. For Steiner Tree, 
observe that the maximum distance between two terminals is at most $OPT_G$. 
Therefore we can remove Steiner vertices (vertices that are not terminals) which 
are farther away from any terminal than $\kappa$. Thus the diameter of $G$ 
is~$O(\kappa)$. For Facility Location, consider a subgraph induced by edges of 
length at most~$\kappa$. Note that in an optimal solution, for any vertex the 
closest facility will be in its connected component in this subgraph. Hence we 
can solve the problem on each component separately. 
The diameter of such a component is at most $O(n\kappa)$. 
Therefore, we can assume that \autoref{it:diam} is true 
in each case.

The optimum Travelling Salesman tour in the net is at most $OPT_G$, since the 
net is a subset of~$V$. Since the terminals for the Steiner Tree problem are 
shifted by at most $\delta$ in a $\delta$-net, the optimum solution in the net 
has cost at most $OPT_G+n\delta$. By setting the vertex weights of the net as 
described above for the Facility Location problem, taking each facility of the 
optimum solution in $G$ and shifting it to the vertex of a $\delta$-net it is 
assigned to will increase only the total connection cost by at most $n\delta$. 
Hence the optimum solution in the net (with the adapted vertex weights) has cost 
at most $OPT_G+n\delta$. 
This shows \autoref{it:net} for each problem.

Given a solution of a $\delta$-net of a graph $G$ for Travelling Salesman, we 
obtain a tour for $G$ by making a detour from each vertex $v$ of the net to the 
vertices of $G$ assigned to $v$. The total overhead of this step is at most 
$2n\delta$. For Steiner Tree, we obtain a solution for $G$ by connecting each 
terminal in~$R$ to the terminal of the $\delta$-net it is assigned to. This 
introduces an additional cost of $n\delta$ in total. The algorithm for Facility 
Location by \citet{ageev1992facility} solves a generalization of the problem where the 
connection cost of each vertex is weighted. More concretely, in addition to the 
weight determining the opening cost, each vertex $v$ also has a weight 
$\varphi(v)$, and the connection cost of $v$ for a set $W$ of facilities is 
$\varphi(v)\cdot\dist(v,W)$. In a $\delta$-net we can set $\varphi(v)$ to be the 
number of vertices of $G$ assigned to~$v$. If a facility is opened on a vertex 
$v$ of the net, we obtain a solution to $G$ by shifting the facility to the 
vertex of smallest opening cost assigned to $v$. By our choice of the opening 
costs in the net, the total opening cost for the solution in $G$ is the same as 
for the solution in the net. Due to the additional weights~$\varphi(v)$, the 
total connection cost in the solution for $G$ is at most $n\delta$ larger than 
in the solution for the $\delta$-net. 
This shows \autoref{it:conv}, which was 
the last needed property to prove \autoref{thm:qptas}.

\ignore{
---------------------

Hence we would like to apply \autoref{thm:main} to reduce $G$ to an 
embedding with bounded treewidth. However, the embedding we construct has 
polylogarithmic treewidth in the aspect ratio $\alpha$, even if $\eps$, the 
violation~$\lambda$, and the highway dimension $k$ are constant. Thus to obtain 
quasi-polynomial running times we need to reduce the aspect ratio. In the 
following we describe how to obtain an instance from $G$ that has aspect ratio 
$O(n/\eps)$ without considerable loss in the solution quality. This implies the 
QPTAS.

As in~\cite{talwar2004bypassing}, the main idea is to compute a coarse net of 
$G$ with linear aspect ratio. In particular, we greedily compute an 
$\eps\kappa/n$-net of $V$, where $\kappa$ is an estimate of the optimum in $G$. 
We then assign each vertex in $V$ to the closest vertex of the 
$\eps\kappa/n$-net. For the Travelling Salesman problem it is 
well-known~\cite{Vazirani01book} that a $2$-approximation of the optimum tour 
length $OPT$ can be obtained from the minimum spanning tree (MST). Therefore we 
can set $\kappa$ to the cost of this approximation, so that the minimum distance 
between vertices in the $\eps\kappa/n$-net is $\Theta(\eps OPT/n)$. It is easy 
to see that for any instance of the Travelling Salesman problem, $OPT$ is at 
least twice the diameter of the graph, which implies an aspect ratio of 
$O(n/\eps)$ for the net.

Computing an embedding for the metric given by the $\eps\kappa/n$-net is not as 
straightforward. The reason is that the net is not necessarily a metric given 
by the shortest path distances of a low highway dimension graph. We will 
therefore use the structure of the input graph $G$ and impose it on the computed 
net. More concretely, a town on level $i$ of the $\eps\kappa/n$-net is a subset 
of vertices of a town of $G$ on level $i$. Clearly all properties such as 
laminarity, separation bounds, and diameter (see \autoref{sec:properties}) 
needed for our construction are maintained by these subsets. However the 
shortest path covers are not maintained, since the hubs might not be part of the 
$\eps\kappa/n$-net. Instead of a shortest path cover, for every level $i$ we 
will use a set of \emph{shifted hubs}, which for each hub in $\spc(r_i)$ of $G$ 
contains the closest vertex of the $\eps\kappa/n$-net. Hence for each pair of 
vertices $u,v$ of the $\eps\kappa/n$-net for which 
$\dist_G(u,v)\in(r_i,cr_i/2]$, there is a shifted hub $h$ on level $i$ such that 
$\dist_G(u,h)+\dist(h,v)\leq \dist_G(u,v)+2\eps\kappa/n$. 

Consider the embedding that results from using the shifted hubs together with 
the \mbox{$\eps\kappa/n$-net} as input to the algorithm. The algorithm will 
compute the approximate core hubs $X_T$ for each town $T$ given the shifted 
hubs. By closely examining the proofs of \autoref{sec:stretch} it can be shown 
that in the embedding the expected path length for any pair $u,v$ is 
$(1+O(\eps))(\dist_G(v,u)+O(\eps\kappa/n))$, when using these hubs. Clearly the 
optimum tour on the net is at most $OPT$ in $G$. Hence by linearity of 
expectation, the optimum Travelling Salesman tour on the embedding has length 
at most $(1+O(\eps))(OPT +O(\eps\kappa))= (1+O(\eps))OPT$. After computing the 
optimum tour for the $\eps\kappa/n$-net using the algorithm for bounded 
treewidth graphs, we obtain a tour for $G$ by making a detour from each vertex 
$v$ of the $\eps\kappa/n$-net to the vertices of $G$ assigned to $v$. Clearly 
the total overhead of this step is at most $2\eps\kappa$, and therefore this 
implies an approximation scheme.

We still need to argue that the above algorithm has quasi-polynomial running 
time. In particular, it might be that the shifted hubs are not locally sparse, 
which would imply that the treewidth is greater than claimed in 
\autoref{thm:main}. To argue that local sparsity can be maintained, we make the 
level $j$ for which $\eps\kappa/n\in (r_j,r_{j+1}]$ the lowest level. More 
concretely, for any level below $j$ we remove all hubs. Note that the resulting 
set of shifted hubs still covers all distances in the $\eps\kappa/n$-net. In 
addition to moving hubs by at most $\eps\kappa/n$ to obtain the shifted hubs, 
the construction of the embedding moves hubs again by at most $\eps r_i$ for 
each level $i$ to obtain the approximate core hubs $X_T$ for each town $T$. 
Hence the total shift is at most $\eps\kappa/n + \eps r_i\leq r_{j+1}+\eps 
r_i\leq (c/4+\eps)r_i$ since we made $j$ the lowest level. If we assume that 
$\eps\leq 1$ then this shift is less then $cr_i/2$. Accordingly, 
\autoref{lem:hub_bound} still implies that the hubs in $X_T$ are locally 
$3ks$-sparse, as needed to bound the treewidth of the embedding according to 
\autoref{thm:main}.

The only additional place where we rely on the local sparseness of hubs is for 
\autoref{lem:cover_1level}, in which we show that we can cover the sprawl in a 
ball of radius $cr_i/2$ with balls of radius $2r_i$. However in the proof of 
this lemma the center vertices of the smaller balls are allowed to be hubs that 
are not part of $X_T$ (due to \autoref{lem:set-d-dim}). Accordingly we can use 
the original hubs of the shortest path covers of $G$ in this argument, which are 
still locally $s$-sparse. Therefore all the bounds on the doubling dimension of 
the $X_T$ sets remain unchanged. Consequently the treewidth is polylogarithmic 
in the aspect ratio as claimed in \autoref{thm:main}, even when considering 
shifted hubs of the $\eps\kappa/n$-net. Thus this gives us a QPTAS for the 
Travelling Salesman problem on low highway dimension graphs.

\paragraph*{Steiner Tree.}

For this problem we again compute an $\eps\kappa/n$-net in order to reduce the 
aspect ratio, and assign each vertex of $G$ to the closest net point. We also 
need to shift terminals in this setting though. More concretely, if a terminal 
of $R$ is assigned to a vertex $v$ of the $\eps\kappa/n$-net, then we make $v$ 
a terminal of the net. It is easy to see that the optimum Steiner tree in the 
net has cost at most $OPT+\eps\kappa$, where $OPT$ is the cost of the optimum 
solution in $G$.

To find the minimum Steiner tree of a bounded treewidth graph, we can again use 
an algorithm of~\cite{bateni2011prize}, which also has running time 
$O(poly(n)\cdot t^t)$. After computing an optimum solution to the bounded 
treewidth embedding of the net, we obtain a solution for $G$ by connecting each 
terminal in~$R$ to the terminal in the solution of the $\eps\kappa/n$-net it is 
assigned to. This only introduces an additional $\eps\kappa$ cost in total. It 
is well-known~\cite{Vazirani01book} that the MST on the metric induced by the 
terminals is a $2$-approximation to the optimum Steiner tree. Therefore we set 
$\kappa$ to the cost of this MST, which, as above, implies an approximation 
scheme for the Steiner Tree problem.

To bound the aspect ratio of the $\eps\kappa/n$-net, observe that the maximum 
distance between two terminals is at most the cost of the MST. It is easy to see 
that w.l.o.g.\ we can remove Steiner vertices (vertices that are not terminals) 
which are farther away from any terminal than the cost of the MST. Hence the 
diameter of $G$ is bounded by $O(\kappa)$. This implies an $O(n/\eps)$ aspect 
ratio for the $\eps\kappa/n$-net. All other details to show that the resulting 
treewidth is polylogarithmic in $n/\eps$ remain the same as for the Travelling 
Salesman problem. Hence we obtain a QPTAS for the Steiner Tree problem.

\paragraph*{Facility Location.}

Again the idea is to first compute an $\eps\kappa/n$-net of the input graph 
$G$, and assign vertices of $G$ to the nearest net points. To find a 
$\kappa\in\Theta(OPT)$, where $OPT$ is the optimum solution cost of~$G$, we can 
for instance use the 1.52-approximation algorithm for the Facility Location 
problem by \citet{Mahdian06FacilityLocation}. Consider a subgraph induced by 
edges of length at most $\kappa$. Note that in an optimal solution, for any 
vertex the closest facility will be in its connected component in this subgraph. 
Hence we can solve the problem on each component separately. Since the diameter 
of such a component is at most $O(n\kappa)$, the aspect ratio can be assumed to 
be $O(n^2/\eps)$.

One issue when solving the problem on a $\eps\kappa/n$-net is that it might be 
that the vertex weights of the net are large. Hence a solution computed for the 
net can be quite bad compared to~$OPT$. We therefore need to adapt the weights 
in the net, which we do by setting the weight of a vertex $v$ of the 
$\eps\kappa/n$-net to the smallest weight of any vertex of $G$ assigned to $v$. 
In particular, taking each facility of the optimum solution in $G$ and shifting 
it to the vertex of the $\eps\kappa/n$-net it is assigned to, will increase only 
the total connection cost by at most $\eps\kappa$. Hence the optimum solution in 
the net (with the adapted vertex weights) has cost at most $OPT+\eps\kappa$.

\citet{ageev1992facility} gives an algorithm that solves the Facility Location problem 
on a graph with treewidth~$t$ in time $O(n^{t+2})$. Moreover his algorithm 
solves a generalization of the problem where the connection cost of each vertex 
is weighted. More concretely, in addition to the weight determining the opening 
cost, each vertex $v$ has a weight $\varphi(v)$, and the connection cost of $v$ 
for a set $W$ of facilities is $\varphi(v)\cdot\dist(v,W)$. For the 
$\eps\kappa/n$-net we set $\varphi(v)$ to the number of vertices of $G$ assigned 
to $v$. We then compute the embedding for the net and solve the problem on 
the resulting bounded treewidth graph, as usual. If a facility is opened on a 
vertex $v$ of the net, we obtain a solution to $G$ by shifting the facility to 
the vertex assigned to $v$ of smallest opening cost. Hence the total opening 
cost for the solution in $G$ is the same as for the solution in the net. Due to 
our choice of the additional weights~$\varphi(v)$, the total connection cost in 
the solution for $G$ is at most $\eps\kappa$ larger than in the solution for the
$\eps\kappa/n$-net. Hence, by the same arguments as for the Travelling Salesman 
problem, we obtain a QPTAS for Facility Location.

} %

\section{Comparing alternative definitions of the highway dimension}
\label{sec:alt-defs}

In this section we compare the different definitions of highway dimension, as 
given in~\cite{abraham2010highway,abraham2011vc,abraham2010highway2} and this 
paper. We also consider the hardness of computing the highway dimension. The 
original definition of~\cite{abraham2010highway} is the one we 
consider in the present work (with violation $\lambda=0$ in \autoref{dfn:hd}). 
In a follow-up paper~\cite{abraham2011vc} a more general definition was given 
(along with alternative notions such as the \emph{average} and 
\emph{cardinality-based} highway dimension, which we do not consider here). 
Later in~\cite{abraham2010highway2} another much more restrictive definition 
was given, under which graphs of constant highway dimension also 
have constant doubling dimension. Hence using this definition, the result of 
\citet{talwar2004bypassing} can be applied immediately to obtain a 
bounded-treewidth embedding with small distortion.

Note that this is not true for graphs of constant highway dimension according to 
\autoref{dfn:hd}: a star with unit edge lengths can use the center vertex as the 
single hub for any scale, since all shortest paths pass through it. Hence its 
highway dimension is $1$, but the doubling dimension of a star is~$\log_2 n$. 
In the following we will show that in fact a graph that has constant highway 
dimension according to~\cite{abraham2010highway2}, also has constant highway 
dimension according to \autoref{dfn:hd} if the violation is zero. Hence the 
original definition of~\cite{abraham2010highway} is a generalization of the one
used in~\cite{abraham2010highway2}. As far as we know, this has not been 
observed anywhere else yet. The highway dimension in~\cite{abraham2010highway2} 
is defined as follows.

\begin{dfn}[\cite{abraham2010highway2}]\label{dfn:hd1}
Given a shortest path $P=(v_1,\ldots, v_k)$ and $r>0$, an \emph{$r$-witness
path} $P'$ is a shortest path with length more than $r$, such that $P'$ can be 
obtained from $P$ by adding at most one vertex to each end. That is, either 
$P'=P$, or $P'=(v_0,v_1,\ldots, v_k)$, or $P'=(v_1,\ldots, v_k,v_{k+1})$, or 
$P'=(v_0,v_1,\ldots, v_k,v_{k+1})$. If $P$ has an $r$-witness path $P'$ it is 
said to be \emph{$r$-significant}, and $P$ is \emph{$(r,d)$-close} to a vertex 
$v$ if $\dist(P',v)\leq d$. The \emph{highway dimension} of a graph $G$ is the 
smallest integer $k$ such that for all $r>0$ and $v\in V$, there is a hitting 
set of size at most $k$ for the $r$-significant paths that are $(r,2r)$-close to 
$v$.
\end{dfn}

The following lemma from~\cite{abraham2010highway2} implies that an embedding 
for a graph of constant highway dimension according to \autoref{dfn:hd1} can 
easily be obtained by applying \autoref{thm:Talwar}.

\begin{lem}[\cite{abraham2010highway2}]\label{lem:hd_vs_dd}
A graph that has highway dimension $k$ according to \autoref{dfn:hd1} has 
doubling dimension at most $\log_2(k+1)$. %
\end{lem}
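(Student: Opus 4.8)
The plan is to argue directly from the definition of doubling dimension recalled in the introduction: it suffices to show that for every radius $r>0$ and every centre $v\in V$ the ball $B_{2r}(v)$ can be covered by at most $k+1=2^{\log_2(k+1)}$ balls of radius $r$. Fix such an $r$ and $v$, and apply \autoref{dfn:hd1} to obtain a hitting set $H\subseteq V$ with $|H|\le k$ that meets every $r$-significant path which is $(r,2r)$-close to $v$. The claim I would then prove is
\[ B_{2r}(v)\subseteq B_r(v)\cup\bigcup_{h\in H}B_r(h), \]
which exhibits $B_{2r}(v)$ as a union of at most $k+1$ radius-$r$ balls and hence yields the stated bound on the doubling dimension.

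To establish the inclusion, take an arbitrary $u\in B_{2r}(v)$. If $\dist(u,v)\le r$ then $u\in B_r(v)$ and we are done, so assume $\dist(u,v)>r$ and fix a shortest $u$--$v$ path $P=(u=v_1,v_2,\dots,v_\ell=v)$. Let $j$ be the largest index with $\dist(u,v_j)\le r$; since $\dist(u,v_1)=0$ and $\dist(u,v_\ell)>r$ we have $1\le j<\ell$, so $v_{j+1}$ exists on $P$ and, by maximality of $j$, $\dist(u,v_{j+1})>r$. Consider the prefix $Q=(v_1,\dots,v_j)$, which is itself a shortest path. The path $Q'=(v_1,\dots,v_{j+1})$ is a shortest path of length $\dist(u,v_{j+1})>r$ and is obtained from $Q$ by appending a single vertex to one end, so $Q'$ is an $r$-witness path for $Q$ and $Q$ is $r$-significant. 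Moreover $\dist(Q',v)\le\dist(v_{j+1},v)=\dist(u,v)-\dist(u,v_{j+1})\le 2r$, since $v_{j+1}$ lies on a shortest $u$--$v$ path; hence $Q$ is $(r,2r)$-close to $v$. Therefore $H$ contains a vertex $h=v_i$ of $Q$, necessarily with $i\le j$, and by the subpath property of shortest paths (and nonnegativity of edge weights) $\dist(u,h)=\dist(u,v_i)\le\dist(u,v_j)\le r$, i.e.\ $u\in B_r(h)$. This proves the inclusion and hence the lemma.

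The step that requires the most care --- and is really the whole point of the argument --- is the choice of the prefix $Q$: one must truncate the shortest path at exactly the last vertex still within distance $r$ of $u$, because using the full path $P$ could only guarantee a hitting vertex somewhere on $P$, possibly near $v$ rather than near $u$; and then one must invoke precisely the ``add one vertex to an end'' clause of the $r$-witness definition to certify that this (possibly very short) prefix is nevertheless $r$-significant. Beyond that, a couple of minor points should be checked: the degenerate case $j=1$, where $Q=(u)$ is a single vertex with witness $(u,v_2)$, still forces $u\in H$ and causes no trouble; the extra ball $B_r(v)$ is genuinely needed, since $H$ may be empty when no $r$-significant $(r,2r)$-close path exists; and the argument is uniform in $r$ and $v$, as required by the definition of doubling dimension.
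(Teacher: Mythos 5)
Your proof is correct. Note that the paper does not prove this lemma at all---it is quoted directly from \cite{abraham2010highway2}---and your argument (truncating a shortest $u$--$v$ path at the last vertex within distance $r$ of $u$, certifying that this prefix is $r$-significant and $(r,2r)$-close to $v$ via the one-vertex-extension witness clause, and then covering $B_{2r}(v)$ by $B_r(v)$ together with radius-$r$ balls around the at most $k$ hitting-set vertices) is essentially the standard proof from that reference, including the correct handling of the degenerate single-vertex prefix.
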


\autoref{lem:hd_vs_dd} is also useful to prove that graphs with constant 
highway dimension according to \autoref{dfn:hd1} also have constant highway 
dimension according to \autoref{dfn:hd}, as we show next.

\begin{lem}
A graph $G$ that has highway dimension $k$ according to \autoref{dfn:hd1} has 
highway dimension $O(k^2)$ according to \autoref{dfn:hd} for violation 
$\lambda=0$.
\end{lem}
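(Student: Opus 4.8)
The plan is to leverage \autoref{lem:hd_vs_dd}: since $G$ has highway dimension $k$ according to \autoref{dfn:hd1}, it has doubling dimension at most $d=\log_2(k+1)$, and this lets us cut a ball $B_{4r}(v)$ down to a bounded number of smaller balls on which \autoref{dfn:hd1} is directly applicable. Fix a scale $r$ and a ball $B_{4r}(v)$; we must produce $O(k^2)$ vertices lying inside $B_{4r}(v)$ that hit every shortest path which lies inside $B_{4r}(v)$ and has length greater than $r$.

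First I would record the easy observation that any such path $P$ is $r$-significant in the sense of \autoref{dfn:hd1}: since $P$ is itself a shortest path of length more than $r$, it serves as its own $r$-witness path $P'=P$. The only obstruction to invoking \autoref{dfn:hd1} directly with center $v$ is its closeness requirement, which hits only $r$-significant paths with $\dist(P',v)\le 2r$, whereas a path inside $B_{4r}(v)$ can be as far as $4r$ from $v$. This mismatch is exactly what will cost us a quadratic factor. To get around it, apply the doubling property: $B_{4r}(v)=B_{2(2r)}(v)$ can be covered by $2^d=k+1$ balls $B_{2r}(u_1),\dots,B_{2r}(u_m)$ with $m\le k+1$ and each $u_i$ a vertex. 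For each index $i$, \autoref{dfn:hd1} applied with center $u_i$ at scale $r$ yields a set $H_i$ of at most $k$ vertices hitting every $r$-significant path that is $(r,2r)$-close to $u_i$. I would then set $H=\bigl(\bigcup_{i=1}^m H_i\bigr)\cap B_{4r}(v)$, so that $|H|\le mk\le (k+1)k=O(k^2)$.

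It remains to check that $H$ does the job. Let $P$ be a shortest path inside $B_{4r}(v)$ with length more than $r$; as noted, $P$ is $r$-significant with witness $P'=P$. Picking any vertex $p$ of $P$, we have $p\in B_{4r}(v)\subseteq\bigcup_i B_{2r}(u_i)$, so $p\in B_{2r}(u_i)$ for some $i$, whence $\dist(P',u_i)\le\dist(p,u_i)\le 2r$ and $P$ is $(r,2r)$-close to $u_i$. Therefore $H_i$ contains some vertex $h$ of $P$; since $h$ lies on a path contained in $B_{4r}(v)$, we get $h\in B_{4r}(v)$ and hence $h\in H$. Thus $H$ hits $P$, and $H$ is a hitting set of size $O(k^2)$ inside $B_{4r}(v)$ for all shortest paths in $B_{4r}(v)$ of length more than $r$. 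Since $r$ and $v$ were arbitrary, $G$ has highway dimension $O(k^2)$ according to \autoref{dfn:hd} with violation $\lambda=0$ (i.e.\ $c=4$).

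The one genuine idea here is recognizing that the single-center closeness constraint in \autoref{dfn:hd1} must be defeated by a covering argument, and that the covering number — and therefore the squaring — is supplied for free by the doubling-dimension bound of \autoref{lem:hd_vs_dd}; the rest is routine bookkeeping (in particular, one does not even need to worry about whether the $H_i$ lie in $B_{4r}(v)$, since any hub that actually hits a path inside $B_{4r}(v)$ is automatically in $B_{4r}(v)$, which is why intersecting with $B_{4r}(v)$ is harmless).
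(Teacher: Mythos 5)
Your proposal is correct and follows essentially the same route as the paper: apply \autoref{lem:hd_vs_dd} to cover $B_{4r}(v)$ by at most $k+1$ balls of radius $2r$, observe that each long shortest path in $B_{4r}(v)$ is its own $r$-witness and is $(r,2r)$-close to one of the covering centers, and take the union of the $k+1$ hitting sets of size $k$ given by \autoref{dfn:hd1}. Your extra remark that the relevant hubs automatically lie inside $B_{4r}(v)$ is a small but valid tidying-up of a point the paper leaves implicit.
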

\begin{proof}
Consider any ball $B$ of radius $4r$ around a vertex $v$ of $G$. We need to 
show that there is a hitting set of size $O(k^2)$ for all shortest paths of 
length more than $r$ entirely contained in $B$. Since the doubling dimension of 
$G$ is at most $\log_2(k+1)$ by \autoref{lem:hd_vs_dd}, there are at most $k+1$ 
balls of radius $2r$ that cover all vertices in $B$. In particular, any shortest 
path of length more than $r$ that is contained in $B$ also intersects some of 
the $k+1$ balls of radius $2r$. That is, each such shortest path has a vertex 
that is at distance at most $2r$ to some center vertex of one of the $k+1$ 
balls. 
Each of these balls has a hitting set of size at most $k$ for the 
$r$-significant paths that are $(r,2r)$-close to its respective center vertex. 
Since any shortest path of length more than $r$ is its own $r$-witness, the 
union of all these hitting sets intersects all the shortest paths of length more 
than $r$ in $B$. Hence there is a hub set of size $k(k+1)$ that hits all 
necessary shortest paths in $B$.
\end{proof}

\begin{wrapfigure}[17]{L}{0.4\textwidth}
\vspace{10mm}
\centering{\includegraphics[width=0.3\textwidth]{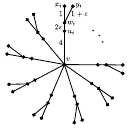}}
\caption{\label{fig:example} An example, which has highway dimension $2$ 
according to \autoref{dfn:hd2}, and for which \autoref{lem:hub_bound} is not 
true due to $B_4(v)$ and vertices $w_i$.}
\end{wrapfigure}

We now turn to the more general definition of highway dimension given 
in~\cite{abraham2011vc}. Here the idea is that the hubs need only hit shortest 
paths that pass through a ball of radius $2r$, instead of shortest paths that 
are contained in a ball of radius $4r$.

\begin{dfn}[\cite{abraham2011vc}]\label{dfn:hd2}
The \emph{highway dimension} of a graph $G$ is the smallest integer~$k$ such 
that for every scale $r>0$, and every ball $B_{2r}(v)$ of radius $2r$, there are 
at most $k$ vertices of $V$ hitting all shortest paths of length in $(r,2r]$ and 
intersecting $B_{2r}(v)$.
\end{dfn}

It is easy to see that \autoref{dfn:hd2} is a generalization of 
\autoref{dfn:hd} for violation $\lambda=0$, since any path of length at most 
$2r$ that intersects a ball $B_{2r}(v)$ is also entirely contained in the 
ball~$B_{4r}(v)$. Interestingly however, we do not know how to generalize our 
embedding results to this more general definition. In particular, we can show 
that \autoref{lem:hub_bound} does not hold for graphs of constant highway 
dimension according to \autoref{dfn:hd2}, as the next lemma implies. Hence an 
alternative method to the one developed in this paper would be needed to find 
an embedding of low distortion.

\begin{lem}
For any integer $l$ there exists a graph with highway dimension $k=2$ according 
to \autoref{dfn:hd2}, and the following properties. There is a scale $r>0$ for 
which there is a ball $B$ of radius $2r$, such that a minimal locally $2$-sparse 
shortest path cover contains $l+1$ hubs, each of which is at distance at most 
$2r$ from some vertex in $B$.
\end{lem}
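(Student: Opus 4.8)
The plan is to build a small explicit graph $G$ consisting of a ``hub'' vertex $v$ joined to $l+1$ pendant ``petals'', plus one extra edge that pins the highway dimension at exactly $2$. Fix the scale $r=1$ (so $c=4$ and $cr/2=2$). Let $G$ have vertices $v$, $h_0,\dots,h_l$, $h_0',\dots,h_l'$; edges $\{v,h_i\}$ of length $3$ for $i=0,\dots,l$; edges $\{h_i,h_i'\}$ of length $3/2$ for $i=0,\dots,l$; and one further edge $\{h_0,h_1\}$ of length $3$. A direct computation shows every shortest-path length lies in $\{3/2,\,3,\,9/2,\,6,\,15/2,\,9\}$, so the only shortest paths of length in $(1,2]=(r,cr/2]$ are the $l+1$ petal edges $\{h_i,h_i'\}$. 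First I would check that $\spc(1):=\{h_0,\dots,h_l\}$ is an inclusion-wise minimal shortest path cover for scale $1$ in the sense of \autoref{dfn:spc}: it hits every petal edge (through $h_i$), and deleting any $h_i$ leaves $\{h_i,h_i'\}$ uncovered since no other vertex of $\spc(1)$ lies on it. It is locally $2$-sparse because the only pair of hubs at distance $\le 4$ is $\{h_0,h_1\}$ (at distance $3$), every other pair being at distance $6$; so no ball of radius $2=cr/2$ can contain three hubs. Finally, taking $B:=B_2(v)$, the point of $B$ nearest $h_i$ is the one at distance $2$ from $v$ along the edge $\{v,h_i\}$, whence $\dist(h_i,B)=3-2=1\le 2=2r$; so all $l+1$ hubs of $\spc(1)$ lie within distance $2r$ of $B$. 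Since $l$ is arbitrary, this contradicts the bound $3sk$ that \autoref{lem:hub_bound} would give with $s=k=2$, which is precisely the failure the lemma advertises. (If one prefers the literal reading ``distance $\le 2r$ from a vertex of $B$'', note the above is exactly the point-set quantity $\dist(h,B_{cr/2}(v))\le cr/2$ appearing in the hypothesis of \autoref{lem:hub_bound}.)

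The remaining and main task is to verify that $G$ has highway dimension exactly $2$ under \autoref{dfn:hd2}. The structural fact driving this is that $v$ is a cut vertex whose removal splits $G$ into the ``$\{0,1\}$-block'' $\{h_0,h_1,h_0',h_1'\}$ and the $l-1$ two-vertex blocks $\{h_i,h_i'\}$ with $i\ge 2$; hence every shortest path either contains $v$ or lies inside one block, and inside the $\{0,1\}$-block every shortest path other than the single edge $\{h_1,h_1'\}$ contains $h_0$. Because all shortest-path lengths come from the finite set above, it suffices to run through the finitely many scale-intervals $r'$ on which $(r',2r']$ meets that set and, for an arbitrary ball $B_{2r'}(w)$, bound a hitting set for the shortest paths of length in $(r',2r']$ meeting it. For every scale except $r'\in[3/4,3/2)$ the only $v$-avoiding candidates of length in $(r',2r']$ are the block paths of length $3$, $9/2$ or $6$, all of which contain $h_0$; hence $\{v,h_0\}$ hits everything and two vertices suffice. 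For $r'\in[3/4,3/2)$ the relevant paths are exactly the $l+1$ petal edges (no $v$-containing path is that short), and one argues that a ball of radius $2r'<3$ meets at most two of them: a third would give three vertices, one from each of three distinct petals, pairwise within distance $4r'<6$, contradicting that any two vertices from distinct petals other than the pair $\{0,1\}$ are at distance $\ge 6$. So again two vertices suffice, giving highway dimension $\le 2$. For the matching lower bound, at scale $r'=3/4$ the ball of radius $3/2$ centred at the midpoint of the edge $\{h_0,h_1\}$ meets both $\{h_0,h_0'\}$ and $\{h_1,h_1'\}$, which share no vertex, so two hubs are needed; hence the highway dimension equals $2$.

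I expect the only real obstacle to be the exhaustive scale analysis in the second paragraph: showing that $\{v,h_0\}$ (or fewer) always works outside the window $[3/4,3/2)$, and that inside it no ball ever meets three petal edges. The bookkeeping is routine given the explicit distance list and the cut-vertex structure, but it needs a little care at the interval endpoints. Everything else — minimality and local $2$-sparsity of $\spc(1)$, and the estimate $\dist(h_i,B_2(v))=1$ — is immediate from the distances.
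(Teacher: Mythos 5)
There is a genuine gap, and it sits in the one place you dismissed as ``immediate'': the claim that every hub is within $2r$ of some vertex of $B$. In this paper a ball is a set of \emph{vertices}, $B_r(v)=\{u\in V\mid \dist(u,v)\le r\}$, and \autoref{lem:hub_bound} measures $\dist(h,B_{cr/2}(v))$ as a distance to that vertex set. In your graph $B:=B_2(v)=\{v\}$, because the edges $\{v,h_i\}$ of length $3$ have no internal vertices; the ``point of $B$ at distance $2$ from $v$ along the edge $\{v,h_i\}$'' does not exist, so $\dist(h_i,B)=3>2=2r$ for every $i$, and your construction simply does not satisfy the conclusion of the lemma. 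Nor can you rescue it by changing the scale within this graph: at scale $r'=3/2$ the only minimal covers with $l+1$ hubs are $\{h_0,\dots,h_l\}$-type sets, and these are not locally $2$-sparse since $B_3(v)$ contains all $l+1$ of them, while the small minimal covers such as $\{v,h_0\}$ have only two hubs. The same continuum-versus-graph confusion appears in your lower bound for $k=2$: the ball ``centred at the midpoint of the edge $\{h_0,h_1\}$'' is not admissible, since balls in \autoref{dfn:hd2} are centred at vertices. That step, at least, is repairable inside your graph (e.g.\ at scale $3/2$ the ball $B_3(h_0)$ contains $v$ and $h_0$, hence meets the two vertex-disjoint length-$3$ paths $vh_2$ and $h_0h_1$, forcing two hubs); the first defect is not.

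The missing idea is exactly the one the paper's construction is built around: you need intermediate degree-$2$ vertices sitting \emph{inside} (or on the boundary of) the ball $B_{2r}(v)$ that are very close to hubs lying just outside it. The paper takes a spider with spokes $v$--$u_i$ of length $4$, then $u_i$--$w_i$ of length $2\eps$, with two leaves $x_i,y_i$ hanging off $w_i$; at scale $r=2$ the minimal cover is $\{v,w_1,\dots,w_l\}$, the ball $B_4(v)$ contains the $u_i$, and each hub $w_i$ is at distance $2\eps\le 2r$ from $u_i\in B$, while local $2$-sparsity and highway dimension $2$ under \autoref{dfn:hd2} are preserved because the $w_i$ themselves stay out of the ball and out of each other's radius-$2r$ balls. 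If you subdivide your edges $\{v,h_i\}$ with a vertex at distance $2$ from $v$ you are essentially rebuilding this gadget, and you would then have to redo the minimality, sparsity and highway-dimension checks for the enlarged set of shortest paths; as written, your proof does not establish the statement.
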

\begin{proof}
Given $l$ we construct a star-like graph $G$ as follows (see 
\autoref{fig:example}). It has a center vertex $v$, and for each 
$i\in\{1,\ldots, l\}$ it has four vertices $u_i, w_i, x_i, y_i$. There is an 
edge from $v$ to $u_i$ of length~$4$, from $u_i$ to $w_i$ of length~$2\eps$, 
from $w_i$ to $x_i$ of length $1$, and from $w_i$ to $y_i$ of length $1+\eps$, 
for some suitably small~$\eps>0$.

We first prove that $G$ has highway dimension $k=2$ according to 
\autoref{dfn:hd2}. Consider a ball $B_{2r}(v)$ centered at $v$. If $r<2$ then 
this ball contains only $v$ and there is nothing to show. If $r\in [2,2+\eps)$ 
then $B_{2r}(v)=\{v,u_1,\ldots,u_l\}$, and it suffices to choose $v$ as the only 
hub for this ball: any shortest path intersecting the ball and not containing 
the hub $v$ has length at most $1+3\eps$ (e.g.~$u_1w_1y_1$), which is shorter 
than $r$. If $r\geq 2+\eps$ then $w_i\in B_{2r}(v)$ for all $i$ and the paths 
$x_iw_iy_i$ intersect the ball. It still suffices to choose $v$ as the only hub 
since a shortest path that does not contain $v$ has length at most $2+\eps$ 
(e.g.~$x_1w_1y_1$), and only paths of length more than $r$ need to be hit by the 
hubs. Now consider a ball $B_{2r}(z_i)$ for some $z_i\in\{u_i,w_i,x_i,y_i\}$. If 
$r<4$ then $B_{2r}(z_i)\subseteq\{v,u_i,w_i,x_i,y_i\}$, and it suffices to 
choose $\{v,w_i\}$ as the hub set since any path intersecting the ball passes 
through one of these vertices (if, for instance, $z_i=u_i$ and $r=2$ then this 
choice is also necessary due to $x_iw_iy_i$ and $vu_i$). If $r\geq 4$ then it 
suffices to choose only $v$ as a hub, since any shortest path not using $v$ has 
length at most $2+\eps$.

To prove that the claimed shortest path cover exists, consider the scale $r=2$, 
for which $\spc(r)=\{v,w_i\mid 1\leq i\leq l\}$. This shortest path cover is 
minimal due to the $x_iw_iy_i$ paths of length $2+\eps>r$, and the $vu_i$ paths 
of length $4=2r$, for each $i$. It is also locally $2$-sparse since the 
$B_{2r}(u_i)$ balls contain the maximum number of two hubs of $\spc(r)$. Now 
consider the ball $B:=B_{2r}(v)=\{v,u_1,\ldots,u_l\}$. Even though it contains 
only the hub $v$, each hub $w_i$ has a vertex $u_i$ in $B$ at distance 
$2\eps\leq 2r$, which proves the claim.
\end{proof}

Note that the graph constructed in the above proof does not have constant 
highway dimension according to \autoref{dfn:hd} with violation $\lambda=0$. 
This is because at scale $r=2$, the ball centered at $v$ with radius $4r$ 
contains the $x_iw_iy_i$ paths, each of which needs to be covered by a hub.

Next we observe that introducing a violation to the original definition 
of~\cite{abraham2010highway} is not an entirely innocuous change. In particular 
there are graphs for which the highway dimension grows significantly when 
changing the violation only slightly, as the following lemma shows.

\begin{lem}
\label{lem:dimexpansion}
For any constant $c>4$ there is a graph that, according to \autoref{dfn:hd}, 
has highway dimension $1$ with respect to~$c$ and highway dimension $\Omega(n)$ 
with respect to any $c' > c$.
\end{lem}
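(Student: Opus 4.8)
The plan is to use, for each $n$, a ``spider with cherries'': a center $v$, and for each $i\in\{1,\ldots,n\}$ a path $v-x_i-y_i$ in which the edge $\{v,x_i\}$ has length $c-1$ and the edge $\{x_i,y_i\}$ has length $1$. This graph is a tree on $2n+1$ vertices, so shortest paths are unique; the relevant distances are $\dist(v,x_i)=c-1$, $\dist(v,y_i)=c$, and $\dist(x_i,y_i)=1$, while two vertices in distinct cherries lie at distance $2(c-1)$ if both are of $x$-type, $2c-1$ if one of each type, and $2c$ if both of $y$-type. Two structural facts drive the argument: the only shortest paths avoiding $v$ are the pendant edges $(x_i,y_i)$ (every other shortest path passes through $v$); and every shortest path whose vertex set lies in a single cherry $\{v,x_i,y_i\}$ passes through $x_i$.

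For the bound with respect to an arbitrary $c'>c$, I would use the scale $r^*=c/c'$, which is strictly less than $1$. Then $c'r^*=c=\dist(v,y_i)$ for every $i$, so $B_{c'r^*}(v)$ is the entire vertex set, and it contains the $n$ pairwise vertex-disjoint edges $(x_i,y_i)$, each of length $1>r^*$. Hence any set of vertices hitting all shortest paths of length more than $r^*$ lying inside this ball has size at least $n$, so the highway dimension with respect to $c'$ is $\Omega(n)$.

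The bulk of the work, and the step I expect to be the real obstacle, is to show that the highway dimension with respect to $c$ equals $1$: for every scale $r$ and every vertex $w$, all shortest paths of length more than $r$ lying inside $B_{cr}(w)$ share a common vertex. If $r\ge 1$ the pendant edges have length $1\le r$ and hence need not be hit, so every shortest path of length more than $r$ passes through $v$; then $v$ hits them all when $v\in B_{cr}(w)$, and otherwise $B_{cr}(w)$ is contained in one component of $G-v$ (a single pendant edge of length $1$), which carries no such path. If $r<1$, I would first prove the crux that $B_{cr}(w)$ cannot contain both a leaf $y_i$ and a vertex of a different cherry: if $w\notin\{x_i,y_i\}$ then $\dist(w,y_i)\ge c$ already, forcing $cr\ge c$ and so $r\ge 1$; and if $w\in\{x_i,y_i\}$ then any vertex of another cherry lies at distance at least $2(c-1)\ge c$ from $w$, again forcing $r\ge 1$---a contradiction either way. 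Granting the crux, a ball $B_{cr}(w)$ with $r<1$ either contains some leaf $y_i$, and then lies inside the single cherry $\{v,x_i,y_i\}$, so every shortest path inside it passes through $x_i$; or it contains no leaf, and then lies inside $\{v\}\cup\{x_1,\ldots,x_n\}$, so every shortest path inside it passes through $v$. In either case the common vertex lies on every relevant path, and since each such path lies inside the ball, so does that vertex. Thus one hub always suffices, and as some ball plainly requires one, the highway dimension with respect to $c$ is exactly $1$, which completes the proof.
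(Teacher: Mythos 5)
Your proposal is correct and follows essentially the same route as the paper's proof: the identical spider construction (legs of length $c-1$ with pendant unit edges), the same scale $r=c/c'$ making the whole graph one ball whose $n$ disjoint pendant edges force $\Omega(n)$ hubs, and the same case split $r\geq 1$ versus $r<1$ showing a single hub ($v$ or the appropriate $x_i$) suffices for every ball at scale $c$. Your write-up is in fact slightly more careful than the paper's, e.g.\ in checking that the chosen hub actually lies inside the ball whenever a path needs hitting.
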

\begin{proof}
We construct a spider graph as follows.
Let $l \gg 1$ be a parameter and $G=(V,E)$ where $V=\{u,v_1,w_1, \ldots, 
v_l,w_l\}$, and $E=\{(u,v_i), (v_i,w_i) | 1\le i \le l\}$, and for all $i$ the 
lengths of $(u,v_i)$ and $(v_i,w_i)$ are $c-1$ and $1$, respectively. If $r \ge 
1$ then the hub $u$ covers all paths longer than $r$ in any ball of radius 
$cr$. Consider a ball $B_{cr}(t)$ for any vertex $t$ where $r < 1$. If $t=u$, 
the hub $u$ covers all paths in $B_{cr}(t)$ of length $(r,cr]$. If $t$ is $v_i$ 
or $w_i$ for some $i$ then $v_i$ covers all requisite paths in $B_{cr}(t)$ 
because $B_{cr}(t)$ cannot contain $v_j$ or $w_j$ for $j \neq i$. Therefore the 
highway dimension of $G$ with respect to $c$ is $1$.

On the other hand, for any $c' > c$, let $r = c/c'$ and consider the ball 
$B_{c'r}(u)$, which has radius $c'\cdot c/c' = c$ and covers the entire graph. 
Any set of hubs that covers paths of length more than $c/c' < 1$ must cover all 
edges $(v_i,w_i)$ and must therefore include $v_i$ or $w_i$ for every $i$. 
Hence the highway dimension with respect to $c'$ is at least $l = (n-1)/2$. 
\end{proof}

Finally, we also show that computing the highway dimension according to 
\autoref{dfn:hd} is NP-hard. It remains open whether this is also true when 
considering the more restrictive highway dimension definition 
from~\cite{abraham2010highway2}.

\begin{thm}
Computing the highway dimension according to \autoref{dfn:hd} is NP-hard, for 
any violation $\lambda\geq 0$, even on graphs with unit edge lengths.
\end{thm}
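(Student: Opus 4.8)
The plan is to reduce from \textsc{Vertex Cover}, exploiting the fact that prepending a universal vertex changes the vertex cover number by exactly one while making the radius tiny. Given an instance $G_0=(V_0,E_0)$ of \textsc{Vertex Cover}, let $G=G_0+u$ be the graph obtained by adding a new vertex $u$ adjacent to every vertex of $V_0$, with all edges of unit length; note $G$ is built in polynomial time, and $\tau(G)=1+\tau(G_0)$, where $\tau(\cdot)$ denotes the minimum vertex cover number (if $u$ is left out of a cover, all of $V_0$ must be taken, which is never better than taking $u$ plus a minimum cover of $G_0$). I claim that for every fixed admissible constant $c\geq 4$ in \autoref{dfn:hd}, i.e.\ for every violation $\lambda=c-4\geq 0$, the highway dimension of $G$ equals $\tau(G)$. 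Since computing $\tau(G_0)$ is NP-hard, this yields the theorem (indeed even as a Karp reduction for the decision version, via $\tau(G_0)\leq t \iff$ the highway dimension of $G$ is $\leq t+1$).

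For the upper bound I would fix an arbitrary scale $r>0$ and ball $B=B_{cr}(v)$ and consider a minimum vertex cover $C_B$ of the induced subgraph $G[B]$. Any shortest path $P$ that lies inside $B$ and has length more than $r$ has positive, hence integer, length at least one, so it contains an edge of $G[B]$; then $C_B$ contains an endpoint of that edge, which is a vertex of $P$, so $C_B$ hits $P$. Moreover restricting a minimum vertex cover of $G$ to $B$ gives a vertex cover of $G[B]$, so $|C_B|\leq\tau(G)$. Thus no scale and ball ever require more than $\tau(G)$ hitting vertices, and the highway dimension of $G$ is at most $\tau(G)$.

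For the matching lower bound I would use that $u$ is universal, so its eccentricity is $1$ and the ball $B_{cr}(u)$ equals all of $V(G)$ as soon as $cr\geq 1$; since $c\geq 4$ we may pick a scale $r$ with $1/c\leq r<1$, for which $B_{cr}(u)=V(G)$ and, because edge lengths are unit and $r<1$, ``length more than $r$'' coincides with ``length at least $1$''. In particular every edge of $G$, viewed as a one-edge shortest path between its endpoints, must be hit, so any hitting set for this ball at this scale is a vertex cover of $G$ and has size at least $\tau(G)$; combined with the previous paragraph, the highway dimension of $G$ is exactly $\tau(G)=1+\tau(G_0)$. The only step I would be careful about is the upper bound—ruling out that some ``uninteresting'' scale or ball overshoots $\tau(G)$—and this is precisely what the induced-subgraph vertex cover observation settles; it crucially relies on unit edge lengths (so short paths always contain a full edge) and on the fact that a path of length $>r>0$ has length $\geq 1$ no matter how small $r$ is. The argument is otherwise insensitive to $c$ (we only need $1/c<1$), which is why it works for every violation $\lambda\geq 0$.
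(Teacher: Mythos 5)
Your proposal is correct and is essentially the paper's own proof: the same reduction from Vertex Cover by adding a universal vertex with unit edge lengths, using a scale around $1/c$ so that one ball contains the whole graph and every edge must be hit, giving highway dimension exactly one more than the vertex cover number. Your argument is merely a bit more explicit than the paper's on the upper bound, verifying via the restriction of a minimum vertex cover that no other scale or ball ever needs more hubs.
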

\begin{proof}
The reduction is from the NP-hard Vertex Cover problem~\cite{GareyJohnson}: 
given a graph $G=(V,E)$ we need to compute a minimum sized set of vertices 
$C\subseteq V$ hitting each edge, i.e.\ $v\in C$ or $u\in C$ for each $vu\in 
E$. For the reduction we introduce an additional vertex $w$ and connect it with 
every vertex in $V$. Then we give each edge of the resulting graph $G'$ unit 
length. 

A hub set hitting each shortest path of length $1$ is exactly a vertex cover for 
a graph with unit edge lengths. Note that for scale $r=1/c$, the ball 
$B_{cr}(w)$ contains all vertices of the graph $G'$. Hence removing $w$ from the 
hub set in $B_{cr}(w)$, which hits all shortest paths of length more than~$r$, 
yields a vertex cover for $G$, as $c\geq 4$. Conversely, adding $w$ to a vertex 
cover for $G$ is a hub set in $B_{cr}(w)$ hitting all necessary shortest paths. 
Thus the highway dimension according to \autoref{dfn:hd} is $k+1$ in the graph 
$G'$ if and only if the smallest vertex cover in $G$ has size $k$.
\end{proof}

\ignore{
\begin{proof}
The reduction is from 3SAT and is similar in spirit of the 
proof in~\cite{GareyJohnson} showing that the Vertex Cover problem is NP-hard. 
Intuitively, the similarity of the reductions stems from the fact that in a 
graph with unit edge lengths both a vertex cover and a hub set need to hit the 
shortest paths between adjacent vertices, namely the edges. The only difference 
is that we need to ensure that the whole reduced graph fits into one ball of 
radius $cr$ for some scale $r$.

As in~\cite{GareyJohnson}, we introduce a gadget for every variable and every 
clause of a given 3SAT formula and then connect them. For a variable $x$ we 
construct a gadget $G_x$ containing one edge $\{l,\bar l\}$ between the 
literals 
$l$ and $\bar l$ of the variable $x$. For any clause $C$ the gadget $G_C$ is a 
triangle on three vertices $v^1_C,v^2_C,v^3_C$. If a clause $C$ contains the 
literals $l_1,l_2,l_3$ we add the edges $\{v^i_C,l_i\}$ for every 
$i\in\{1,2,3\}$ between the gadgets. That is, each vertex of a clause gadget is 
adjacent to one of the literals of the clause. Additionally we introduce a 
center vertex $u$ and an edge $\{u,v^i_C\}$ for every clause $C$ and 
$i\in\{1,2,3\}$. All edges have length~$1$.

We prove that, given a 3SAT formula with $a$ clauses and $b$ literals, the 
highway dimension of the constructed graph is $2a+b+1$ if and only if the 3SAT 
formula is satisfiable. Note that a ball centered at $u$ with radius $3$ 
contains the whole graph. Hence considering the scale $r=3/c<1$, the ball 
$B_{cr}(u)$ contains all vertices, and every edge of the graph is a shortest 
path that needs to be hit by a hub in $B_{cr}(u)$, according to 
\autoref{dfn:hd}. Observe that the same hub set is also sufficient for any 
other scale and ball. Hence it is enough to show that a hub set of size 
$2a+b+1$ hitting all edges exists if and only if the 3SAT formula is 
satisfiable.

We first prove the ``only if'' direction.
Note that any hub set hitting all edges of the graph contains at least one 
vertex of every gadget~$G_x$, and at least two vertices of every gadget $G_C$. 
Hence such a hub set of size $2a+b+1$ has exactly these numbers of hubs per 
gadget, with the exception of one additional hub somewhere in the graph. 
However 
each edge adjacent to the center vertex $u$ also needs to be hit. Hence the 
additional hub needs to be $u$ (assuming w.l.o.g.\ that there are at least two 
clauses). The hubs on the variable gadgets $G_x$ define a satisfying 
assignment: 
every clause~$C$ has one vertex which is \emph{not} a hub, and this vertex 
$v^i_C$ is incident to an edge $\{v^i_C,l\}$ that needs to be hit by the hub 
$l$. Hence the literal $l$ makes the clause $C$ true.

For the other direction, consider a truth assignment for the 3SAT formula. We 
add $u$ and each literal that is part of the truth assignment to the hub set. 
For each clause gadget $G_C$ let $v^i_C$ be one of the vertices adjacent to a 
literal $l$ that is set to true, and which we know exists. We add the two 
vertices of $G_C$ different from $v^i_C$ to the hub set. The total number of 
hubs is $2a+b+1$, and every edge adjacent to~$u$ or connecting vertices within 
any gadget are clearly hit by these hubs. Any edge that connects a gadget $G_C$ 
with $G_x$ is also hit, since the only vertex of $G_C$ that is not part of the 
hub set is connected to a literal in $G_x$ that is part of the truth assignment.
\end{proof}

}
\section{Conclusions and open problems}

Our main result shows that we can find embeddings of low highway dimension 
graphs into a distribution of bounded treewidth graphs, with arbitrarily small 
expected distortion. Since the resulting treewidth is polylogarithmic in the 
aspect ratio, this implies QPTASs for several optimization problems that 
naturally arise in transportation networks. Hence, even if the network includes 
links resulting from means of transportation such as airplanes, trains, or 
buses, our results indicate that these problems are computationally easier than 
in the general case. It remains open however to determine the complexity of the 
considered problems on graphs with constant highway dimension. In particular we 
do not even know whether the problems are NP-hard for these graphs. Also, it 
remains open whether we really need the more restrictive highway dimension 
definition as given in \autoref{dfn:hd}, or whether the more general one in 
\autoref{dfn:hd2} suffices to compute an embedding.

As argued in the introduction, even a complete graph can have highway 
dimension~$1$, and therefore low highway dimension graphs do not exclude 
minors. However it is not clear whether the treewidth of such a graph can be 
bounded in terms of the aspect ratio $\alpha$. Even though the hardness results 
in~\cite{feldmann15} for the $p$-Center problem on graphs with highway dimension 
$k$ exclude treewidth bounds of the form $O(k\log\alpha)$, it is possible that 
the treewidth of such a graph is of the form $O(\log^k\alpha)$. It seems 
notoriously difficult however to either prove or disprove this.

Another interesting open problem is the possibility of finding an embedding into 
a class of graphs with a treewidth that is polylogarithmic in~$1/\eps$ but not 
the aspect ratio. This would imply PTASs for the considered optimization 
problems. One limiting factor however is that we use the embedding given by 
\citet{talwar2004bypassing} for low doubling dimension graphs in our 
construction, for which it is unclear how to obtain embeddings with treewidths 
independent of the aspect ratio. Even though \citet{bartal2012traveling} improve 
on the result by \citet{talwar2004bypassing} by giving a PTAS for the Travelling 
Salesman problem, the latter result does not give an embedding.

One alternative path to obtaining approximation algorithms is to find so called 
\emph{padded decompositions}~\cite{Abraham:2014:CRT:2591796.2591849}. Whether 
these exist for low highway dimension graphs is not known.
It may also be possible to find reductions from low highway dimension graphs 
to graphs of bounded treewidth that distort the optimal solutions of the 
instances by arbitrarily small factors. That is, the reduction would produce a 
graph on a different vertex set than the input graph, meaning that it is not an 
embedding. As for planar 
graphs~\cite{bateni2011prize,borradaile2007Steiner,klein2008TSP,ageev-facility},
this would circumvent the issue that better embeddings might not exist (as 
shown for the planar case~\cite{chakrabarti2008local,carroll2004lower}). A last 
option obviously 
would be to find algorithms that do not use algorithms for bounded treewidth 
graphs as a back-end, and instead solve the problems on the graphs directly, as 
for instance was done for Euclidean 
metrics~\cite{arora2003survey,arora1998TSP,arora1998k-median} and, in the case 
of the Travelling Salesman problem, also for low doubling 
metrics~\cite{bartal2012traveling}.

\bibliographystyle{plainnat}
\bibliography{papers}

\end{document}